\newcommand{\vecuse}[1]{{\boldsymbol{#1}}}
\newcommand{\matuse}[1]{{{\boldsymbol{#1}}}} \newcommand{\matusesigma}[1]{{{\boldsymbol{#1}}}}
\def\I {\mathbbm i}
\title{Connecting reflective asymmetries in multivariate spatial and spatio-temporal covariances}
\author{Drew Yarger}
\newcommand{\mb}[1]{\boldsymbol{#1}}
\newtheorem{remark}{Remark}
\newtheorem{example}{Example}
\newtheorem{theorem}{Theorem}
\newtheorem{lemma}{Lemma}
\newtheorem{proposition}{Proposition}
\newtheorem{definition}{Definition}
\begin{document}

\title{Connecting reflective asymmetries in multivariate spatial \\ and spatio-temporal covariances}

\author{Drew Yarger\footnote{Department of Statistics, Purdue University, 150 N. University St, West Lafayette, IN 47907, \url{anyarger@purdue.edu}}}







\maketitle

\begin{abstract}
    In the analysis of multivariate spatial and univariate spatio-temporal data, it is commonly recognized that asymmetric dependence may exist, which can be addressed using an asymmetric (matrix or space-time, respectively) covariance function within a Gaussian process framework. 
This paper introduces a new paradigm for constructing asymmetric space-time covariances, which we refer to as ``reflective asymmetric,'' by leveraging recently-introduced models for multivariate spatial data. 
We first provide new results for reflective asymmetric multivariate spatial models that extends their applicability.
We then propose their asymmetric space-time extension, which come from a substantially different perspective than Lagrangian asymmetric space-time covariances. 
There are fewer parameters in the new models, one controls both the spatial and temporal marginal covariances, and the standard separable model is a special case. 
In simulation studies and analysis of the frequently-studied Irish wind data, these new models also improve model fit and prediction performance, and they can be easier to estimate. 
These features indicate broad applicability for improved analysis in environmental and other space-time data.

\end{abstract}

\textbf{Keywords}: asymmetric dependence, Gaussian processes, multivariate covariances, space-time data, spatial statistics


\section{Introduction}\label{sec:intro}

Accurately representing spatial dependence structures, which may be quantified through a covariance function, has broad applicability in a variety of environmental applications. 
Two particular extended settings are multivariate spatial data \citep{genton15}, where multiple spatially-indexed variables are modeled using a matrix-valued covariance function, and spatio-temporal data, where a variable indexed by both space and time is modeled using a space-time covariance function \citep{chen2021space, porcu202130}. 
In these situations, complex dependencies may arise that do not appear in the univariate spatial setting. 
In particular, there may be asymmetric dependence between variables \citep{li_approach_2011} or in space-time \citep{gneiting2006geostatistical}. 
Such asymmetries are physically realistic. 
For example, the common model for modelling asymmetries in space-time data is the Lagrangian model \citep{porcu202130, salvana2023spatio}, which can be interpreted as a process evolving with directionality through space over time, like advection through the atmosphere or the ocean following the direction of currents.

In this work, we introduce a new perspective we refer to as ``reflective'' asymmetries.
These asymmetries are enabled by asymmetric cross-covariances in the purely spatial multivariate setting recently introduced in \cite{yarger2023multivariate}.
These cross-covariances (describing the covariance between two spatial variables) result from multiplication in the spectral domain of $-\I \textrm{sign}(\langle \vecuse{x}, \tilde{\vecuse{x}}\rangle)$ to the spectral density $f(\vecuse{x})$ of a spatial covariance, where $\vecuse{x} \in \mathbb{R}^d$ is the spatial frequency ($d \in \{1, 2, 3, \dots\}$), $\tilde{\vecuse{x}} \in \mathbb{R}^d$ is a unit-vector directional parameter, $\I$ is the imaginary unit, $\textrm{sign}(z)$ is the sign function, and $\langle \cdot, \cdot\rangle$ is the Euclidean inner product in $\mathbb{R}^d$. 
The resulting term in the cross-covariance is an odd function, reflected across the origin, which motivates the ``reflective'' terminology. 
We first introduce new spatial cross-covariances of this type, which previously had established closed forms only for $d=1$ but are extended here to any dimension $d$ for squared-exponential and Cauchy covariances. 
The main focus of this work, however, is applying such constructions to the univariate space-time setting. 

The new space-time models are defined by introducing a term that multiplies $\textrm{sign}(\langle \vecuse{x}, \tilde{\vecuse{x}}\rangle)\textrm{sign}(\eta)$ onto a spectral density $f( \vecuse{x}, \eta )$ of a symmetric space-time covariance, where $\eta \in \mathbb{R}$ is the temporal frequency. 
In many cases, the new asymmetric parts of the space-time covariance are constructed from asymmetric spatial cross-covariances, connecting the two settings. 
This construction results in a two-term decomposition of the space-time covariance function into a symmetric space-time covariance function and a term that introduces asymmetry. 
These models include some asymmetric versions of Gneiting-type nonseparable space-time covariances \citep{gneiting_nonseparable_2002}, which are generally symmetric but nonseparable. 

These models have some conceptual advantages over the Lagrangian model. 
The nontrivial symmetric and separable space-time model is a special case of the reflective asymmetric model, which is not the case for the Lagrangian model. 
Thus, one may evaluate the presence of asymmetric dependence through a likelihood ratio test. 
Furthermore, one maintains control of both the marginal spatial and temporal covariance functions, while independent control of the temporal covariance function is not available for the Lagrangian model. 
Control of the marginal covariances (and their spectral density) has implications for the asymptotic behavior of parameter estimation and prediction \citep[Section 5.2,][]{faouzi_spacetime_2025}.
The reflective asymmetric model also has fewer parameters than the full Lagrangian model, providing model parsimony, especially for larger spatial dimension $d$.


We also propose approaches for unconditional simulation, nonstationary and anisotropic construction, and computational implementation for large space-time data through Vecchia's approximation \citep{GpGp}. 
In multivariate spatial and univariate space-time simulation studies, we find that using asymmetric models can improve model fit when the true covariance is asymmetric. 
We establish that the new models are substantially different than the Lagrangian model, where neither can perform as well when the other is the true model. 

These models are also evaluated in an analysis of the Irish wind data of \cite{haslett1989space}, which has been frequently used as a benchmark dataset \citep{gneiting2006geostatistical, ma2025asymmetric}. 
We consider five variations of the new asymmetric models with different marginal covariances. 
In general, the asymmetric models outperform the symmetric models in terms of model selection criteria. 
Depending on the covariance function used, the reflective asymmetric covariances can provide improved model fit and faster likelihood estimation than the Lagrangian model. 

These developments are presented in the following.
In Section \ref{sec:background}, we introduce relevant background in more detail. 
In Section \ref{sec:spatial_cc}, we review and introduce new spatial cross-covariance functions with reflective asymmetries. 
In Section \ref{sec:univariate_st}, the new univariate space-time covariances are introduced.
In Section \ref{sec:practical}, we discuss efficient simulation and computation. 
Section \ref{sec:simulation} presents the simulation studies.
Section \ref{sec:data_analysis} presents a comprehensive analysis of the Irish wind data. 
We conclude in Section \ref{sec:discussion}.

\section{Background and literature}\label{sec:background}

The basic univariate spatial setting considers a random field $\{Y(\vecuse{s}), \vecuse{s}\in \mathbb{R}^d\}$ for $d \in \{1, 2, 3, \dots\}$. 
Throughout much of the exposition, we will assume second-order stationarity, where the mean and covariance do not depend on $\vecuse{s}$: $\mathbb{E}\{Y(\vecuse{s})\} = \mu$ and $$\textrm{Cov}\left\{Y(\vecuse{s}), Y(\vecuse{s}+\vecuse{h})\right\} = \mathbb{E}\left\{(Y(\vecuse{s}) - \mu)(Y(\vecuse{s}+\vecuse{h}) - \mu)\right\} = C(\vecuse{h}).$$ Here $\vecuse{h}\in \mathbb{R}^d$ is a lag; in the case $d=1$, scalar operations (for example, $\lVert \vecuse{h}\rVert^2 = h^2$) may be used. 
Our primary interest is the covariance function $C(\vecuse{h})$, which requires $C(\vecuse{0})\geq 0$ and symmetry $C(\vecuse{h}) = C(-\vecuse{h})$, including further conditions for validity. 
We will assume that $\mu = 0$ and similarly for a multivariate mean below.
Throughout, covariance functions depend on parameters $\mb{\theta}$, but we suppress this notation for simplicity.

We will next provide background on multivariate and space-time models.

\subsection{Multivariate spatial models}

Let $\{\mb{Y}(\vecuse{s}) \in \mathbb{R}^p, \vecuse{s}\in \mathbb{R}^d\}$ be a multivariate random field which now has a $p\times p$ matrix-valued covariance function $
    \matusesigma{C}(\vecuse{h})=\mathbb{E}\left\{(\mb{Y}(\vecuse{s}) - \mb{\mu}) (\mb{Y}(\vecuse{s} + \vecuse{h}) - \mb{\mu})^\top \right\}  = \left[C_{jk}(\vecuse{h})\right]_{j,k=1}^p$, and $\mb{\mu} = \mathbb{E}\{\mb{Y}(\vecuse{s})\}$.
Let $\{Y_j(\vecuse{s}), \vecuse{s}\in \mathbb{R}^d\}$ be the $j$-th entry of $\{\mb{Y}(\vecuse{s})\}$. 
The diagonals of $\mb{C}(\vecuse{h})$ are the covariance functions of the processes, while the off-diagonal indicate the dependence between processes $j$ and $k$ at different spatial lags, termed cross-covariances. 
See \cite{genton15} for a review. 

\begin{theorem}[Simplified multivariate Bochner]\normalfont
The matrix-valued stationary covariance $\matusesigma{C}(h)$ is valid (positive-definite) if we have spectral representation \begin{align*}
    \matusesigma{C}(\vecuse{h}) &= \int_{\mathbb{R}^d}{\rm \textrm{exp}}\left(\I \langle \vecuse{h}, \vecuse{x}\rangle\right)\matusesigma{f}(\vecuse{x}) d\vecuse{x} = \left[\int_{\mathbb{R}^d}\textrm{exp}\left(\I \langle \vecuse{h}, \vecuse{x}\rangle\right)f_{jk}(\vecuse{x}) d\vecuse{x} \right]_{j,k=1}^p,
\end{align*}and the $p\times p$ matrix $\matusesigma{f}(\vecuse{x})$ is Hermitian (satisfying $f_{jk}(\vecuse{x}) = \overline{f_{jk}(-\vecuse{x})} = \overline{f_{kj}(\vecuse{x})}$, where $\overline{z}$ represents the conjugate of $z$) and positive-definite for all $\vecuse{x}$. 

\end{theorem}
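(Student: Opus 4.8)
The plan is to verify the defining condition for a matrix-valued covariance to be valid directly from the spectral representation. Recall that $\matusesigma{C}$ is positive-definite precisely when, for every finite collection of locations $\vecuse{s}_1, \dots, \vecuse{s}_n \in \mathbb{R}^d$ and coefficient vectors $\vecuse{a}_1, \dots, \vecuse{a}_n \in \mathbb{C}^p$, the quadratic form
\begin{equation*}
Q := \sum_{i=1}^n \sum_{j=1}^n \vecuse{a}_i^* \, \matusesigma{C}(\vecuse{s}_i - \vecuse{s}_j) \, \vecuse{a}_j
\end{equation*}
is real and nonnegative; this is exactly the statement that the $np \times np$ covariance matrix of the stacked vector $(\mb{Y}(\vecuse{s}_1), \dots, \mb{Y}(\vecuse{s}_n))$ is positive semidefinite, where $\vecuse{a}^*$ denotes the conjugate transpose and taking complex coefficients subsumes the real case. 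So first I would fix such locations and coefficients and aim to show $Q \geq 0$.

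Second, I would substitute the assumed representation $\matusesigma{C}(\vecuse{h}) = \textrm{exp}(\I \langle \vecuse{h}, \vecuse{x}\rangle)$-weighted integral of $\matusesigma{f}$ into $Q$ and interchange the finite double sum with the integral, giving $Q = \int_{\mathbb{R}^d} \big[ \sum_{i,j} \textrm{exp}(\I \langle \vecuse{s}_i - \vecuse{s}_j, \vecuse{x}\rangle) \, \vecuse{a}_i^* \matusesigma{f}(\vecuse{x}) \vecuse{a}_j \big] \, d\vecuse{x}$. The key algebraic step is to factor the exponential as $\textrm{exp}(\I \langle \vecuse{s}_i - \vecuse{s}_j, \vecuse{x}\rangle) = \overline{c_i(\vecuse{x})}\, c_j(\vecuse{x})$ with $c_i(\vecuse{x}) := \textrm{exp}(-\I\langle \vecuse{s}_i, \vecuse{x}\rangle)$, and then to collect terms into the single vector $\vecuse{V}(\vecuse{x}) := \sum_{i=1}^n c_i(\vecuse{x})\, \vecuse{a}_i \in \mathbb{C}^p$. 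Because $\overline{c_i(\vecuse{x})}\,\vecuse{a}_i^* = (c_i(\vecuse{x})\vecuse{a}_i)^*$, the bracketed integrand collapses to the Hermitian quadratic form $\vecuse{V}(\vecuse{x})^* \, \matusesigma{f}(\vecuse{x}) \, \vecuse{V}(\vecuse{x})$.

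Finally, since $\matusesigma{f}(\vecuse{x})$ is Hermitian and positive-definite for every $\vecuse{x}$, the integrand $\vecuse{V}(\vecuse{x})^* \matusesigma{f}(\vecuse{x})\vecuse{V}(\vecuse{x})$ is real and nonnegative pointwise, so its integral $Q$ is real and nonnegative, establishing validity. I expect the only real subtlety, rather than any deep obstacle, to be the justification of the sum--integral interchange; because the sum is finite and the exponential factors have modulus one, this follows from Fubini's theorem under the integrability of $\matusesigma{f}$ that is already implicit in the representation defining $\matusesigma{C}$ as a Lebesgue integral. It is also worth checking at the outset that $\matusesigma{C}(-\vecuse{h}) = \matusesigma{C}(\vecuse{h})^*$ follows from the Hermitian property $f_{jk}(\vecuse{x}) = \overline{f_{kj}(\vecuse{x})}$, which guarantees $Q$ is real so that the inequality $Q \geq 0$ is meaningful.
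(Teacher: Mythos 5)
Your proof is correct. The paper itself does not prove this theorem --- it defers to the cited references (Gneiting and Kleiber 2010; Yarger et al.\ 2023) --- but your argument is precisely the standard sufficiency-direction proof of the multivariate Bochner/Cram\'er theorem that those references rely on: substitute the spectral representation into the quadratic form $Q = \sum_{i,j} \vecuse{a}_i^* \matusesigma{C}(\vecuse{s}_i - \vecuse{s}_j)\vecuse{a}_j$, factor the complex exponential as $\overline{c_i(\vecuse{x})}\,c_j(\vecuse{x})$, and collapse the integrand to $\vecuse{V}(\vecuse{x})^*\matusesigma{f}(\vecuse{x})\vecuse{V}(\vecuse{x}) \geq 0$ using pointwise Hermitian positive-definiteness of $\matusesigma{f}(\vecuse{x})$. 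Two minor remarks: the sum--integral interchange needs only linearity of the Lebesgue integral, since the sum is finite and each $f_{jk}$ is integrable by the hypothesis that the representation defines $\matusesigma{C}$, so invoking Fubini is harmless overkill; and your closing observation is the right one --- the condition $f_{jk}(\vecuse{x}) = \overline{f_{jk}(-\vecuse{x})}$ is what makes $\matusesigma{C}(\vecuse{h})$ real-valued (so that the complex-coefficient quadratic form genuinely subsumes the real case relevant to a real-valued random field), while $f_{jk}(\vecuse{x}) = \overline{f_{kj}(\vecuse{x})}$ yields the cross-symmetry $C_{kj}(\vecuse{h}) = C_{jk}(-\vecuse{h})$.
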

See \cite{gneiting2010matern}, or \cite{yarger2023multivariate} for a more technical statement, where the careful use of spectral measures makes the statement if and only if.

Many of the cross-covariances proposed have been proportional to a covariance function $C^*(\vecuse{h};\mb{\theta}_j)$. 
For example, if $C_{jj}(\vecuse{h}) = C^*(\vecuse{h};\mb{\theta}_j)$ and $C_{kk}(\vecuse{h}) = C^*(\vecuse{h};\mb{\theta}_k)$, then one might use $C_{jk}(\vecuse{h}) = \rho_{jk} C^*(\vecuse{h};\mb{\theta}_{jk})$ for some set of parameters $\mb{\theta}_{jk}$ and $\rho_{jk} \in [-1, 1]$, which may need to be restricted further. 
At times, ensuring the validity of such multivariate models is quite nontrivial \citep{gneiting2010matern, apanasovich2012valid, emery2022new}. 
In general, cross-covariances, unlike spatial covariances, may be asymmetric, where $C_{jk}(\vecuse{h}) \neq C_{jk}(-\vecuse{h})$ for $j\neq k$, which we discuss here.
\begin{definition}[Multivariate symmetry]\normalfont
A stationary multivariate covariance $\mb{C}(\vecuse{h})$ is symmetric if $
    \matusesigma{C}(\vecuse{h}) = \matusesigma{C}(-\vecuse{h})$ for all $\vecuse{h}\in \mathbb{R}^d$.
\end{definition}
Constructions based on $C_{jk}(\vecuse{h}) =\rho_{jk} C^*(\vecuse{h};\mb{\theta}_{jk})$ are inherently symmetric as $C^*(\vecuse{h};\mb{\theta}_{jk})$ is proportional to a covariance function. 
If the spectral density exists, symmetry holds if and only if $\matusesigma{f}(\vecuse{x})$ is real-valued for all $\vecuse{x}\in \mathbb{R}^d$ \cite[Section 2.1,][]{yarger2023multivariate}. 
Most asymmetric cross-covariances $C_{jk}(\vecuse{h})$ are based on shifts, where $C_{jk}(\vecuse{h}) = \rho_{jk}C^*(\vecuse{h} - \vecuse{v}_{jk}; \mb{\theta}_{jk})$ for a covariance $C^*(\vecuse{h}; \mb{\theta}_{jk})$ and a shift vector $\vecuse{v}_{jk} \in \mathbb{R}^d$ \citep{li_approach_2011, vu_modeling_2023, mu2024gaussian}.
This shift corresponds to a multiplication of $\textrm{exp}\{-\I \langle \vecuse{v}_{jk}, \vecuse{x}\rangle\}$ on the entry $f_{jk}(\vecuse{x})$ in the spectral domain. 
Recently, \cite{yarger2023multivariate} introduced asymmetric multivariate covariances using another approach with the spectral density, which is the main topic of Section \ref{sec:spatial_cc}. 

\subsection{Univariate spatio-temporal models}

We now review some key definitions and covariance models in spatio-temporal statistics. 
We assume here that a covariance function $C(\vecuse{h},u)$ of a space-time process $\{Y(\vecuse{s},t) \in \mathbb{R}; \vecuse{s}\in \mathbb{R}^d, t \in \mathbb{R}\}$ is finite and covariance stationary, in that the covariance $
    C(\vecuse{h},u) = \textrm{Cov}\left\{Y(\vecuse{s},t), Y(\vecuse{s} + \vecuse{h}, t+ u)\right\} < \infty$ does not depend on $\vecuse{s}$ or $t$. 
We focus on a few properties of space-time covariances, where more information can be found in \cite{gneiting2006geostatistical, porcu202130, chen2021space}. 

\begin{theorem}[Simplified Bochner, space-time covariances]\normalfont\label{thm:st_bochner}
Define a spectral density $f(\vecuse{x}, \eta)$ that is integrable and satisfies $f(-\vecuse{x}, -\eta) = f(\vecuse{x}, \eta)$ and nonnegativity $f(\vecuse{x}, \eta ) \geq 0$ for all $\vecuse{x}\in \mathbb{R}^d$ and $\eta \in \mathbb{R}$.
Then the following is a valid space-time covariance on $\mathbb{R}^d \times \mathbb{R}$: $$
    C(\vecuse{h},u) = \int_{\mathbb{R}} \int_{\mathbb{R}^d} \textrm{exp}\left\{\I(\langle \vecuse{h},  \vecuse{x}\rangle + u\eta)\right\} f(\vecuse{x}, \eta) d\vecuse{x} d\eta.$$
\end{theorem}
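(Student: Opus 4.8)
The plan is to recognize that the space-time index set $\mathbb{R}^d \times \mathbb{R}$ is nothing but $\mathbb{R}^{d+1}$, so that the asserted statement is exactly the sufficiency half of Bochner's theorem in dimension $d+1$, with $(\vecuse{x},\eta)$ playing the role of the frequency variable conjugate to $(\vecuse{h},u)$. Accordingly, two things must be checked: that $C(\vecuse{h},u)$ is real-valued, so that it can serve as the covariance of a real-valued field, and that it is positive semidefinite.

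For the first point, I would use the symmetry $f(-\vecuse{x},-\eta) = f(\vecuse{x},\eta)$ together with the fact that $f$ is real (being nonnegative). Pairing the integrand at $(\vecuse{x},\eta)$ with its value at $(-\vecuse{x},-\eta)$, and noting that $\textrm{exp}\{\I(\langle \vecuse{h},-\vecuse{x}\rangle + u(-\eta))\}$ is the complex conjugate of $\textrm{exp}\{\I(\langle \vecuse{h},\vecuse{x}\rangle + u\eta)\}$, the imaginary contributions cancel and the representation collapses to $C(\vecuse{h},u) = \int_{\mathbb{R}}\int_{\mathbb{R}^d} \cos(\langle \vecuse{h},\vecuse{x}\rangle + u\eta)\, f(\vecuse{x},\eta)\, d\vecuse{x}\, d\eta$, which is real. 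The same change of variables $(\vecuse{x},\eta)\mapsto(-\vecuse{x},-\eta)$ also yields $C(\vecuse{h},u) = C(-\vecuse{h},-u)$, the space-time analogue of the usual covariance symmetry.

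For positive semidefiniteness, I would take an arbitrary finite set of space-time locations $(\vecuse{s}_i,t_i)$ and real coefficients $a_i$, substitute the integral representation into the quadratic form $\sum_{i,j} a_i a_j\, C(\vecuse{s}_i - \vecuse{s}_j, t_i - t_j)$, and interchange the finite sum with the integral. After factoring the exponential across the $i$ and $j$ indices, the bracketed sum becomes $\bigl|\sum_i a_i\, \textrm{exp}\{\I(\langle \vecuse{s}_i,\vecuse{x}\rangle + t_i\eta)\}\bigr|^2 \geq 0$; since $f(\vecuse{x},\eta)\geq 0$, the entire integrand is nonnegative, so the quadratic form is nonnegative.

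The computations are routine, and the only step requiring genuine care is the interchange of summation and integration. This is justified by the integrability hypothesis on $f$: each exponential has modulus one, so the integrand is dominated in absolute value by the integrable quantity $\bigl(\sum_{i,j}|a_i a_j|\bigr)\, f(\vecuse{x},\eta)$, and Fubini's theorem applies. I would close by remarking that the statement gives only a sufficient condition; the full if-and-only-if characterization, phrased through a spectral measure rather than a density, is the general Bochner theorem on $\mathbb{R}^{d+1}$ and parallels the multivariate statement given earlier.
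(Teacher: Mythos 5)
Your proof is correct. The paper never proves this theorem itself---it is stated as a simplified form of the classical Bochner theorem on $\mathbb{R}^{d+1}$ and deferred to the cited literature, just as Theorem~1 is---and your direct verification (realness and $C(\vecuse{h},u)=C(-\vecuse{h},-u)$ from the reflection symmetry of $f$, then positive semidefiniteness by collapsing the quadratic form to $\int_{\mathbb{R}}\int_{\mathbb{R}^d}\bigl|\sum_i a_i\,\textrm{exp}\{\I(\langle \vecuse{s}_i,\vecuse{x}\rangle+t_i\eta)\}\bigr|^2 f(\vecuse{x},\eta)\,d\vecuse{x}\,d\eta\geq 0$) is exactly the standard sufficiency argument underlying those references; the only cosmetic remark is that interchanging a finite sum with the integral needs only linearity and integrability of $f$, so the appeal to Fubini is harmless overkill.
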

We propose asymmetric covariance models through the spectral density $f(\vecuse{x}, \eta)$.
One simplified construction of space-time models is through separability. 
\begin{definition}[Separability]\normalfont
    A space-time stationary covariance $C(\vecuse{h},u)$ is called separable if a decomposition $C(\vecuse{h},u) = C_{\vecuse{s}}(\vecuse{h}) C_t(u)$ exists for all $\vecuse{h}\in \mathbb{R}^d$ and $u \in \mathbb{R}$. 
\end{definition}

In general, separability is often considered to be restrictive \citep{chen2021space,de_iaco_positive_2013}.
If a covariance is not separable, then we say it is nonseparable.
A common separability measure is  $R(\vecuse{h},u) = C(\vecuse{h},u) C(\vecuse{0},0) - C(\vecuse{h},0)C(\vecuse{0},u)$, where $R(\vecuse{h},u) =0$ for all $\vecuse{h}$ and $u$ corresponds to separability.
Positive nonseparability occurs if $R(\vecuse{h},u) > 0$ for some $\vecuse{h}$ and $u$ and negative nonseparability if $R(\vecuse{h},u) < 0$ \citep{de_iaco_positive_2013}. 
We refer to $C(\vecuse{h},0)$ and $C(\vecuse{0}, u)$ as the marginal spatial and temporal covariance functions, respectively. 

We now introduce a celebrated nonseparable and symmetric class of space-time covariances called the Gneiting class \citep{gneiting_nonseparable_2002}, which built upon results in \cite{cressie1999classes}. 
We principally use the form of the wider ``extended'' class discussed in \cite{allard2025modeling} with a separability parameter $b\in[0,1]$, where $b=0$ corresponds to separability.
With the additional parameter $\delta > 0$, and letting $\tau = b/2 + \delta > b/2$, consider the covariance 
\begin{align}
    C(\vecuse{h},u) &= \frac{\sigma}{\left(\gamma(u)+1\right)^\tau}\varphi\left\{\frac{\lVert \vecuse{h} \rVert^2}{\left(\gamma(u) + 1\right)^b}\right\},\label{eq:gne_general}
\end{align}where $\gamma(u)$ is a continuous variogram on $\mathbb{R}$ and $\varphi(r)$ is a completely monotone function on $[0, \infty)$. 
Gneiting covariances only introduce positive nonseparability \citep{de_iaco_positive_2013}.
We next discuss symmetry and asymmetry in space-time covariances, the principal characteristic motivating this paper. 
\begin{definition}[Full symmetry]\normalfont
    A space-time covariance $C(\vecuse{h},u)$ is fully symmetric, if, for all $\vecuse{h} \in \mathbb{R}^d$ and $u \in \mathbb{R}$, $
        C(\vecuse{h},u) = C(-\vecuse{h}, u) = C(\vecuse{h}, -u) = C(-\vecuse{h},-u)$.
\end{definition}

If a covariance is not fully symmetric, then we say it is asymmetric, though $C(\vecuse{h},u)= C(-\vecuse{h},-u)$ must be satisfied.
All covariances that are separable are also fully symmetric \citep{gneiting2006geostatistical}.
\cite{park2008testing} propose statistical tests for symmetry in space-time data. 
\cite{park2008testing} also define axial symmetry with respect to the $k$-th spatial dimension, as defined here. 

\begin{definition}[Axial symmetry]\normalfont\label{def:axial}
    A space-time covariance is axially symmetric in space with respect to the $k$-th dimension if $
        C(\vecuse{h}, u) = C(\vecuse{h}^*, u)$ for any pairs $(\vecuse{h}, \vecuse{h}^*)$ whose entries satisfy $h_j = h_j^*$ for all $j\neq k$ and $h_k = -h_k^*$.
\end{definition}

The most-commonly referred-to class of asymmetric space-time covariances are Lagrangian covariances \citep{gneiting2006geostatistical, salvana2023spatio, porcu202130, chen2021space, fonseca2011general, ezzat2018spatio}. 
In this case, one takes 
    $C(\vecuse{h},u) = \mathbb{E}_\vecuse{V}\left\{C_{\vecuse{s}}(\vecuse{h} - u\vecuse{V})\right\}$, where $C_{\vecuse{s}}(\vecuse{h})$ is a purely spatial covariance function, $\vecuse{V}$ is a random vector in $\mathbb{R}^d$, and the expectation is taken with respect to $\vecuse{V}$. 
An advantage of this model is the interpretability.
For example, when $Y(\vecuse{s},t)$ represents an atmospheric variable, the distribution of $\vecuse{V}$ could represent the wind direction and speed, transporting the variable through space over time. 

The expectation can be simplified when using the squared-exponential covariance with $C_{\vecuse{s}}(\vecuse{h}) =\sigma \textrm{exp}(-a_{\vecuse{s}}^2\lVert \vecuse{h}\rVert^2)$, where $\sigma>0$ is a variance parameter and $a_{\vecuse{s}}>0$ is an inverse range parameter. 
If we also have $\vecuse{V} \sim \mathcal{N}(\vecuse{\mu}_{\vecuse{V}}, \matuse{\Sigma}_{\vecuse{V}})$, where $\mathcal{N}(\cdot, \cdot)$ is the multivariate normal distribution, the Lagrangian covariance is \begin{align}
C(\vecuse{h},u) &= \sigma \frac{1}{\left|I_d + 2a_{\vecuse{s}}^2u^2\matuse{\Sigma}_{\vecuse{V}}\right|^\frac{1}{2}} \textrm{exp}\left\{-a_{\vecuse{s}}^2\left(\vecuse{h} - u\vecuse{\mu}_{\vecuse{V}}\right)^\top\left(I_d + 2a_{\vecuse{s}}^2u^2\matuse{\Sigma}_{\vecuse{V}}\right)^{-1} \left(\vecuse{h} - u\vecuse{\mu}_{\vecuse{V}}\right)\right\},\label{eq:lagrangian}
\end{align}where $I_d$ is the $d\times d$ identity matrix and $|\matuse{\Sigma}|$ is the determinant of $\matuse{\Sigma}$  \citep{schlather2010some, salvana_daouia_lagrangian_2021, salvana2023spatio, noauthor_corrections_2024}. 
The covariance is asymmetric if $\vecuse{\mu}_{\vecuse{V}} \neq \vecuse{0}$ \citep{salvana_daouia_lagrangian_2021}. 
Similar representations exist for other scale-mixture covariances \citep{noauthor_corrections_2024, schlather2010some, ma2025asymmetric}. 
The marginal covariances are $C(\vecuse{h}, 0) = \sigma\textrm{exp}\left(-a_{\vecuse{s}}^2 \lVert \vecuse{h}\rVert^2\right)$ and $C(\vecuse{0}, u) = \sigma \left|I_d + 2a_{\vecuse{s}}^2u^2\matuse{\Sigma}_{\vecuse{V}}\right|^{-1/2}\textrm{exp}\left\{a_{\vecuse{s}}^2u^2\vecuse{\mu}_{\vecuse{V}}^\top \left(I_d + 2a_{\vecuse{s}}^2u^2\matuse{\Sigma}_{\vecuse{V}}\right)^{-1} \vecuse{\mu}_{\vecuse{V}}\right\},$ with the temporal marginal covariance affected by the spatial covariance function. 
Theoretical asymptotic parameter estimation and prediction results typically require appropriate control of both marginal spectral densities \citep[Section 5.2,][]{faouzi_spacetime_2025}. 
The only separable Lagrangian covariance takes $\vecuse{V}=\vecuse{0}$, a trivial case where $C(\vecuse{0},u)$ is constant in $u$.
The Lagrangian model has $1 + 1 + d + d(d+1)/2$ parameters through $(\sigma, a_{\vecuse{s}}, \vecuse{\mu}_{\vecuse{V}}, \matuse{\Sigma}_\vecuse{V})$, respectively. 
\cite{salvana_daouia_lagrangian_2021} show that some Gneiting models correspond to Lagrangian models when $\vecuse{\mu}_{\vecuse{V}} = \vecuse{0}$ and $\matuse{\Sigma}_{\vecuse{V}} = cI_d$.

Other asymmetric space-time covariances have also been introduced. 
The approach in Section 5 of \cite{stein2005space} is most similar to ours by using the spectral density $$
    f(\vecuse{x}, \eta) = f^*(\lVert \vecuse{x} \rVert, |\eta|) \left(1 + \langle \vecuse{x}, \tilde{\vecuse{x}}\rangle \eta + \lVert \vecuse{x}\rVert^2 + \eta^2\right),$$ where $f^*(\lVert \vecuse{x}\rVert, |\eta|)$ is the spectral density of a symmetric space-time covariance and $\tilde{\vecuse{x}} \in \mathbb{R}^d$. 
However, additional requirements are needed for validity. 
\cite{horrell2017half} defines a half-spectral approach, where one of the two integrals in Bochner's representation have been evaluated. 
In particular, asymmetry is introduced by $
    C(\vecuse{h},u) = \int_{\mathbb{R}}\textrm{exp}\left(\I \eta u\right)\textrm{exp}\left\{\I \phi(\eta) \langle \tilde{\vecuse{x}}, \vecuse{h}\rangle\right\} f_{\textrm{half}}(\eta; \vecuse{h})d\eta$ for some odd function $\phi(\eta)$, with $\phi(\eta) = 0$ as the symmetric case.
Finally, \cite{wu2021non} and \cite{zhang_non-fully_nodate} describe an approach for asymmetry through scale-mixtures with space-time interaction. 
However, covariances are only available through an infinite series.

Approaches to combine Lagrangian and Gneiting covariances have also been considered, aiming to give flexible control over both asymmetry and nonseparability. 
These primarily consist of convex combinations \citep{gneiting2006geostatistical, chen2021space}, where $
    C(\vecuse{h},u) = \sigma (1 -\lambda) C_{NS}(\vecuse{h},u) + \sigma \lambda C_{LA}(\vecuse{h},u)$ for $\lambda \in (0,1)$, $C_{NS}(\vecuse{h},u)$ is Gneiting and $C_{LA}(\vecuse{h},u)$ is Lagrangian.
However, there is not necessarily harmony between the covariance functions, as often $C_{LA}(\vecuse{h},u)$ is compact unlike $C_{NS}(\vecuse{h},u)$, and they may imply different origin behavior of the covariance.

\section{Previously-established and new asymmetric cross-covariances}\label{sec:spatial_cc}

We next provide background and new results on asymmetric cross-covariances.
In \cite{yarger2023multivariate}, cross-covariance functions were introduced of the form \begin{align*}
    C_{jk}(\vecuse{h}) &= \int_{\mathbb{R}^d}\textrm{exp}\left(\I \langle \vecuse{h}, \vecuse{x}\rangle\right)\left\{\Re(\sigma_{jk}) - \I \textrm{sign}(\langle \vecuse{x}, \tilde{\vecuse{x}}\rangle)\Im(\sigma_{jk})\right\} \sqrt{f_j(\vecuse{x}) f_k(\vecuse{x})} d\vecuse{x},
\end{align*}where $\{f_j(\vecuse{x})\}$ are the individual spectral densities of each process, $\Re(z)$ and $\Im(z)$ the real and imaginary parts of $z$, respectively, $\matusesigma{\sigma} = [\sigma_{jk}]_{j,k=1}^p$ is a positive definite, Hermitian matrix with potentially complex entries on the off-diagonal, and $\tilde{\vecuse{x}}$ is a unit vector of length $d$. 
We will also use $\Re$ and $\Im$ as superscripts to refer to functions involving symmetric or asymmetric parts of a cross-covariance or space-time covariance. 
For $d=1$, we generally take $\tilde{\vecuse{x}} = \tilde{x} = 1$, and we can also represent the cross-covariance $
    C_{jk}(h) = 
    2\int_{0}^\infty \left\{\cos(hx) \Re(\sigma_{jk}) + \sin(hx) \Im(\sigma_{jk})\right\}\sqrt{f_j(x) f_k(x)} dx$.
For general $d$, this leads to a decomposition \begin{align}
    C_{jk}(\vecuse{h}) = \Re(\sigma_{jk})C_{jk}^\Re(\vecuse{h}) + \Im(\sigma_{jk})C_{jk}^\Im(\vecuse{h}),\label{eq:spatial_asym_decomp}
\end{align}where $ C_{jk}^\Re(\vecuse{h})$ is symmetric, and $C_{jk}^\Im(\vecuse{h})$ introduces asymmetry when $\Im(\sigma_{jk}) \neq 0$. 

When $d=1$, $C_{jk}^{\Im}(h)$ will be proportional to the Hilbert transform of $C_{jk}^{\Re}(h)$ \citep{king2009hilbert}, which corresponds to the Fourier multiplier $-\I\textrm{sign}(x)$. 
Thus, if $C_{jk}^{\Re}(h) = C_{jk}^{\Re}(-h)$ is an even function, then $C_{jk}^{\Im}(h) = -C_{jk}^{\Im}(-h)$ will be an odd function, and many such closed forms for $C_{jk}^{\Im}(h)$ can be found using tables in \cite{king2009hilbert}. 
For $d >1$, the Fourier multiplier $-\I\textrm{sign}(\langle \vecuse{x}, \tilde{\vecuse{x}}\rangle)$ has been studied under a variety of names, including the directional Hilbert transform (\citealt{king2009hilbert}, 487, 571 of \citealt{garcia1985weighted}), the Hilbert transform in multiple dimensions \citep[172 of][]{granlund2013signal}, and the partial Hilbert transform \citep{felsberg2002monogenic}. 

We next concretely present $C_{jk}^{\Im}(\vecuse{h})$ for some covariances. 
In these forms, we implicitly assume that $f_{j}(\vecuse{x}) = f_k(\vecuse{x})$, but the full versions are given in Appendix \ref{app:vary_params}. 
Throughout, proofs for propositions are also presented in Appendix \ref{sec:proofs}.

\subsection{Squared-exponential covariance}

As mentioned in Section 7.6 of \cite{yarger2023multivariate}, the squared-exponential covariance in $d=1$, $
    C_{jk}^{\Re}(h) = \textrm{exp}\left(-a^2h^2 \right)$ where $a > 0$ is an inverse range parameter, has $
    C_{jk}^{\Im}(h) = \left(2/\sqrt{\pi}\right)
    D_+\left(a h\right)$, where $D_+(z)$ is Dawson's function, represented by $
    D_+(z) = 
(1/2)\int_0^\infty \textrm{exp}\left(-t^2/4\right) \sin(zt)dt = \left(\sqrt{\pi}/2\right)\textrm{exp}\left(-z^2\right) \textrm{erfi}(z)$, where $\textrm{erfi}(z)$ is the imaginary error function. 
See, for example, Section 7 (7.2.5, 7.5.1) of \cite{NIST:DLMF}. 
This result is Hilbert transform numbered (3.5) in the tables of \cite{king2009hilbert}. 
We next introduce the novel extended result for general $d$. 
\begin{proposition}\normalfont\label{prop:sq_exp_spat}
Consider the cross-covariance $C_{jk}^{\Re}(\vecuse{h}) = \textrm{exp}\left(-a^2 \lVert \vecuse{h}\rVert^2\right)$ where $\vecuse{h} \in \mathbb{R}^d$. The asymmetric part corresponding to the $-\I\textrm{sign}(\langle \vecuse{x}, \tilde{\vecuse{x}}\rangle)$ Fourier multiplier is $$
    C^{\Im}_{jk}(\vecuse{h}) = \textrm{exp}\left(-a^2\lVert \vecuse{h}\rVert^2\right)  \textrm{erfi}\left(a\langle \vecuse{h}, \tilde{\vecuse{x}}\rangle\right).$$
\end{proposition}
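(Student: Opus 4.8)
The plan is to reduce the $d$-dimensional integral to the established one-dimensional identity by exploiting the rotational invariance of the squared-exponential spectral density. First I would write $C_{jk}^{\Im}$ through its spectral representation. Taking $f_j = f_k = f$ in the general form of Section \ref{sec:spatial_cc}, the asymmetric part is the inverse transform of the directional Hilbert multiplier $-\I\,\textrm{sign}(\langle \vecuse{x}, \tilde{\vecuse{x}}\rangle)$ applied to $f$, namely
$$C_{jk}^{\Im}(\vecuse{h}) = -\I\int_{\bbR^d}\textrm{exp}\left(\I\langle \vecuse{h}, \vecuse{x}\rangle\right)\textrm{sign}\left(\langle \vecuse{x}, \tilde{\vecuse{x}}\rangle\right)f(\vecuse{x})\,d\vecuse{x},$$
where $f$ is the spectral density whose inverse transform is $C^{\Re}_{jk}(\vecuse{h}) = \textrm{exp}(-a^2\lVert \vecuse{h}\rVert^2)$. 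A routine Gaussian Fourier computation gives $f(\vecuse{x}) = (2a\sqrt{\pi})^{-d}\textrm{exp}\left(-\lVert \vecuse{x}\rVert^2/(4a^2)\right)$, which is isotropic and factorizes as a product of one-dimensional Gaussians $g(y) = (2a\sqrt{\pi})^{-1}\textrm{exp}(-y^2/(4a^2))$.

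Next I would perform an orthogonal change of variables aligning $\tilde{\vecuse{x}}$ with the first coordinate axis. Choosing a rotation $\matuse{R}$ with $\matuse{R}\tilde{\vecuse{x}} = \vecuse{e}_1$ and substituting $\vecuse{y} = \matuse{R}\vecuse{x}$, the unit Jacobian together with the isotropy of $f$ leaves the density unchanged, while $\langle \vecuse{x}, \tilde{\vecuse{x}}\rangle = y_1$ and $\langle \vecuse{h}, \vecuse{x}\rangle = \langle \vecuse{h}', \vecuse{y}\rangle$ with $\vecuse{h}' = \matuse{R}\vecuse{h}$ and, crucially, $h_1' = \langle \vecuse{h}, \tilde{\vecuse{x}}\rangle$. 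Since the sign factor now depends only on $y_1$ and the Gaussian weight factorizes coordinatewise, the integral separates into a product over the $d$ coordinates.

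Finally I would evaluate the factors. The first coordinate reproduces exactly the $d=1$ case quoted above, giving $-\I\int_{\bbR}\textrm{exp}(\I h_1' y_1)\textrm{sign}(y_1)g(y_1)\,dy_1 = \textrm{exp}\left(-a^2(h_1')^2\right)\textrm{erfi}(ah_1')$, which follows from the Dawson-function identity $(2/\sqrt{\pi})D_+(z) = \textrm{exp}(-z^2)\textrm{erfi}(z)$. Each remaining coordinate contributes an ordinary Gaussian transform $\textrm{exp}\left(-a^2(h_l')^2\right)$. Multiplying these and using $\sum_{l=1}^d (h_l')^2 = \lVert \vecuse{h}'\rVert^2 = \lVert \vecuse{h}\rVert^2$ together with $h_1' = \langle \vecuse{h}, \tilde{\vecuse{x}}\rangle$ collapses the product to $\textrm{exp}(-a^2\lVert \vecuse{h}\rVert^2)\,\textrm{erfi}(a\langle \vecuse{h}, \tilde{\vecuse{x}}\rangle)$, the claimed expression. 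I expect the main point requiring care to be verifying that the directional sign multiplier transforms into $\textrm{sign}(y_1)$ with matching frequency $h_1' = \langle \vecuse{h}, \tilde{\vecuse{x}}\rangle$ --- this is precisely what reduces the $d$-dimensional directional Hilbert transform to the scalar Hilbert transform --- along with a brief integrability remark ensuring that the Gaussian weight justifies Fubini and the coordinatewise factorization.
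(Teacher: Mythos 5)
Your proof is correct and follows essentially the same route as the paper: both write $C_{jk}^{\Im}$ via the Gaussian spectral density, rotate so that $\tilde{\vecuse{x}}$ aligns with the first coordinate axis, factor the integral so that the first coordinate reproduces the $d=1$ Dawson/erfi identity while the remaining coordinates give ordinary Gaussian transforms, and recombine using orthogonality of the rotation. The only cosmetic difference is that you factor fully coordinatewise into $d$ one-dimensional integrals, whereas the paper splits into a $1$-dimensional and a $(d-1)$-dimensional block; this changes nothing of substance.
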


In the top-left facet of Figure~\ref{fig:spatial_comparison}, we plot an example of $C_{jk}^{\Im}(\vecuse{h})$ for $d=2$.

\begin{figure}[th]
    \centering
    \includegraphics[width=0.98\linewidth]{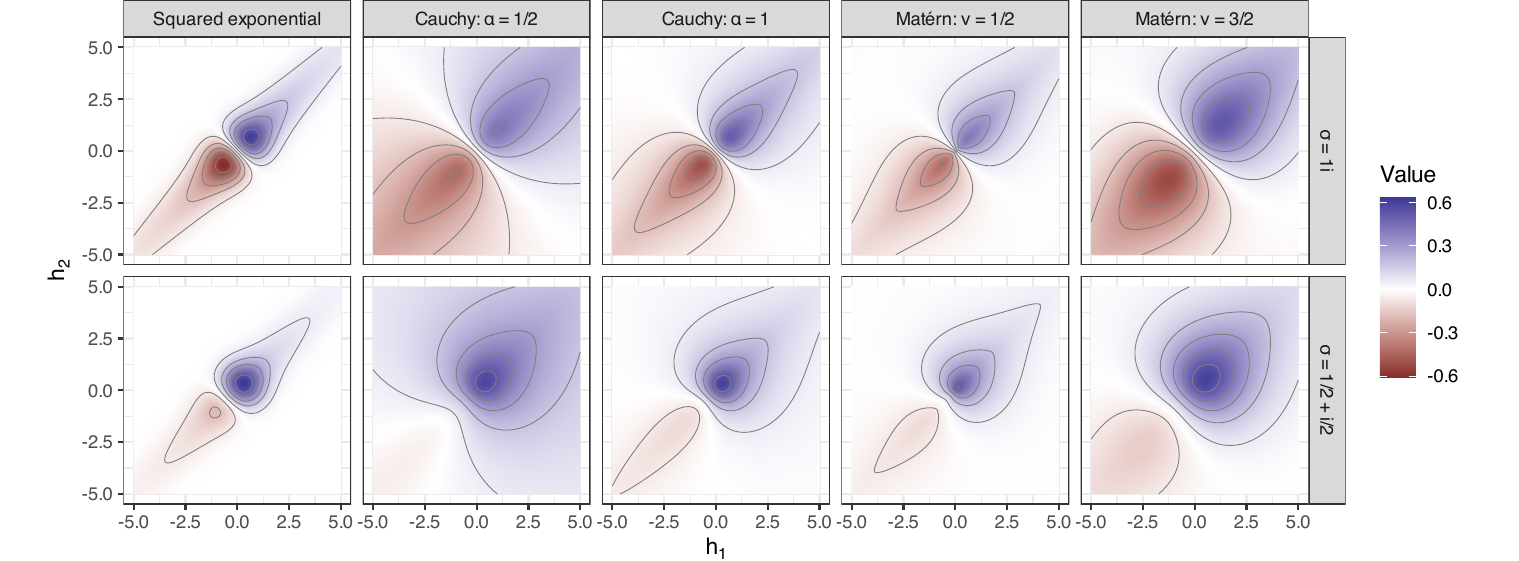}
    \caption{Squared-exponential, Cauchy, and Mat\'ern cross-covariances $\Re(\sigma_{jk})C_{jk}^{\Re}(\vecuse{h}) + \Im(\sigma_{jk})C_{jk}^{\Im}(\vecuse{h})$ in $d=2$. In all cases, we take $a=1$, $\tilde{\vecuse{x}} = (1,1)^\top / \sqrt{2}$. On the top row, we plot $C_{jk}^{\Im}(\vecuse{h})$, while the bottom row takes $\Re(\sigma_{jk}) = \Im(\sigma_{jk}) = 1/2$.}
    \label{fig:spatial_comparison}
\end{figure}

This representation for the squared-exponential form provides a basis to find additional asymmetric cross-covariances for other covariance functions through Schoenberg's theorem \citep{allard2025modeling}.
In particular, suppose $
    C(\vecuse{h}) = \sigma \int_0^\infty \textrm{exp}\left(-v\lVert \vecuse{h}\rVert^2\right) g(v) dv$ where $g(v)$ is a nonnegative mixing density.
Then $C(\vecuse{h})$ is a valid covariance function for any $d$, and we say that $C(\vecuse{h})$ has a normal Gaussian scale-mixture representation \citep{allard2025modeling}. 
This class includes Mat\'ern \citep{stein1999interpolation}, Cauchy \citep{chiles2012geostatistics}, and Confluent Hypergeometric \citep{ma2022beyond} covariance functions. 
We next consider an asymmetric version.

\begin{proposition}\normalfont\label{prop:schoen_asymmetric}
        Suppose that one can represent $
            C_{jk}^{\Re}(\vecuse{h})=\int_0^\infty \textrm{exp}\left(- v \lVert \vecuse{h}\rVert^2\right) g(v) du
       $ for $\vecuse{h} \in \mathbb{R}^d$ and $g(v)$ is a nonnegative density. Suppose also that the spectral density of $C_{jk}^{\Re}(\vecuse{h})$ exists.
        Then the corresponding asymmetric part based on the Fourier multiplier $-\I\textrm{sign}(\langle \vecuse{x}, \tilde{\vecuse{x}}\rangle)$ is $
            C_{jk}^{\Im}(\vecuse{h}) = \int_0^\infty \textrm{exp}\left(-v\lVert \vecuse{h}\rVert^2 \right)\textrm{erfi}\left(\sqrt{v}\langle \vecuse{h}, \tilde{\vecuse{x}}\rangle\right) g(v) dv$.
\end{proposition}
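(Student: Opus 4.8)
The plan is to reduce the general scale-mixture case to the single squared-exponential case already settled in Proposition~\ref{prop:sq_exp_spat}, by interchanging the mixing integral over $v$ with the Fourier-multiplier operation that produces the asymmetric part. I would carry out this interchange in the spectral domain, where its justification is cleanest.

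First I would identify the spectral density of $C_{jk}^{\Re}$ as a scale mixture of Gaussian spectral densities. The covariance $\textrm{exp}(-v\lVert \vecuse{h}\rVert^2)$ has spectral density $f_v(\vecuse{x}) = (4\pi v)^{-d/2}\textrm{exp}\bigl(-\lVert \vecuse{x}\rVert^2/(4v)\bigr)$, and since $g(v)\geq 0$ and each $f_v\geq 0$, Tonelli's theorem identifies the spectral density of $C_{jk}^{\Re}$ with $f(\vecuse{x}) = \int_0^\infty f_v(\vecuse{x}) g(v)\, dv$. The standing assumption that this spectral density exists gives $f\in L^1(\mathbb{R}^d)$, with $\int_{\mathbb{R}^d} f(\vecuse{x})\, d\vecuse{x} = C_{jk}^{\Re}(\vecuse{0}) = \int_0^\infty g(v)\, dv < \infty$. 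By definition the asymmetric part inserts the multiplier into the inverse transform,
\[
 C_{jk}^{\Im}(\vecuse{h}) = \int_{\mathbb{R}^d} \textrm{exp}\bigl(\I \langle \vecuse{h}, \vecuse{x}\rangle\bigr)\bigl(-\I\,\textrm{sign}(\langle \vecuse{x}, \tilde{\vecuse{x}}\rangle)\bigr) f(\vecuse{x})\, d\vecuse{x}.
\]
Substituting the mixture for $f$, the integrand over $\mathbb{R}^d\times(0,\infty)$ is dominated in modulus by $f_v(\vecuse{x}) g(v)$, whose double integral equals $\int_{\mathbb{R}^d} f(\vecuse{x})\, d\vecuse{x} < \infty$; Fubini's theorem then permits interchanging the $v$- and $\vecuse{x}$-integrals. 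For each fixed $v$ the inner $\vecuse{x}$-integral is exactly the asymmetric part associated with $\textrm{exp}(-v\lVert \vecuse{h}\rVert^2)$, which Proposition~\ref{prop:sq_exp_spat} evaluates, with $a = \sqrt{v}$, as $\textrm{exp}(-v\lVert \vecuse{h}\rVert^2)\,\textrm{erfi}\bigl(\sqrt{v}\langle \vecuse{h}, \tilde{\vecuse{x}}\rangle\bigr)$. Reassembling the $v$-integral yields the claimed formula.

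The step requiring care is the interchange, and I deliberately perform it in the spectral domain precisely because the dominating function there is the nonnegative, integrable density $f$, making Fubini immediate. A direct interchange in the spatial domain would instead force one to control $\int_0^\infty \textrm{exp}(-v\lVert \vecuse{h}\rVert^2)\,\textrm{erfi}\bigl(\sqrt{v}\langle \vecuse{h}, \tilde{\vecuse{x}}\rangle\bigr) g(v)\, dv$ directly; this integral is in fact also absolutely convergent, since writing $z = \sqrt{v}\langle \vecuse{h}, \tilde{\vecuse{x}}\rangle$ and using $\textrm{exp}(-z^2)\textrm{erfi}(z) = (2/\sqrt{\pi}) D_+(z)$ together with the Cauchy--Schwarz bound $\langle \vecuse{h}, \tilde{\vecuse{x}}\rangle^2 \leq \lVert \vecuse{h}\rVert^2$ shows the integrand is bounded by $(2/\sqrt{\pi})\sup_z |D_+(z)|\, g(v)$. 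Thus the same conclusion follows by dominated convergence directly from the scale-mixture representation, and the resulting expression is finite for every $\vecuse{h}$.
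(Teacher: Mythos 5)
Your proof is correct and follows essentially the same route as the paper's: both identify the spectral density of $C_{jk}^{\Re}$ as a Gaussian scale mixture (the paper citing Proposition 1 of Allard et al.\ 2022, you deriving it via Tonelli), interchange the $v$- and $\vecuse{x}$-integrals by Fubini, and evaluate the inner integral with Proposition~\ref{prop:sq_exp_spat} at $a=\sqrt{v}$. Your explicit domination argument for Fubini and the closing remark on absolute convergence via the boundedness of Dawson's function are welcome refinements the paper leaves implicit, but they do not change the underlying argument.
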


This gives three ways to find a $C_{jk}^{\Im}(\vecuse{h})$ in closed form: through the Fourier transform with multiplier $-\I\textrm{sign}(\langle \vecuse{x}, \tilde{\vecuse{x}}\rangle)$, through the scale-mixture representation $g(v)$ in Proposition \ref{prop:schoen_asymmetric}, and through the Hilbert transform of $C_{jk}^{\Re}(\vecuse{h})$.

\subsection{Cauchy covariance}

We next apply Proposition \ref{prop:schoen_asymmetric} to the Cauchy covariance, defined as $
    C(\vecuse{h}) = \sigma\left(1 + a^2\lVert \vecuse{h}\rVert^2\right)^{-\alpha}$ for $\alpha > 0$ and $a >0$. 
This covariance has been useful to model heavier tail decay \citep{lim_gaussian_2009, chiles2012geostatistics}. 
    \begin{proposition}\normalfont\label{prop:asymm_cauchy_spatial}
    The function 
        $C_{jk}^{\Re}(h)= \left(1 + a^2 \lVert \vecuse{h}\rVert^2\right)^{-\alpha}$
 has an asymmetric part corresponding to the Fourier multiplier $-\I\textrm{sign}(\langle \vecuse{x}, \tilde{\vecuse{x}}\rangle)$ of\begin{align*}
        C_{jk}^{\Im}(\vecuse{h})&= 
        \frac{1}{(a^2\lVert \vecuse{h}\rVert^2 + 1)^\alpha} \frac{2}{\sqrt{\pi}}\frac{\Gamma(\alpha + \frac{1}{2})}{\Gamma(\alpha)}\frac{a\langle \vecuse{h}, \tilde{\vecuse{x}}\rangle}{(a^2\lVert \vecuse{h}\rVert^2 + 1)^{ \frac{1}{2}}}  {}_2F_1\left(\frac{1}{2}, \frac{1}{2} + \alpha; \frac{3}{2}; \frac{a^2\langle \vecuse{h}, \tilde{\vecuse{x}}\rangle^2}{a^2\lVert \vecuse{h}\rVert^2 + 1}\right),
    \end{align*}
    where ${}_2F_1(a, b; c;z)$ is the hypergeometric function \citep{abramowitz1948handbook}. 
    \end{proposition}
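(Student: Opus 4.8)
The plan is to derive this from Proposition~\ref{prop:schoen_asymmetric}, so the first task is to exhibit the Gaussian scale-mixture representation of the Cauchy covariance. Using the Gamma integral $(1+s)^{-\alpha} = \Gamma(\alpha)^{-1}\int_0^\infty t^{\alpha-1}\textrm{exp}\{-t(1+s)\}\,dt$ with $s = a^2\lVert\vecuse{h}\rVert^2$ and the substitution $v = a^2 t$, one obtains $C_{jk}^{\Re}(\vecuse{h}) = \int_0^\infty \textrm{exp}(-v\lVert\vecuse{h}\rVert^2)\,g(v)\,dv$ with mixing density $g(v) = v^{\alpha-1}\textrm{exp}(-v/a^2)/\{\Gamma(\alpha)a^{2\alpha}\}$, a scaled Gamma density. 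Since the Cauchy covariance has a spectral density, Proposition~\ref{prop:schoen_asymmetric} applies, and the asymmetric part is the single integral $C_{jk}^{\Im}(\vecuse{h}) = \int_0^\infty \textrm{exp}(-v\lVert\vecuse{h}\rVert^2)\,\textrm{erfi}(\sqrt{v}\,\langle\vecuse{h},\tilde{\vecuse{x}}\rangle)\,g(v)\,dv$.

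Writing $b = \langle\vecuse{h},\tilde{\vecuse{x}}\rangle$ and absorbing both exponentials into $w = \lVert\vecuse{h}\rVert^2 + a^{-2}$, the problem reduces to evaluating $\{\Gamma(\alpha)a^{2\alpha}\}^{-1}\int_0^\infty v^{\alpha-1}\textrm{exp}(-wv)\,\textrm{erfi}(b\sqrt{v})\,dv$. First I would expand erfi by its power series $\textrm{erfi}(z) = (2/\sqrt{\pi})\sum_{n\ge 0} z^{2n+1}/\{n!(2n+1)\}$ and integrate term by term, using $\int_0^\infty v^{n+\alpha-1/2}\textrm{exp}(-wv)\,dv = \Gamma(n+\alpha+\tfrac12)/w^{n+\alpha+1/2}$. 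This produces a power series in $b^2/w$ whose ratio of consecutive terms is $(n+\tfrac12)(n+\alpha+\tfrac12)\,(b^2/w)/\{(n+\tfrac32)(n+1)\}$; this is precisely the term ratio of ${}_2F_1(\tfrac12, \alpha+\tfrac12; \tfrac32; \cdot)$ evaluated at $b^2/w$, so the series equals $(2/\sqrt{\pi})\,b\,\Gamma(\alpha+\tfrac12)\,w^{-\alpha-1/2}\,{}_2F_1(\tfrac12,\alpha+\tfrac12;\tfrac32; b^2/w)$.

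It then remains to simplify the prefactors. Substituting $w = (a^2\lVert\vecuse{h}\rVert^2+1)/a^2$ gives $a^{-2\alpha}w^{-\alpha-1/2} = a\,(a^2\lVert\vecuse{h}\rVert^2+1)^{-\alpha-1/2}$ and $b^2/w = a^2\langle\vecuse{h},\tilde{\vecuse{x}}\rangle^2/(a^2\lVert\vecuse{h}\rVert^2+1)$, which reproduces the stated closed form after collecting the $2/\sqrt{\pi}$, $\Gamma(\alpha+\tfrac12)/\Gamma(\alpha)$, and $a\langle\vecuse{h},\tilde{\vecuse{x}}\rangle$ factors.

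The main obstacle is justifying the interchange of summation and integration and confirming convergence of the resulting hypergeometric series. For $b>0$ every term of the erfi series is nonnegative, so Tonelli's theorem legitimizes the term-by-term integration, and the general case follows because $C_{jk}^{\Im}(\vecuse{h})$ is odd in $b$. Convergence of ${}_2F_1$ at $b^2/w$ is guaranteed because $\tilde{\vecuse{x}}$ is a unit vector, so Cauchy--Schwarz gives $b^2 = \langle\vecuse{h},\tilde{\vecuse{x}}\rangle^2 \le \lVert\vecuse{h}\rVert^2 < \lVert\vecuse{h}\rVert^2 + a^{-2} = w$, placing the argument strictly inside the unit disk. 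Matching the hypergeometric parameters from the term ratio is the one computational step to carry out with care.
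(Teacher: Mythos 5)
Your proposal is correct, and its skeleton is exactly the paper's: obtain the Gamma scale-mixture representation of the Cauchy covariance (your Gamma-integral derivation gives precisely the mixing density $g(v) = a^{-2\alpha}v^{\alpha-1}\textrm{exp}(-v/a^2)/\Gamma(\alpha)$ the paper quotes from the literature), invoke Proposition \ref{prop:schoen_asymmetric} to reduce the problem to the one-dimensional integral $\int_0^\infty v^{\alpha-1}\textrm{exp}(-wv)\,\textrm{erfi}(b\sqrt{v})\,dv$ with $b = \langle\vecuse{h},\tilde{\vecuse{x}}\rangle$ and $w = \lVert\vecuse{h}\rVert^2 + a^{-2}$, and then collect prefactors. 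Where you diverge is in evaluating that integral: the paper converts $\textrm{erfi}$ to a confluent hypergeometric via 13.6.19 of \cite{abramowitz1948handbook}, changes variables, and cites the tabulated Laplace transform 7.621.4 of \cite{gradshteyn2014table}, which directly yields the ${}_2F_1$; you instead expand $\textrm{erfi}$ in its power series, integrate term by term, and identify the resulting series as ${}_2F_1(\tfrac12,\alpha+\tfrac12;\tfrac32;b^2/w)$ from the ratio of consecutive terms (I checked the ratio computation, the parameter matching, and the prefactor algebra; all are right). In effect you prove the table entry inline rather than cite it. What your route buys is self-containedness and rigor: the Tonelli justification for the interchange, and the Cauchy--Schwarz observation that $b^2/w < 1$ (so the hypergeometric series converges and the term-by-term sum is finite), are points the paper's proof leaves implicit. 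What the paper's route buys is brevity and a form of the argument that generalizes mechanically to the other Gneiting-Cauchy results in the appendix, which reuse the same two table identities.
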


This form consists of $C_{jk}^{\Re}(\vecuse{h})$ multiplied by a relatively complicated expression. 
We next consider a few special cases with simpler forms.

\begin{example}[Special case $d=1$]\normalfont
    When $d=1$, from 15.8.1	of \cite{NIST:DLMF} we have: \begin{align*}
    C_{jk}^\Im(h)&=\frac{2}{\sqrt{\pi}}\frac{\Gamma(\alpha + \frac{1}{2})}{\Gamma(\alpha)}a h  {}_2F_1\left(\alpha + \frac{1}{2}, 1; \frac{3}{2}; -a^2h^2\right),
\end{align*}which corresponds to the Hilbert transform (2.56) of \cite{king2009hilbert}.
\end{example}

\begin{example}[Special case $\alpha = 1$]\normalfont\label{ex:cauchy_spatial_a1}
Consider the case $\alpha = 1$ where $C_{jk}^{\Re}(\vecuse{h}) = \left(a^2 \lVert \vecuse{h}\rVert^2 + 1\right)^{-1}$. Applying 15.1.8 of \cite{abramowitz1948handbook} results in \begin{align*}
    C_{jk}^{\Im}(\vecuse{h})
    &=\frac{1}{a^2\lVert \vecuse{h}\rVert^2 + 1}\frac{a\langle \vecuse{h}, \tilde{\vecuse{x}}\rangle}{\sqrt{a^2(\lVert \vecuse{h}\rVert^2 - \langle \vecuse{h}, \tilde{\vecuse{x}}\rangle^2) + 1}}.  
        \end{align*}
        When $d=1$, this matches with the one-dimensional Hilbert transform as entry (2.4) in the tables of \cite{king2009hilbert}. 

        \end{example}

\begin{example}[Special case $\alpha = 1/2$]\normalfont\label{ex:cauchy_spatial_a12}

Consider the case $\alpha = 1/2$ where $C_{jk}^{\Re}(\vecuse{h}) = \left(a^2\lVert \vecuse{h}\rVert^2 + 1\right)^{-1/2}$. Then, 15.1.4 and 4.6.22 of \cite{abramowitz1948handbook} imply \begin{align*}
    C_{jk}^{\Im}(\vecuse{h})
    &= \frac{1}{\sqrt{a^2\lVert \vecuse{h}\rVert^2 + 1}} \frac{2}{\pi}\tanh^{-1}\left(\frac{a\langle \vecuse{h}, \tilde{\vecuse{x}}\rangle}{\sqrt{a^2\lVert \vecuse{h}\rVert^2 + 1}}\right).
\end{align*}
In the case that $d=1$, one has $    C_{jk}^{\Im}(h)= (a^2h^2 + 1)^{-1/2} (2/\pi)\sinh^{-1}\left(ah\right),$ 
again matching a one-dimensional Hilbert transform entry (2.54) in \cite{king2009hilbert}.  

    \end{example}

The Examples \ref{ex:cauchy_spatial_a1} and \ref{ex:cauchy_spatial_a12} for $d=2$ are plotted in Figure~\ref{fig:spatial_comparison}. 
In these cases, evaluation of the asymmetric part does not involve any special functions. 

\subsection{Mat\'ern covariance}

The Mat\'ern covariance, defined as $
    C_{jk}^{\Re}(\vecuse{h})= 2^{1-\nu}\Gamma(\nu)^{-1}(a\lVert \vecuse{h}\rVert)^{\nu} K_\nu(a\lVert \vecuse{h}\rVert)$,
is a key model used in spatial statistics and related fields \citep{porcu2023mat}, where $K_\nu(z)$ is the modified Bessel function of the second kind. 
Asymmetric forms were studied and plotted carefully in \cite{yarger2023multivariate}. 
We present these results. 
While closed form $C_{jk}^{\Im}(\vecuse{h})$ remain limited to $d=1$, $C_{jk}^{\Im}(\vecuse{h})$ can be computed efficiently for $d=2$ through fast Fourier transforms \citep{yarger2023multivariate}, which we use for Figure~\ref{fig:spatial_comparison}.

\begin{example}[Mat\'ern]\normalfont
    Suppose that $\nu \notin \{1/2, 3/2, \dots\}$ and $d=1$. Then, based on (8I.2) of \cite{king2009hilbert} and 3.771.1 of \cite{gradshteyn2014table} we have
    \begin{align*}
    C_{jk}^{\Im}(h)&=\frac{\pi \textrm{sign}(h)}{2^\nu\Gamma(\nu)\cos(\pi\nu)}
     (a|h|)^{\nu}\left\{I_{\nu}(a|h|) - L_{-\nu}(a|h|)\right\},
\end{align*}where $I_{\nu}(z)$ is a modified Bessel and $L_{\nu}(z)$ is a modified Struve function \citep{NIST:DLMF}.
\end{example}

The gives the asymmmetric Mat\'ern cross-covariance for most $\nu >0$. 
We next present $\nu = 1/2$, where the Mat\'ern covariance reduces to an exponential covariance. 

\begin{example}[Exponential]\normalfont\label{ex:exp_asymmetric}
Take  $\nu = 1/2$ and $d=1$ with $C_{jk}^{\Re}(h) = \textrm{exp}\left(-a|h|\right)$. Then we have $$C_{jk}^{\Im}(h) = \frac{\textrm{sign}(h)}{\pi}\{\textrm{exp}(a|h|)E_1(a|h|)+ \textrm{exp}(-a|h|)\textrm{Ei}(a|h|)\},$$
where $E_1(z)$ and $\textrm{Ei}(z)$ are commonly-defined exponential integrals \citep[Chapter 6 of][]{NIST:DLMF}. This follows from the Hilbert transform (3.3) in the tables of \cite{king2009hilbert}.
\end{example}

For $d > 1$, the desired form for the Mat\'ern    implied by Proposition \ref{prop:schoen_asymmetric} is \begin{align*}
    C_{jk}^{\Im}(\vecuse{h}) &= \frac{1}{2^{2\nu}\Gamma(\nu)}\int_0^\infty \textrm{exp}\left(-v\lVert \vecuse{h}\rVert^2 \right)\textrm{erfi}\left(\sqrt{v}\langle \vecuse{h}, \tilde{\vecuse{x}}\rangle\right) v^{\nu-1} \textrm{exp}\left(-\frac{1}{4v}\right)dv.
\end{align*}
The integral's value has been determined in the case where $\langle \vecuse{h}, \tilde{\vecuse{x}}\rangle = \lVert \vecuse{h}\rVert$ to match the $d=1$ case \citep[for example, 1.17.71 of][]{oberhettinger2012tables}, but the extension to general $\langle \vecuse{h}, \tilde{\vecuse{x}}\rangle$ does not appear to be established. 

\subsection{Other covariance functions}

In \cite{yarger2025multivariate}, fast Fourier transforms were used to compute $C_{jk}^{\Im}(\vecuse{h})$ for the Confluent Hypergeometric covariance function, which has control over both the origin and tail behavior of the covariance. 
However, closed forms of $C_{jk}^{\Im}(\vecuse{h})$ were not established.
While the mixing density $g(v)$ is Beta-prime \citep[Theorem 1 of][]{ma2022beyond}, the integral corresponding to Proposition \ref{prop:schoen_asymmetric} is unclear. 

Compact covariance functions, where $C_{jk}^{\Re}(\vecuse{h}) = 0$ for all $\lVert\vecuse{h}\rVert > c $ for some $c >0$, have also been proposed in spatial or temporal settings \citep[e.g.][]{bevilacqua2025parsimonious, faouzi_spacetime_2025}. 
By introducing zero entries in the covariance matrix, matrix calculations can be improved with these covariance functions. 
A compactly supported $C^{\Re}_{jk}(\vecuse{h})$ does not usually translate to a compactly supported $C^{\Im}_{jk}(\vecuse{h})$.
For example, consider the triangular function $C_{jk}^{\Re}(h) = (1-|h|) \mathbb{I}(0\leq |h| < 1)$, which is a valid covariance function for (only) $d=1$ \citep{chiles2012geostatistics}.
It has Hilbert transform (9.6) of \cite{king2009hilbert}:
\begin{align*}
    C_{jk}^{\Im}(h) &= \frac{1}{\pi}\left(\log\left|\frac{1+h}{1-h}\right| + h\log\left|\frac{h^2-1}{h^2}\right|\right),
\end{align*}which satisfies $C_{jk}^{\Im}(h) = 0$ only when $h=0$.

\subsection{Multivariate nonseparability and flexibility}

Above, we assumed throughout that $f_{j}(\vecuse{x}) = f_{k}(\vecuse{x})$. 
In Appendix \ref{app:vary_params}, we present the results when the squared-exponential ($a_j\neq a_k$), Cauchy ($a_j\neq a_k$, $\alpha_j\neq \alpha_k$), and Mat\'ern ($a_j \neq a_k$, $\nu_j \neq \nu_k$) covariances have different parameters. 
As presented above, we need only check if $\matusesigma{\sigma}$, corresponding to \eqref{eq:spatial_asym_decomp}, is Hermitian and positive definite for validity.
However, consider that $\tilde{\vecuse{x}}$ may vary across cross-covariances, denoted $\tilde{\vecuse{x}}_{jk} = \tilde{\vecuse{x}}_{kj}$ for $j\neq k$. 
In this case for $p \geq 3$, one needs the positive-definiteness for all relevant $\vecuse{x}\in \mathbb{R}^d$ of $\left[\Re(\sigma_{jk}) - \I\textrm{sign}(\langle \vecuse{x}, \tilde{\vecuse{x}}_{jk}\rangle)\Im(\sigma_{jk})\right]_{j,k=1}^p$. 

\section{Reflective asymmetric univariate space-time covariances}\label{sec:univariate_st}

We next discuss the univariate space-time setting; the asymmetric functions in the previous section will be our building blocks for asymmetry in space-time covariances. 
We will use $a_{\vecuse{s}}$ ($a_t$) as an inverse spatial (temporal) range parameter where necessary to distinguish, as well as marginal covariances $C_{\vecuse{s}}(\vecuse{h})$ and $C_t(u)$.

We propose this construction for asymmetric models in the spectral domain: \begin{align}\label{eq:space_time_constr_density}
    f(\vecuse{x}, \eta) &= \sigma f^*(\lVert \vecuse{x}\rVert , |\eta|) \left\{1 + \xi \textrm{sign}(\langle \vecuse{x}, \tilde{\vecuse{x}}\rangle)\textrm{sign}(\eta)\right\},
\end{align}where $f^*(\lVert \vecuse{x}\rVert , |\eta|)$ is the symmetric spectral density of a space-time covariance, $\sigma > 0$ is a variance parameter, $\xi \in (-1, 1)$ is an asymmetry parameter with $\xi = 0$ corresponding to symmetry, and $\tilde{\vecuse{x}}$ is again a fixed unit vector of length $d$. 
We demonstrate the applicability of Bochner's Theorem as presented in Theorem \ref{thm:st_bochner}.  \begin{enumerate}
    \item[(a)]
    The density is symmetric with respect to both arguments at the same time: $f(-\vecuse{x}, -\eta) = \sigma f^*(\lVert - \vecuse{x}\rVert , \lvert-\eta\rvert) \left\{1 + \xi (-1)(-1)\textrm{sign}(\langle \vecuse{x}, \tilde{\vecuse{x}}\rangle)\textrm{sign}(\eta)\right\} = f(\vecuse{x}, \eta)$.
    \item[(b)] The positivity and integrability of $f(\vecuse{x},\eta)$ is ensured when $f^*(\lVert \vecuse{x}\rVert, |\eta|)$ is positive and integrable. 
    As $0 < 1 - |\xi| \leq  1 + \xi \textrm{sign}(\langle \vecuse{x}, \tilde{\vecuse{x}}\rangle)\textrm{sign}(\eta) \leq 1 + |\xi| < 2$ for $\xi \in (-1, 1)$, one has $(1- |\xi|)\sigma f^*(\lVert\vecuse{x} \rVert, |\eta|) \leq f(\vecuse{x},\eta) \leq (1 + |\xi|) \sigma f^*(\lVert \vecuse{x} \rVert, |\eta|)$.
\end{enumerate}

We outline specific forms of these models, at first when $f^*(\lVert \vecuse{x}\rVert, |\eta|)$ is separable. 
\subsection{Separable-type models}\label{sec:separable_type}
We first consider building space-time models of the form \eqref{eq:space_time_constr_density} based on separable covariance models, where $f^*(\lVert \vecuse{x}\rVert, |\eta|) = f_{\vecuse{s}}(\lVert \vecuse{x}\rVert) f_{t}(|\eta|)$, and the integrals separate:\begin{align}
        C(\vecuse{h}, u) &= \sigma\int_{\mathbb{R}^d} \textrm{exp}\left(\I\langle\vecuse{h},  \vecuse{x}\rangle\right)f_{\vecuse{s}}(\lVert \vecuse{x} \rVert) d\vecuse{x}  \int_{\mathbb{R}}  \textrm{exp}\left(\I u\eta\right) f_{t}(|\eta|) d\eta \nonumber \\
        &~~~+\sigma \xi \int_{\mathbb{R}^d} \textrm{exp}\left(\I \langle \vecuse{h},  \vecuse{x}\rangle\right)\textrm{sign}(\langle \vecuse{x}, \tilde{\vecuse{x}}\rangle) f_{\vecuse{s}}(\lVert \vecuse{x} \rVert)d\vecuse{x}  \int_{\mathbb{R}}  \textrm{exp}\left(\I u\eta\right) \textrm{sign}(\eta)f_{t}(|\eta|) d\eta \nonumber \\
        &=: \sigma \left\{C_{\vecuse{s}}^{\Re}(\vecuse{h}) C_t^{\Re}(u) +\xi C_{\vecuse{s}}^\Im(\vecuse{h}) C_t^\Im(u) \right\}.\label{eq:time_separable}
\end{align}
This covariance is a sum of two separable parts, one symmetric and the other asymmetric, and is no longer separable when $\xi \neq 0$. 
In this form of \eqref{eq:time_separable}, the key observation is that the spatial and temporal parts correspond to cross-covariances in Section \ref{sec:spatial_cc}, as $\textrm{sign}(\langle \vecuse{x} , \tilde{\vecuse{x}}\rangle)\textrm{sign}(\eta) = -\left\{-\I \textrm{sign}(\langle \vecuse{x} , \tilde{\vecuse{x}}\rangle)\right\}\left\{-\I \textrm{sign}(\eta)\right\}$.
We present an example from the large variety of combinations of covariances that may be chosen. 



\begin{example}\normalfont\label{ex:cauchyhalf}
Suppose we take a squared-exponential spatial covariance and a Cauchy temporal covariance with $\alpha = 1/2$: $
         C_{\vecuse{s}}^{\Re}(\vecuse{h}) = \textrm{exp}\left(-a_{\vecuse{s}}^2\lVert \vecuse{h}\rVert^2\right)$ and $ C_t^{\Re}(u) =(a_t^2u^2 + 1)^{-1/2}$. 
The corresponding asymmetric separable-type space-time covariance is \begin{align*}
     C(\vecuse{h}, u) 
     &= \sigma \frac{\textrm{exp}\left(-a_{\vecuse{s}}^2 \lVert \vecuse{h}\rVert^2\right)}{\sqrt{a_t^2 u^2 + 1}}\left\{1 + \xi \textrm{erfi}\left(a_{\vecuse{s}}\langle \vecuse{h}, \tilde{\vecuse{x}}\rangle\right)\frac{2}{\pi}\sinh^{-1}(a_tu)\right\}.
\end{align*}
    
\end{example}

We next plot Example \ref{ex:cauchyhalf} with $d=1$ as an illustrative example in Figure \ref{fig:separable_asymmetric}. 
Interestingly, the asymmetric part can introduce negative covariance for some spatial and temporal lags. 
Negative correlation in space-time covariances is considered previously in \cite{mateu_recent_2008} and \cite{horrell2017half}, for example. 

\begin{figure}[ht]
    \centering
        \includegraphics[width=0.99\linewidth]{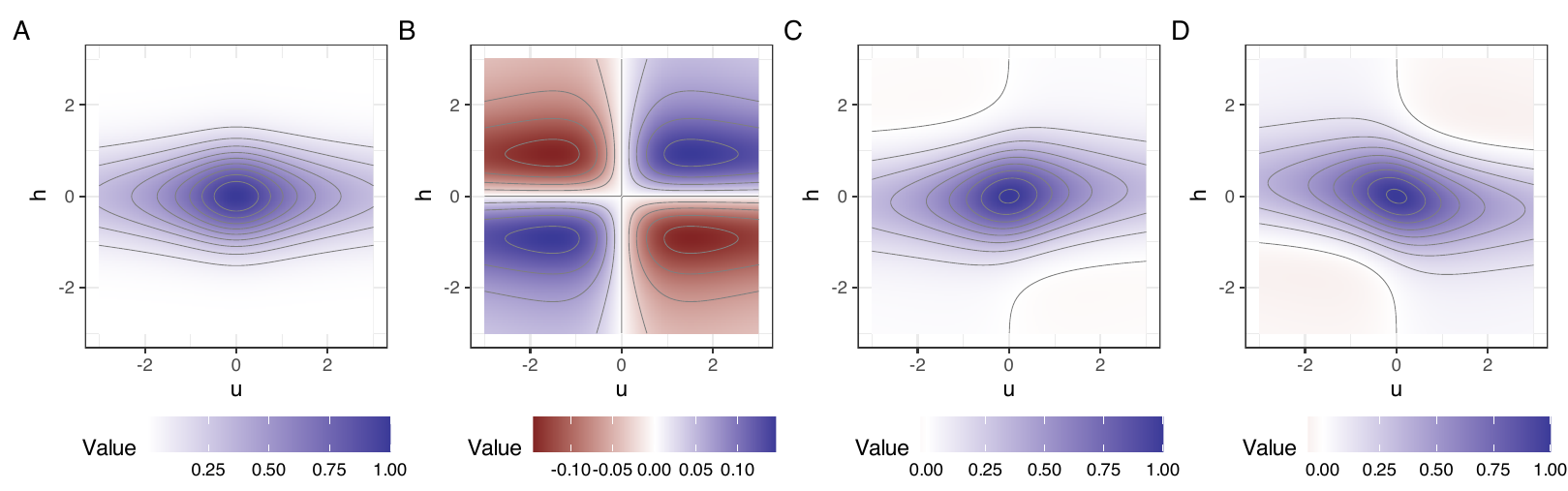}

    \caption{Separable-type covariances for $d=1$ and plots based on squared-exponential (spatial) and Cauchy (temporal) model with $a_s = a_t = 1$, $\alpha = 1/2$, and $\sigma = 1$. (A) The symmetric case $\xi = 0$, (B) the asymmetric part only when taking $\xi = 1$ for illustration, (C) $\xi = 0.4$, (D) $\xi = -0.9$.}
    \label{fig:separable_asymmetric}
\end{figure}

We discuss a few additional properties of the general model for any such separable-type construction.

\begin{remark}[Separability]\normalfont
    When we take \eqref{eq:time_separable}, $C_{\vecuse{s}}^{\Im}(\vecuse{0}) = C_{t}^{\Im}(0) = 0$, and $C_{\vecuse{s}}^{\Re}(\vecuse{0}) = C_{t}^{\Re}(0)=1$, the separability measure becomes $
        R(\vecuse{h},u) 
        = \left[\sigma\right]^2  \xi C_{\vecuse{s}}^{\Im}(\vecuse{h}) C_t^{\Im}(u),$ and thus introduces both positive and negative separability (as in Figure~\ref{fig:separable_asymmetric}).
\end{remark}

\begin{remark}[Axial symmetry]\normalfont
If $\xi \neq 0$, the covariance function is spatially axially symmetric \citep[as in Definition \ref{def:axial} and][]{park2008testing} for the $k$-th dimension if and only if the $k$-th entry of $\tilde{\vecuse{x}}$ is $0$. 

\end{remark}

\begin{remark}[Marginal covariances]\normalfont
    The marginal covariances are retained by the introduction of asymmetry: $
        C(\vecuse{h}, 0) = \sigma C_{\vecuse{s}}^{\Re}(\vecuse{h})$ and $
        C(\vecuse{0}, u) = \sigma C_t^{\Re}(u)$, as $C_{\vecuse{s}}^{\Im}(\vecuse{0}) = C_t^{\Im}(0) = 0$, assuming one has chosen $C_{\vecuse{s}}^{\Re}(\vecuse{0}) = C_t^{\Re}(0) = 1$ without loss of generality. 
\end{remark}

We give another example in Figure~\ref{fig:2dvary_u_h} with $d=2$. 
Asymmetry is introduced along the direction $\tilde{\vecuse{x}} = (1,1)^\top/\sqrt{2}$, while the covariance is symmetric along $(1, -1)^\top /\sqrt{2}$. 

These models represent a simple construction to introduce asymmetry in common separable space-time covariances. 
There are only $1 + (d-1)$ additional parameters needed to describe the covariance through $\xi$ (the asymmetry strength) and $\tilde{\vecuse{x}}$ (the asymmetry direction for $d > 1$), compared to the $O(d^2)$ parameters in the full Lagrangian model. 
The parameters $(\xi, \tilde{\vecuse{x}})$ represent the same model as $(-\xi, -\tilde{\vecuse{x}})$, so one may choose to reparameterise so that $0 \leq \xi < 1$ after estimation.

\begin{figure}[th]
    \centering
    \includegraphics[width=0.98\linewidth]{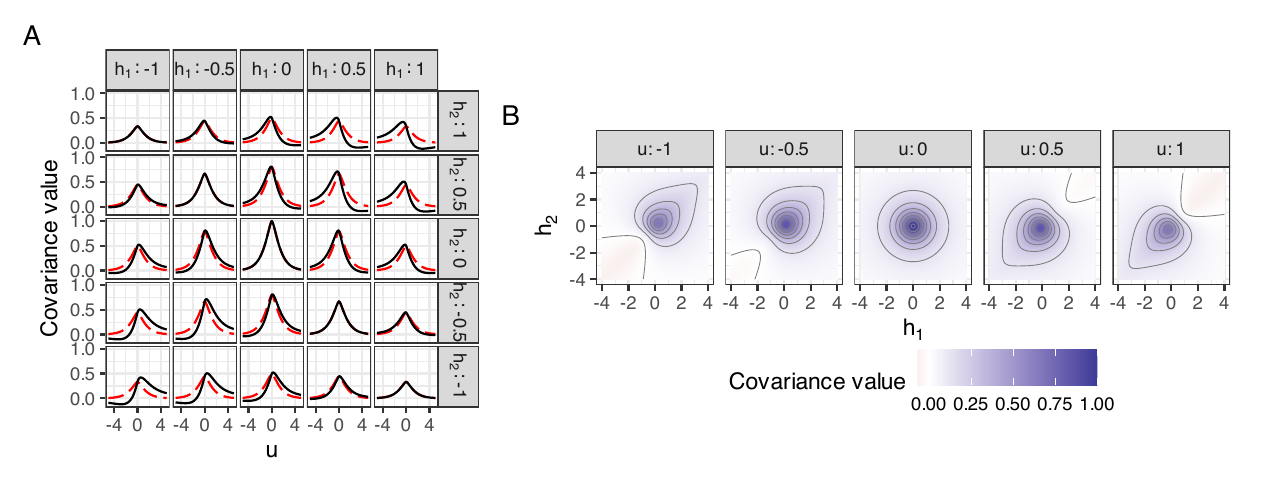}
    \caption{Cross-sections of the separable-type covariances for $d=2$ using the Cauchy (spatial) and Mat\'ern (temporal) model with $a_{\vecuse{s}} = a_t = 1$, $\alpha = 1$, $\nu = 1$, $\sigma = 1$, $\tilde{\vecuse{x}} = (1,1)^\top / \sqrt{2}$ and $\xi = -0.9$. (A) Covariance as a function of $u$ for varying $\vecuse{h} = (h_1, h_2)^\top$, with the symmetric model $\xi = 0$ also plotted in dashed red; (B) covariance as a function of $\vecuse{h}$ for varying $u$.}
    \label{fig:2dvary_u_h}
\end{figure}

\subsection{Squared-exponential and Cauchy asymmetric Gneiting models}

We establish more involved asymmetric covariances of the Gneiting covariance type \citep{gneiting_nonseparable_2002}. 
Now $\sigma f^*(\lVert \vecuse{x}\rVert , |\eta|)$ represents the spectral density of a form of \eqref{eq:gne_general}.
Here, the integrals across $\vecuse{x}$ and $\eta$ will not separate resulting in a more challenging form. 
The closed forms are limited to $d=1$, but we provide some half-spectral densities for general $d$ in Appendix \ref{app:half_spec}. 
We focus on the choice of $\gamma(u) = a_t^2u^2$, whose form is among recent Gneiting models studied \citep{allard2025modeling}.

\begin{proposition}[$d=1$ squared-exponential]\normalfont\label{prop:sq_exp_asymmetric_gneiting}

Consider the Gneiting model in \eqref{eq:gne_general} with $d=1$ and $\varphi(a_{\vecuse{s}}\lVert \vecuse{h}\rVert^2) = \textrm{exp}\left(-a_{\vecuse{s}}^2 \lVert \vecuse{h}\rVert^2\right)$. The asymmetric squared-exponential covariance where $\gamma(u) = a_t^2u^2$ is:
\begin{align*}
        C(h,u)&= \sigma \frac{1}{\sqrt{a_t^2u^2 + 1}} \textrm{exp}\left(-\frac{a_s^2 h^2}{a_t^2u^2+1}\right) \left\{1 + \xi \textrm{erf}\left(\frac{a_sh}{\sqrt{a_t^2u^2+1}}a_tu\right) \right\}.
\end{align*}

\end{proposition}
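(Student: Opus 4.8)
The plan is to apply Bochner's representation (Theorem \ref{thm:st_bochner}) to the spectral density \eqref{eq:space_time_constr_density} and to evaluate the two resulting integrals. First I would record that, with $\gamma(u)=a_t^2u^2$, $b=1$, $\tau=1/2$ and $\varphi(r)=\textrm{exp}(-a_s^2 r)$, the symmetric Gneiting form \eqref{eq:gne_general} is $C^{\Re}(h,u)=\sigma\psi^{-1/2}\textrm{exp}(-a_s^2h^2/\psi)$ with $\psi:=a_t^2u^2+1$; this is the term produced by the factor $1$ in \eqref{eq:space_time_constr_density}. The task is then to evaluate the asymmetric term $\xi C^{\textrm{asym}}(h,u)$, where $C^{\textrm{asym}}$ is the Fourier transform of $\sigma f^*\,\textrm{sign}(x)\textrm{sign}(\eta)$. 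Using the parity of $f^*$ (even in each argument) together with the identity $\textrm{sign}(x)\textrm{sign}(\eta)=-\{-\I\,\textrm{sign}(x)\}\{-\I\,\textrm{sign}(\eta)\}$, this integral reduces to the iterated sine transform $C^{\textrm{asym}}(h,u)=-4\int_0^\infty\!\!\int_0^\infty \sin(hx)\sin(u\eta)\,\sigma f^*(x,\eta)\,d\eta\,dx$, so the computation factors into a Hilbert-type transform in time followed by one in space.

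Second, I would carry out the temporal integral first. Because $C^{\Re}(\cdot,u)$ is Gaussian in $h$, its spatial spectral density is itself Gaussian, giving the intermediate (spatial-spectral, temporal-physical) object $G(x,u):=\int_{\mathbb{R}}\textrm{exp}(\I u\eta)\sigma f^*(|x|,|\eta|)\,d\eta=\frac{\sigma}{2a_s\sqrt{\pi}}\textrm{exp}\{-(a_t^2u^2+1)x^2/(4a_s^2)\}$, which is Gaussian in $u$ as well. Its sine-transform partner $H(x,u):=2\int_0^\infty \sin(u\eta)\sigma f^*(x,\eta)\,d\eta$ is then obtained by applying Proposition \ref{prop:sq_exp_spat} (the squared-exponential Hilbert transform) in the time variable with effective range $a_t|x|/(2a_s)$, yielding $H(x,u)=G(x,u)\,\textrm{erfi}\{a_t|x|u/(2a_s)\}$. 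Substituting this back leaves the single spatial integral $C^{\textrm{asym}}(h,u)=-\frac{\sigma}{a_s\sqrt{\pi}}\int_0^\infty \sin(hx)\,\textrm{exp}(-cx^2)\,\textrm{erfi}(d_1 x)\,dx$, with $c=\psi/(4a_s^2)$ and $d_1=a_tu/(2a_s)$.

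The main obstacle is this last integral, which mixes a Gaussian, an $\textrm{erfi}$, and a sine. I would evaluate it by differentiating under the integral sign in $d_1$ (legitimate since $c-d_1^2=1/(4a_s^2)>0$ guarantees uniform convergence); using $\int_0^\infty x\sin(hx)\textrm{exp}(-\beta x^2)\,dx=\tfrac{\sqrt{\pi}}{4}h\beta^{-3/2}\textrm{exp}(-h^2/4\beta)$ with $\beta=c-d_1^2$ collapses the $x$-integral and gives, after integrating back in $d_1$ from $0$, the one-parameter integral $\frac{h}{2}\int_0^{d_1}(c-t^2)^{-3/2}\textrm{exp}\{-h^2/(4(c-t^2))\}\,dt$. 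The key manoeuvre is the substitution $t=\sqrt{c}\,\sin\theta$ followed by $w=\tan\theta$, which turns the integrand into a pure Gaussian $\textrm{exp}\{-(h^2/4c)(1+w^2)\}$; its upper limit becomes $w=d_1/\sqrt{c-d_1^2}=a_tu$, and the resulting Gaussian integral produces exactly $\textrm{erf}\{a_s a_t hu/\sqrt{\psi}\}$ once $\sqrt{c}=\sqrt{\psi}/(2a_s)$ is substituted. Collecting the prefactors yields $C^{\textrm{asym}}(h,u)=\sigma\psi^{-1/2}\textrm{exp}(-a_s^2h^2/\psi)\,\textrm{erf}\{a_s a_t hu/\sqrt{\psi}\}$ (with the overall sign fixed by the identity above), and adding $C^{\Re}+\xi C^{\textrm{asym}}$ gives the stated closed form. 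The only genuinely delicate points are justifying the interchange of integration and differentiation and recognising the $\textrm{erfi}\to\textrm{erf}$ transmutation effected by the trigonometric substitution.
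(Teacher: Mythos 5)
Your proof is correct, and its overall architecture coincides with the paper's: you first integrate out the temporal frequency to reach the half-spectral form --- your $G(x,u)$ and $H(x,u)=G(x,u)\,\textrm{erfi}\left\{a_t|x|u/(2a_s)\right\}$ are precisely the symmetric and asymmetric pieces of the half-spectral density in Proposition \ref{prop:half_spectral}, which the paper obtains via Dawson's function and you obtain by applying Proposition \ref{prop:sq_exp_spat} in the time variable --- and you then evaluate the remaining spatial integral $\int_0^\infty \sin(hx)\,e^{-cx^2}\,\textrm{erfi}(d_1x)\,dx$. The genuine difference is in that last step. The paper disposes of it by citing the tabulated integral 2.13.1.5 of \cite{korotkov2020integrals}; you instead prove it: differentiate under the integral sign in $d_1$ (legitimate since $c-d_1^2=1/(4a_s^2)>0$), collapse the $x$-integral with the Gaussian moment $\int_0^\infty x\sin(hx)e^{-\beta x^2}dx=\tfrac{\sqrt{\pi}}{4}h\beta^{-3/2}e^{-h^2/(4\beta)}$, and convert $\tfrac{h}{2}\int_0^{d_1}(c-t^2)^{-3/2}\exp\left\{-h^2/\left(4(c-t^2)\right)\right\}dt$ into a pure Gaussian integral via $t=\sqrt{c}\sin\theta$, $w=\tan\theta$, with upper limit $d_1/\sqrt{c-d_1^2}=a_tu$. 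I checked the endpoint and the prefactors: this reproduces the table entry exactly, so your argument is self-contained and independently verifies the identity the paper leans on. One wobble to flag: your reduction to $-4\int_0^\infty\!\int_0^\infty\sin(hx)\sin(u\eta)\,\sigma f^*\,d\eta\,dx$ is the literally correct bookkeeping, and carried through it yields $1-\xi\,\textrm{erf}(\cdot)$ rather than the stated $1+\xi\,\textrm{erf}(\cdot)$; your closing remark that the sign is ``fixed by the identity above'' glosses over this. The discrepancy is immaterial --- it amounts to the relabeling $(\xi,\tilde{\vecuse{x}})\mapsto(-\xi,-\tilde{\vecuse{x}})$ that the paper itself notes, and the paper's own proof drops the two factors of $\I$ in exactly the same way --- but it would be cleaner to state it explicitly as a reparameterization of $\xi$ rather than gesture at it.
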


This form now involves the regular error function which is also odd for $y \in \mathbb{R}$: $\textrm{erf}(y) = -\textrm{erf}(-y)$. 
The asymmetric part of the covariance is also nonseparable. 
Since $|\textrm{erf}(y)| \leq 1$ for all $y \in \mathbb{R}$, the covariance satisfies $C(h,u) \geq 0$ for all $h$ and $u$, in contrast to Example \ref{ex:cauchyhalf}.
As suggested in \cite{gneiting_nonseparable_2002}, one can consider a version with a separability parameter $b \in [0,1]$, with $b=0$ corresponding to separability.

\begin{example}[$d=1$ squared-exponential, separability parameter]\normalfont\label{ex:sq_exp_gneiting_separability}
Let $b\in[0,1]$, $\delta > 0$, and $\tau = b/2 + \delta > b/2$. 
Then we have the valid space-time covariance \begin{align*}
    C(h,u) &= \frac{\sigma}{\left(a_t^2 u^2+1\right)^\tau}\textrm{exp}\left\{- \frac{a_s^2h^2}{\left(a_t^2u^2 + 1\right)^b}\right\} \left[1 + \xi\textrm{erf}\left\{\frac{a_sh}{\left(a_t^2u^2 + 1\right)^{\frac{b}{2}}}a_tu\right\}\right].
\end{align*}
\end{example}

We plot examples of this model in Figure~\ref{fig:gneiting_separability_par}, where $\xi = b = 0$ corresponds to separability, and $\xi =0 $ corresponds to symmetry. 
The model gives intricate control of both the nonseparability and asymmetry. 

\begin{figure}[!ht]
    \centering
    \includegraphics[width=0.95\linewidth]{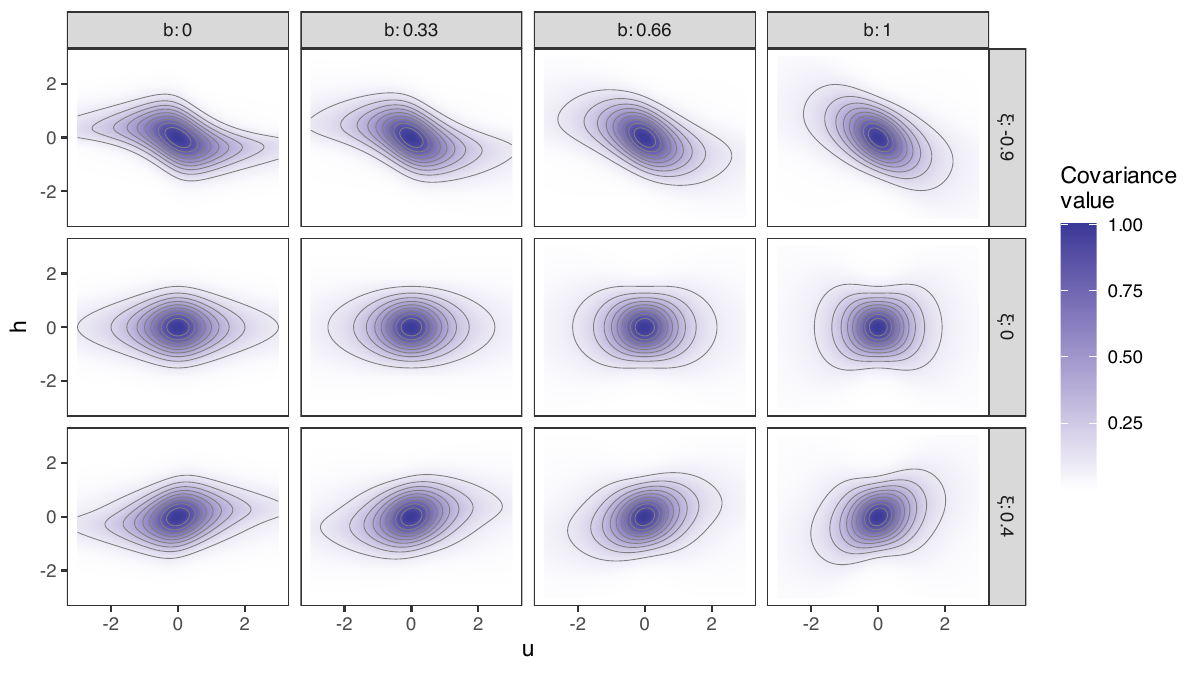}
    \caption{Gneiting-type covariances as in Example \ref{ex:sq_exp_gneiting_separability} with separability parameter $b$ and asymmetry parameter $\xi$ for $d=1$ with $a_s = a_t = 1$, $\sigma = 1$, and $\delta = 1$. }
    \label{fig:gneiting_separability_par}
\end{figure}

We present a Cauchy covariance by mixing the form of Proposition \ref{prop:sq_exp_asymmetric_gneiting}. 
Additional versions, including for general $\alpha$, are given in Appendix \ref{app:gneiting}.

\begin{example}[$\alpha = 1/2$, $d=1$ Cauchy]\normalfont
Let $h_*= h/\sqrt{a_t^2u^2 + 1}$. The Cauchy-Gneiting nonseparable covariance with $\alpha = 1/2$ is 
  \begin{align*}
    C(h,u)&=\frac{\sigma}{\sqrt{a_t^2u^2 + 1}} \frac{1}{\sqrt{a_s^2h_*^2 + 1}}\left\{1 + \xi \frac{2}{\sqrt{\pi}} \frac{\sqrt{a_s^2h_*^2 +  1}}{a_tu a_sh_*}\sinh^{-1}\left(\frac{a_t^2u^2 a_s^2h_*^2}{a_s^2h_*^2+ 1}\right) \right\}
    \end{align*}if $ h\neq 0$ and $u \neq 0$. If $h =0$ or $u = 0$, we have  $C(h,u) =\sigma/\left(\sqrt{a_t^2u^2 + 1}\sqrt{a_s^2h_*^2 + 1}\right)$.
\end{example}

These forms extend a few Gneiting covariances with only one additional parameter $\xi$. 
Explicit forms for $d > 1$ require further study.

\section{Practical considerations}\label{sec:practical}

We next discuss three practical aspects of the new space-time models: unconditional simulation, nonstationary and anisotropic models, and computational details. 

\subsection{Efficient spectral simulation}

We propose the following algorithm for simulating random fields with separable-type asymmetric covariance. 
For $\ell = 1, 2, \dots, L$, we simulate $\phi_\ell \sim \textrm{Uniform}[0, 2\pi]$, $\psi_\ell \sim \textrm{Uniform}[0, 2\pi]$, $\vecuse{x}_\ell \sim g_{\vecuse{s}}$, and $\eta_\ell \sim g_t$, where all are independent from each other. Here $g_{\vecuse{s}}$ and $g_t$ are spatial and temporal proposal densities, respectively. 
Following \cite{emery2016improved} for the spatial case, we propose to simulate: \begin{align*}
    Y(\vecuse{s},t) &= \frac{\sigma}{\sqrt{L}} \sum_{\ell = 1}^L 2\cos(\langle \vecuse{s}, \vecuse{x}_\ell\rangle  + \phi_\ell )\cos(t \eta_\ell  + \psi_\ell )\sqrt{\frac{f_{\vecuse{s}}(\lVert \vecuse{x}_\ell\rVert)}{g_{\vecuse{s}}(\lVert \vecuse{x}_\ell\rVert)}\frac{f_t(| \eta_\ell |)}{g_t(| \eta_\ell |)}} \\
    &~~+\frac{\sigma\xi}{\sqrt{L}} \sum_{\ell = 1}^L 2\sin(\langle \vecuse{s}, \vecuse{x}_\ell\rangle  + \phi_\ell )\sin(t \eta_\ell  + \psi_\ell )\sqrt{\frac{f_{\vecuse{s}}(\lVert \vecuse{x}_\ell\rVert)}{g_{\vecuse{s}}(\lVert \vecuse{x}_\ell\rVert)}\frac{f_t(| \eta_\ell |)}{g_t(| \eta_\ell |)}}\textrm{sign}(\langle \vecuse{x}_\ell, \tilde{\vecuse{x}}\rangle)\textrm{sign}(\eta_\ell).
\end{align*}
For each $\ell$, arguments from \cite{emery2016improved}, including trigonometric identities, show the desired parts of the covariance can be represented in expectation. 
See also \cite{arroyo_algorithm_2021, emery2018continuous, allard2025modeling}, and extensions to Gneiting-type models as in \cite{allard_simulating_2020} may also be formulated. 
We give an example in Figure~\ref{fig:bigsim}. 
With $\tilde{\vecuse{x}} = (1, 0)^\top$ and $\xi > 0$, there is more correlation between positive $h_1$ and positive $u$ than negative $h_1$ and positive $u$, inducing, for example, a general left-to-right asymmetric movement over time. 

\begin{figure}[th]
    \centering
    \includegraphics[width=0.95\linewidth]{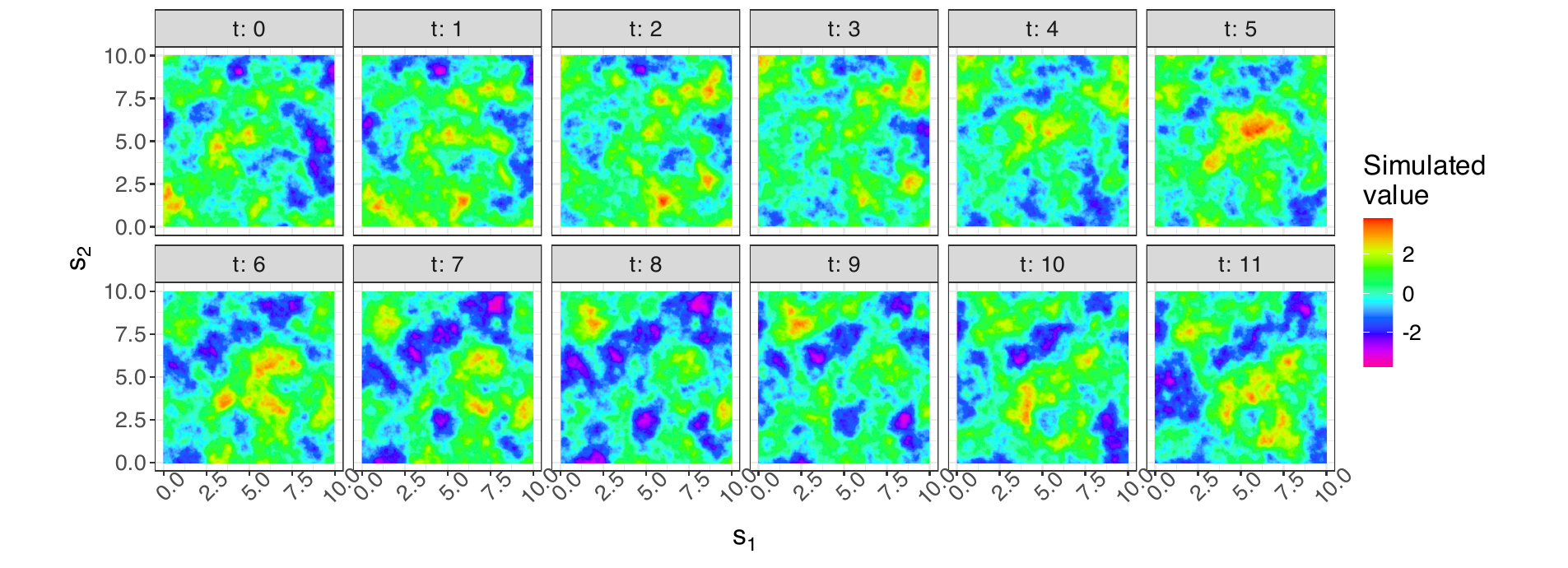}
    \caption{Simulation at different $(\vecuse{s} \in \mathbb{R}^2,t)$ on a grid of $301 \times 301 \times 12$. The spatial covariance is Mat\'ern with $\nu_{\vecuse{s}} = 1$ and $a_{\vecuse{s}} = 1$. The temporal covariance is Mat\'ern with $\nu_t = 2$ and $a_t = 1$. The other parameters are $\sigma = 1$, $\xi = 0.9$, $\tilde{\vecuse{x}} = (1,0)^\top$. The proposals were Mat\'ern spectral densities with $\nu = 0.5$ and $a = 1$, and $L = 30{,}000$. }
    \label{fig:bigsim}
\end{figure}

\subsection{Nonstationary and anisotropic models}

For the spatial portion of the model, nonstationarity through geometric anisotropy may be introduced, where the covariance depends on the spatial locations in addition to the spatial lag $\vecuse{h}$ (nonstationarity) and will satisfy $C(\vecuse{h}_1, 0) \neq C(\vecuse{h}_2, 0)$ for two $\vecuse{h}_1, \vecuse{h}_2 \in \mathbb{R}^d$ such that $\lVert \vecuse{h}_1\rVert = \lVert \vecuse{h}_2 \rVert$ (anisotropy).
See \cite{paciorek_spatial_2006}, \cite{emery2018continuous}, and \cite{allard2025modeling}, for example. 
In particular, one may specify a $d\times d$ spatially-varying positive-definite matrix $\matuse{\Sigma_{\vecuse{s}}}$ representing a particular Mahalanobis distance used at $\vecuse{s}$. 

We start with the squared-exponential model by replacing the spectral density $f_{\vecuse{s}}(\lVert \vecuse{x}\rVert)$ with $
     f_{\vecuse{s}_1, \vecuse{s}_2}(\vecuse{x})  = \left(2\sqrt{\pi}\right)^{-d} \left|\matuse{\Sigma}_{\vecuse{s}_1}\right|^{1/4}\left|\matuse{\Sigma}_{\vecuse{s}_2}\right|^{1/4}\textrm{exp}\left(- \vecuse{x}^\top\matuse{\Sigma}_{\vecuse{s}_1, \vecuse{s}_2} \vecuse{x}/4\right),$ where $\matuse{\Sigma}_{\vecuse{s}_1, \vecuse{s}_2} = (\matuse{\Sigma}_{\vecuse{s}_1} + \matuse{\Sigma}_{\vecuse{s}_2})/2$.
It straightforwardly follows from previous results in this paper that this gives an asymmetric covariance between locations $\vecuse{s}_1$ and $\vecuse{s}_2$ of $
C_{\vecuse{s}_1, \vecuse{s}_2}(\vecuse{h},u) = \left|\matuse{\Sigma}_{\vecuse{s}_1}\right|^{1/4}\left|\matuse{\Sigma}_{\vecuse{s}_2} \right|^{1/4}\left|\matuse{\Sigma}_{\vecuse{s}_1, \vecuse{s}_2}\right|^{-1/2}C\left(\matuse{\Sigma}_{\vecuse{s}_1, \vecuse{s}_2}^{-1/2}\vecuse{h}, u \right),$ 
where $C(\vecuse{h}, u)$ is a separable-type model from Section \ref{sec:separable_type} with squared-exponential spatial covariance.
The spatial asymmetric portion $C_{\vecuse{s}_1, \vecuse{s}_2}^\Im(\vecuse{h})$ is now reflected across the plane perpendicular to $\matuse{\Sigma}_{\vecuse{s}_1, \vecuse{s}_2}^{-1/2}\tilde{\vecuse{x}}$ instead of perpendicular to $\tilde{\vecuse{x}}$, inducing a spatially-varying asymmetry direction. 
In this formulation, the parameter $a_{\vecuse{s}}$ is absorbed into $\matuse{\Sigma}_{\vecuse{s}}$, and forms may be constructed for other classes \citep{emery2018continuous}. 
In Figure~\ref{fig:anisotropic_space_time}, we plot stationary $(\matuse{\Sigma}_\vecuse{s} = \matuse{\Sigma})$ asymmetric covariances with anisotropy. 

\begin{figure}[th]
    \centering
    \includegraphics[width=0.85\linewidth]{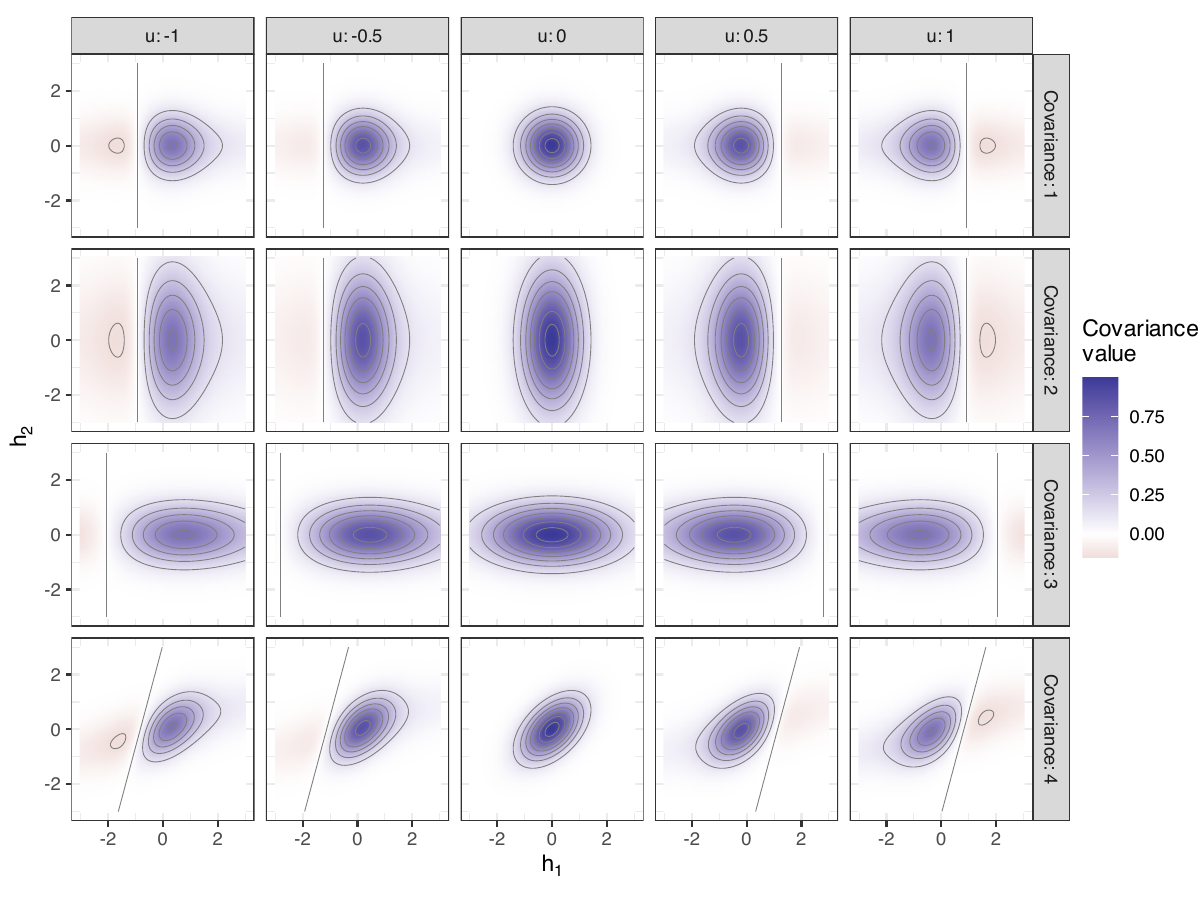}
    \caption{Space-time covariances with anisotropy. The spatial covariance is squared-exponential and the temporal is Mat\'ern with parameter $\nu = 1$ and $a_t = 1$, with $\sigma = 1$, $\xi = -0.9$, and $\tilde{\vecuse{x}} =(1,0)^\top $. The covariances 1 through 4 have $\matuse{\Sigma} = \textrm{diag}(1,1)$, $\matuse{\Sigma} = \textrm{diag}(1,5)$, $\matuse{\Sigma} = \textrm{diag}(5,1)$, $\matuse{\Sigma} =[\mathbb{I}(i = j) + 0.5\mathbb{I}(i\neq j)]_{i,j=1}^2$, respectively. }
    \label{fig:anisotropic_space_time}
\end{figure}

\subsection{Computational implementation}\label{sec:computation}

We discuss three computational aspects of the model. 
While the Lagrangian model \eqref{eq:lagrangian} requires matrix operations of size $d\times d$ for each combination of $u$ and $\vecuse{h}$, the evaluation of $C(\vecuse{h},u)$ we introduce only requires the possible evaluation of special functions. 
We note that many functions involved in the asymmetric covariances are available in C++ \citep{faddeeva, galassi2002gnu} with interfaces in \texttt{R} \citep{rcpp_gsl}. 
In Table \ref{tab:speed_comparison}, we compare evaluation times in \texttt{R} using versions of $\textrm{erfi}(z)$ through \cite{rcpp_faddeeva} and \cite{faddeeva}, the hypergeometric function through \cite{gsl_R}, and the exponential integrals through \cite{expint}. 
For the special cases of the Cauchy model $\alpha = 1/2$ and $\alpha = 1$, evaluation of the asymmetric part takes essentially the same time as the symmetric part. 
Evaluation of the functions $\textrm{erfi}(z)$ (squared-exponential), ${}_2F_1(a,b; c, z)$ (Cauchy), and $\textrm{Ei}(z)$ and $E_1(z)$ (exponential) require additional computation time, especially for the hypergeometric function.


\begin{table}[!ht]
    \centering
    \begin{tabular}{|ccc rr|}\hline
         Covariance Function & Function(s) & Implementation & $C_{jk}^{\Re}(h)$ & $C_{jk}^{\Im}(h)$ \\ \hline \hline
         Sq.~Exp. & $\textrm{erfi}(z)$ & (\texttt{Rcpp})\texttt{Faddeeva} & 0.069 &  0.152 \\ 
         Cauchy ($\alpha$) &${}_2F_1(a,b; c; z)$& \texttt{gsl} &  0.071 & 1.211 \\ 
         Cauchy ($\alpha = 1/2$) & $\sinh^{-1}(z)$ & \texttt{base} & 0.070 & 0.077 \\ 
        Cauchy ($\alpha = 1$) & $z$ (linear) & \texttt{base}& 0.064 &0.066\\ 
        Exponential & $\textrm{Ei}(z)$, $E_1(z)$ & \texttt{expint} & 0.065 & 0.157\\\hline
    \end{tabular}
    \caption{In seconds, the time of 50{,}000 evaluations of the cross-covariances components in $d=1$. }
    \label{tab:speed_comparison}
\end{table}

There is also a contrast with the Lagrangian model in terms of the number of covariance evaluations. In the setting where data is observed on a grid of space (of size $n_{\vecuse{s}}$) and time (regularly-spaced of size $n_t$) with $n = n_{\vecuse{s}} n_t$ total data points, the Lagrangian model requires computing \eqref{eq:lagrangian} on the order of $n_tn_{\vecuse{s}}^2$ times to construct the covariance matrix of the data. 
For a separable-type model introduced here, the $n\times n$ symmetric covariance matrix may instead be represented using the Kronecker product $\otimes$: 
    $\sigma \matuse{\Psi}_{\vecuse{s}}^\Re \otimes \matuse{\Psi}_{t}^\Re +  \sigma  \xi \matuse{\Psi}_{\vecuse{s}}^\Im \otimes \matuse{\Psi}_{t}^\Im$, where the matrices $\matuse{\Psi}$ are $n_{\vecuse{s}} \times n_{\vecuse{s}}$ for the subscript $\vecuse{s}$, $n_t\times n_t$ for the subscript $t$, symmetric for the superscript $\Re$, and skew-symmetric for the superscript $\Im$ (the Kronecker product of two skew-symmetric matrices is symmetric). 
The order of the number of covariance evaluations is then $n_t + n_{\vecuse{s}}^2$, a substantial reduction when $n_t$ and $n_{\vecuse{s}}$ are both moderately large.

However, unlike the separable model, where the likelihood evaluation can be reduced to operations on matrices of size $n_{\vecuse{s}} \times n_{\vecuse{s}}$ and $n_t\times n_t$, when $\xi \neq 0$ the likelihood still requires decomposing the full $n\times n$ matrix. 
This can be prohibitive when the total number of data points $n$ is large. 
We propose to use Vecchia's approximation to make approximate likelihood inference manageable in this setting \citep{vecchia1988estimation, katzfuss_vecchia_2020}. 
For this approximation, conditional independence is assumed between many data points based on a neighborhood structure. 
Vecchia's approximation for basic space-time models are implemented in, for example, the \texttt{GpGp} package in \texttt{R} \citep{GpGp}, as well as for Lagrangian models in \cite{ma2025asymmetric}. 
\cite{idir_improving_2025} recently evaluate the choice of neighborhood structure for (symmetric) space-time models. 
We implement many of the newly introduced models within the framework of \texttt{GpGp} in C++, which enables large-scale data analysis later. 
The Kronecker product matrix representation does not necessarily apply within Vecchia's approximation, as an observation's neighbors generally will not be gridded.

\section{Simulation studies}\label{sec:simulation}
We next consider simulation studies for both multivariate spatial data and univariate space-time data. 
We discuss parameter estimation, improved model fit when there is asymmetry, and in the space-time setting that the new asymmetric models are substantially different from the Lagrangian space-time model. 

\subsection{Multivariate spatial simulation}

We simulate bivariate spatial data according to the squared-exponential, Cauchy, and Mat\'ern models with $d=2$. 
For each, we consider $\Re(\sigma_{12}) \in \{-0.4, 0, 0.4\}$, $\Im(\sigma_{12}) \in \{0, 0.4\}$, $\sigma_{11} = \sigma_{22} = 1$, and $\tilde{\vecuse{x}} = (1,1)^\top/\sqrt{2}$. A nugget variance of $0.1$ is included for each process. 
We vary the sample size $n \in \{100, 200, 400, 600\}$, and take the two processes to be colocated with random uniform locations on $[0,1]\times [0,1]$. 
We take $a_1 = 12$ and $a_2 = 18$, using the cross-covariance forms in Appendix \ref{app:vary_params}, as well as $\alpha_1 = \alpha_2 = 1$ for the Cauchy and $\nu_1 = \nu_2 = 1$ for the Mat\'ern. 
Simulations were replicated $100$ times for each $n$, covariance, and configuration of $\sigma_{12}$.

\begin{figure}[th]
    \centering
\includegraphics[width=0.96\linewidth]{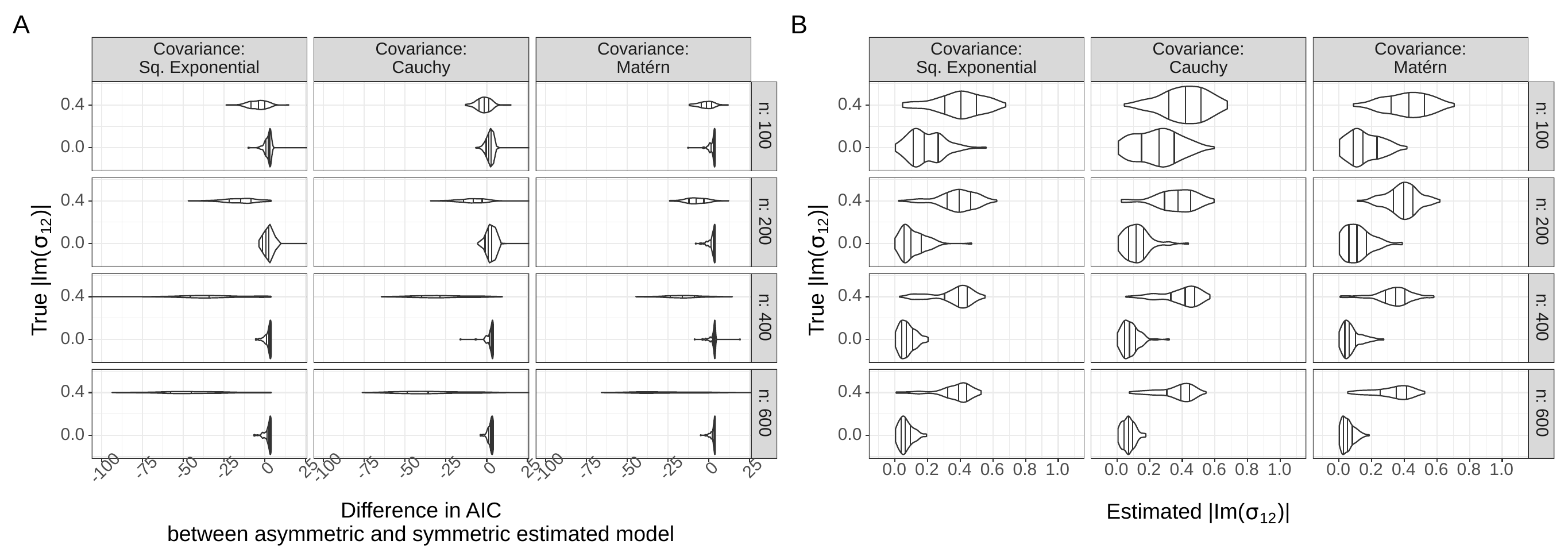}
    \caption{Multivariate spatial simulation results. Lines on violin plots represent the $0.25$, $0.50$, and $0.75$ quantiles. (A) AIC comparison of asymmetric and symmetric models when the covariance class is correctly specified and $\Re(\sigma_{12}) = 0.4$. (B) Estimates of $|\Im(\sigma_{12})|$ when the covariance class is correctly specified and $\Re(\sigma_{12}) = 0.4$. Note $\Im(\sigma_{12})$ might only be identified up to a sign. }
    \label{fig:mult_sim_fig}
\end{figure}

We estimate asymmetric and symmetric versions of each model by maximum likelihood through the L-BFGS-B algorithm \citep{byrd1995limited}.
For the Mat\'ern model, a fast Fourier transform is used to compute the covariance on a regular grid, which is then interpolated onto the lags in the data. 
In the optimization, the asymmetry direction is estimated as $\tilde{\vecuse{x}} = \{\sin(\zeta), \cos(\zeta)\}^\top$ through the single parameter $\zeta$.
We evaluate estimation of the asymmetric cross-covariance in Figure~\ref{fig:mult_sim_fig}. 
In general, the asymmetric model has much better fit in terms of the Akaike information criterion (AIC) when the sample size is large enough and there is an asymmetric portion to the model.
When there is no asymmetric portion, the AIC is slightly worse due to the increased number of parameters. 
Furthermore, for each of the covariance classes, the parameter $|\Im(\sigma_{12})|$ appears to be estimable, with improved estimation as $n$ increases. 
In Appendix \ref{app:sim_results}, we also plot AIC when the covariance class is misspecified and evaluate a likelihood ratio test for asymmetry.

\subsection{Univariate space-time simulation}

We next consider a space-time model with spatial dimension $d=1$.
We take consider sample sizes of size $n = n_sn_t$, with $n_s \in \{20, 50, 100\}$ and $n_t =25$.
Locations in space and time are generated on $[0,1]$. 
We consider when the observations are colocated (a two-dimensional random grid of dimensions $n_s$ and $n_t$ is used) or not colocated ($n_sn_t$ observations generated uniformly and independently on $[0,1]\times [0,1]$). 

We simulate response data $Y(s,t)$ for 100 replications according to symmetric and asymmetric versions of the squared-exponential, Cauchy, exponential, and Gneiting squared-exponential (as in Example \ref{ex:sq_exp_gneiting_separability}) covariances. 
We take $\sigma = 1$ and $\xi \in \{0, 0.4, 0.8\}$.
We take $a_s =10$ and $a_t = 15$. 
For the Gneiting model, we take $\tau = 1/2$ and $b = 0.7$.
In addition, we simulate according to a squared-exponential Lagrangian model \eqref{eq:lagrangian} with $\mathbb{E}({V}) = 5$ and $\textrm{Var}({V}) = 225$. 
This choice of $\textrm{Var}({V}) = 225$ approximately matches the choice of $a_t = 15$ considering the correspondence between Lagrangian and Gneiting covariances when $\mathbb{E}({V})=0$ \citep{salvana_daouia_lagrangian_2021}.

\begin{figure}[th]
    \centering
    \includegraphics[width=0.96\linewidth]{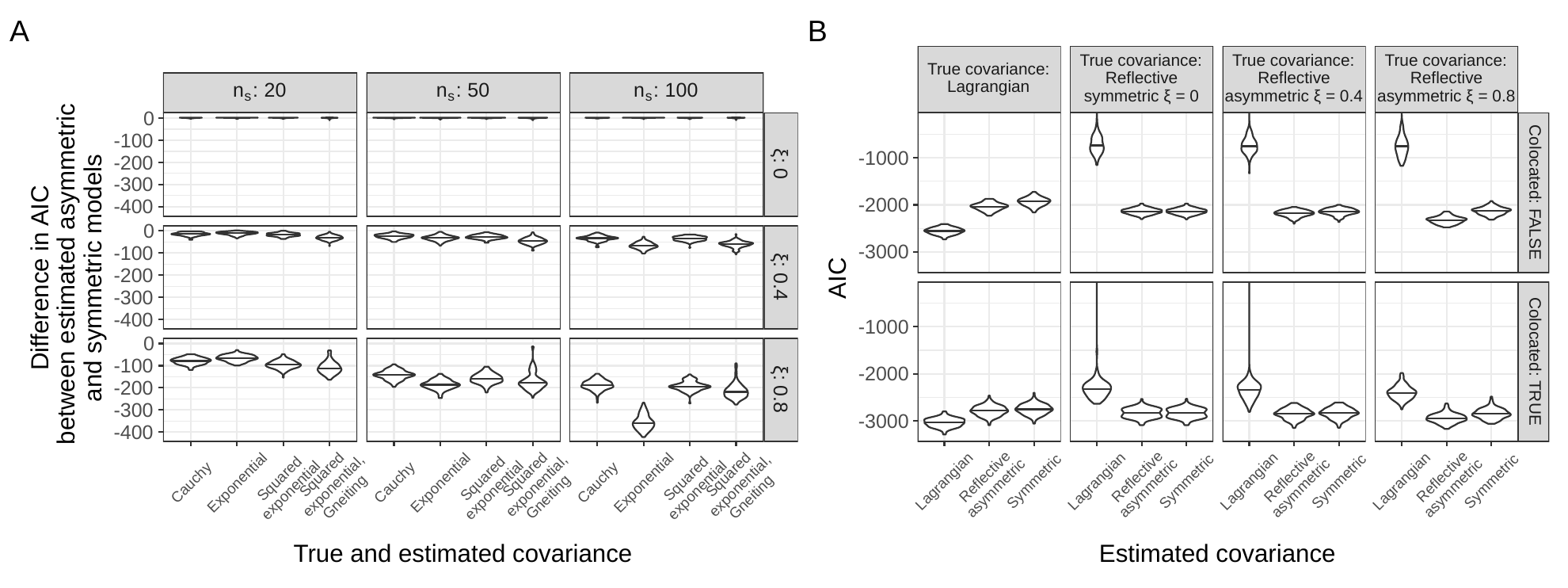}
    \caption{Univariate space-time simulation results. The lines represent medians. (A) AIC comparison of the asymmetric and symmetric models when the covariance class is correctly specified and observations are not colocated. (B) AIC comparison of the Lagrangian and reflective asymmetric squared-exponential models when $n_s = 100$. }     \label{fig:sim_uni_st}
\end{figure}

We consider estimation of symmetric and asymmetric versions of the models using the direct covariances without Vecchia's approximation, with results presented in Figure~\ref{fig:sim_uni_st}. 
Similar to the multivariate spatial case, the symmetric and asymmetric versions perform similarly in terms of AIC when the true model is symmetric. 
However, when there is asymmetry, estimating the asymmetric part can considerably improve model fit for each covariance class. 
We also provide a comparison with the Lagrangian covariance. 
When the covariance is symmetric and separable ($\xi = 0$), the Lagrangian model performs substantially worse than symmetric model in terms of AIC. 
This contrasts with the reflective asymmetric model, which has the separable, symmetric model is nested inside of it. 
When the true covariance is Lagrangian, the symmetric and reflective asymmetric covariance models perform somewhat worse than the Lagrangian model. 
These results suggest that, while incorporating asymmetry into space-time covariances is important, the type of asymmetry specified can also play a outsized role in model fit. 
In Appendix \ref{app:sim_results}, we evaluate the likelihood ratio test for asymmetry based on the symmetric and reflective asymmetric models. 
In general, we find that the test performs well when the covariance class is correctly specified in terms of Type I error rate and power, though Type I error may be somewhat inflated if there is misspecification of the covariance function class. 

\section{Data analysis}\label{sec:data_analysis}

The Irish wind data, originally studied in \cite{haslett1989space}, has frequently been used as a dataset for evaluation in the space-time covariance literature \citep{gneiting_nonseparable_2002, gneiting2006geostatistical, ma2025asymmetric}. 
The data consist of 11 locations with daily wind data from 1961 to 1978.
We follow similar pre-processing as \cite{gneiting2006geostatistical}.
The years 1961--1970 are used as training data, and to create mean-zero residuals, we subtract off a location effect and 6 seasonal harmonics after applying a square root transformation. 
Locations ($d=2$) were translated from longitude and latitude to distance in kilometers in the cardinal directions.  

We aim to estimate the model by maximum likelihood using the L-BFGS-B algorithm \citep{byrd1995limited}. 
However, the dataset consists of $n_t = 3{,}652$ time points at the $n_{\vecuse{s}} = 11$ locations, with $n = n_{\vecuse{s}}n_t = 40{,}172$ not amenable to full likelihood inference. 
We thus use the Vecchia implementation mentioned in Section \ref{sec:computation}, with 30 neighbors primarily chosen using the temporal dimension. 
That is, most of $(\vecuse{s}_i,t_j)$'s neighbors consist of $\{(\vecuse{s}_{i^*}, t_j), (\vecuse{s}_{i^*}, t_j- 1), (\vecuse{s}_{i^*}, t_j + 1)\}$, for different $\vecuse{s}_{i^*}$. 
The covariances are listed in Table \ref{tab:irish_wind_compare}. 
This includes various reflective separable-type asymmetric models and their separable, symmetric counterparts.
We include models in \texttt{GpGp} of \texttt{exponential\_spacetime} and \texttt{matern\_spacetime}, which take
    $C(\vecuse{h},u) = \sigma C_0\left(\sqrt{a_{\vecuse{s}}^2\lVert \vecuse{h}\rVert^2 + a_t^2u^2}\right)$, where $C_0(z)$ is an exponential or Mat\'ern covariance.
We also implement the Lagrangian model \eqref{eq:lagrangian}. 
For all models, we use the same neighbors and include a location-specific and time-specific nugget effect with variance $\tau^2$.

\begin{table}[ht]
    \centering
    \begin{tabular}{|cccrrrr|}\hline
        Cov ($\vecuse{h}$) & Cov ($u$) & Type & $\ell(\hat{\mb{\theta}} | \vecuse{Y})$ & Parameters  & AIC &  Time (secs) \\ \hline\hline
        Sq.~Exp. & Cauchy ($\alpha_t = 1$) & Sym &$-18256$ & 4 &36520&  8\\ 
        Sq.~Exp. & Cauchy ($\alpha_t = 1$) & Asym &$ -18092$ & 6 &36196& 40\\ 
        Sq.~Exp. & Cauchy ($\alpha_t = \frac{1}{2}$) & Sym &$-18147$ & 4 &36302 & 9\\ 
        Sq.~Exp. & Cauchy ($\alpha_t = \frac{1}{2}$) & Asym &$-17984$ & 6 &35979& 60\\  
        Sq.~Exp. & Sq.~Exp.  & Sym &$-18565$ & 4 & 37138  &  11\\ 
        Sq.~Exp. & Sq.~Exp. & Asym &$ -18417$ & 6 & 36845 & 60\\  
        Cauchy ($\alpha_{\vecuse{s}} = \frac{1}{2}$) & Cauchy ($\alpha_t = \frac{1}{2}$) & Sym &$-17375$ & 4&34758 & 7\\ 
        Cauchy ($\alpha_{\vecuse{s}} = \frac{1}{2}$) & Cauchy ($\alpha_t = \frac{1}{2}$) & Asym &$-17231$ & 6 & 34475& 42\\  
        Cauchy ($\alpha_{\vecuse{s}}$) & Cauchy ($\alpha_t$) & Sym &$-16430$ & 6 &32873&  81\\ 
        Cauchy ($\alpha_{\vecuse{s}}$) & Cauchy ($\alpha_t$) & Asym &$-16337$ & 8 &32690& 2405\\  
        \multicolumn{2}{|c}{GpGp exponential\_spacetime} & Sym &$-19323$ & 4 &38654& 20\\  
        \multicolumn{2}{|c}{GpGp matern\_spacetime} & Sym &$-18434$ & 5 &36878& 1555\\  
        \multicolumn{2}{|c}{Lagrangian Sq.~Exp.} & Asym &$-18269$ &8& 36555& 127 \\ \hline
    \end{tabular}
    \caption{Optimization summaries of the estimated covariance models, with the symmetric and corresponding reflective asymmetric models presented in the first ten rows. Cauchy models had the parameter $\alpha$ either fixed or estimated. 
    The value $\ell(\hat{\mb{\theta}} | \vecuse{Y})$ is the maximised log-likelihood. The time unit is seconds.  }
    \label{tab:irish_wind_compare}
\end{table}

We compare model summaries in Table \ref{tab:irish_wind_compare}. 
With the same number or fewer parameters, reflective asymmetric models with a Cauchy temporal covariance have higher log-likelihood than the Lagrangian model. 
Each reflective asymmetric model improves over its symmetric counterpart. 
The models which have both Cauchy spatial and temporal covariances outperform all other models in terms of AIC. 
In addition, many of the covariances provide improvement over the standard \texttt{GpGp} models. 
While the asymmetric models take more time, the difference is not drastic, with exception of the Cauchy model where $\alpha_{\vecuse{s}}$ and $\alpha_t$ are estimated.
All other reflective asymmetric models are estimated more quickly than the Lagrangian model.

\begin{figure}[th]
    \centering
    \includegraphics[width=0.98\linewidth]{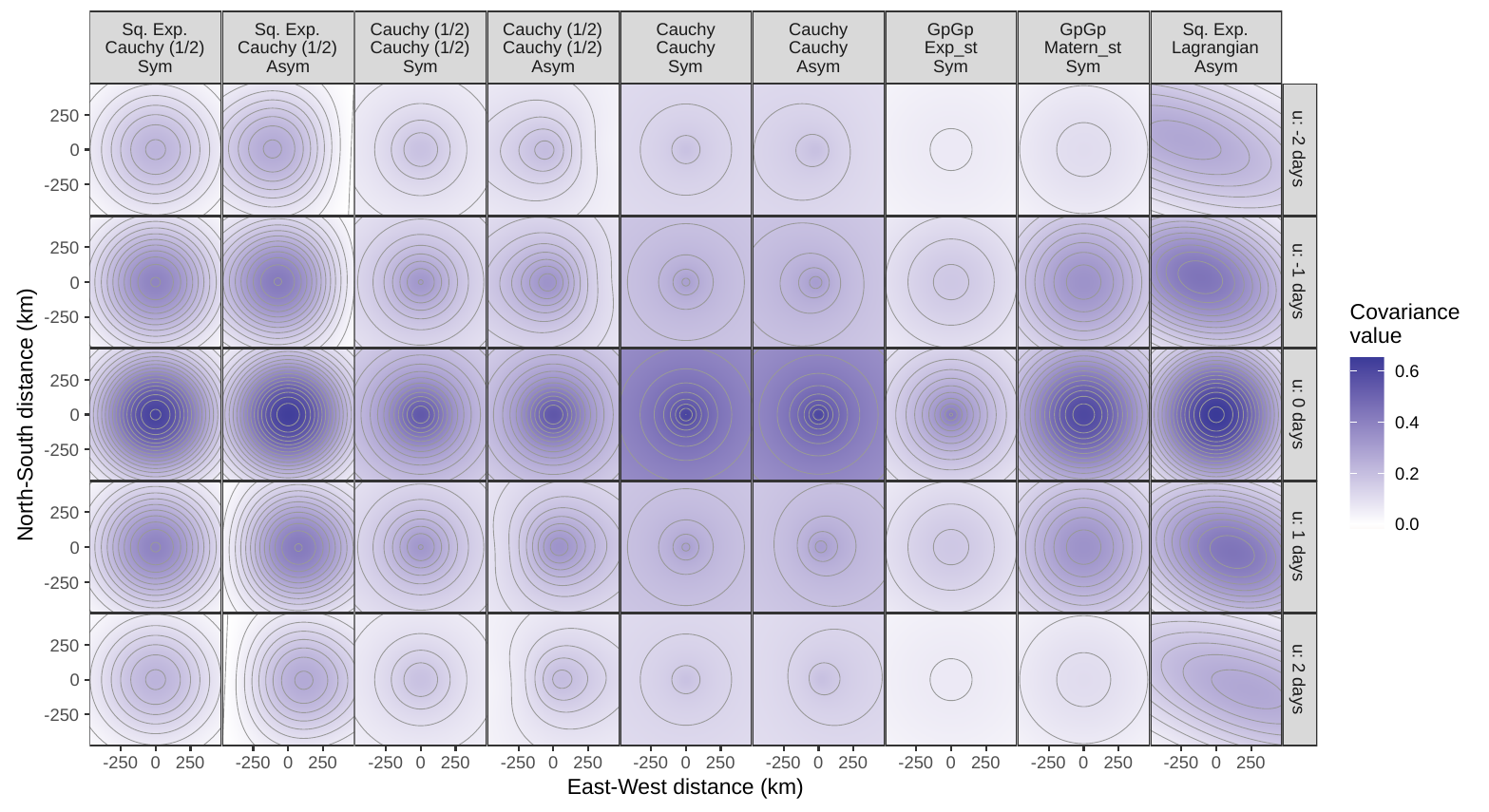}
    \caption{Selected estimated covariance functions $C(\vecuse{h},u)$ plotted for different $u$ for the Irish wind data. The top labels are the spatial covariance, temporal covariance, and if the covariance is symmetric or asymmetric, and match the order of Table \ref{tab:irish_wind_compare}. }
    \label{fig:vecchia_spat_comparison}
\end{figure}

Some selected covariance functions are compared in Figure~\ref{fig:vecchia_spat_comparison}.
In general, the estimated marginal spatial covariance functions appear similar between symmetric and asymmetric versions of the model.
All asymmetric models indicate generally higher correlation for positive $u$ and positive East-West distance, compared to positive $u$ and negative East-West distance. 
This matches the expected West-to-East direction of atmospheric flow. 
The Lagrangian model has elliptical contours everywhere, which is not generally the case for the reflective asymmetric models. 

\begin{table}[ht]
    \centering
    \begin{tabular}{|cccrrrrrr|}\hline
        Cov ($\vecuse{h}$) & Cov ($u$) & Type& $\sigma$ & $a_{\vecuse{s}}$ & $a_t$  & $\tau^2$ & $\xi$ & $\textrm{atan2}(\tilde{\vecuse{x}})$  \\ \hline\hline
        Sq.~Exp. & Cauchy ($\alpha_t = 1$) & Sym& 0.57 &0.0024 & 0.84 & 0.07 & - & - \\ 
        Sq.~Exp. & Cauchy ($\alpha_t = 1$) & Asym&0.59 &0.0024 & 0.83 & 0.06 & 0.50 & $-1.6$ \\  
        Sq.~Exp. & Cauchy ($\alpha_t = \frac{1}{2}$)&Sym& 0.61 & 0.0024 &1.23 &  0.07& - & - \\ 
        Sq.~Exp. & Cauchy ($\alpha_t = \frac{1}{2}$) &Asym& 0.63 & 0.0024 &1.20 &  0.06  & 0.50 & $-2.2$  \\  
        Sq.~Exp. & Sq.~Exp.  & Sym&0.54 &0.0024 & 0.84 &  0.06& - & - \\ 
        Sq.~Exp. & Sq.~Exp. &Asym& 0.55 & 0.0024 & 0.83 & 0.07 & 0.48 & $-3.3$ \\  
        Cauchy ($\alpha_{\vecuse{s}} = \frac{1}{2}$) & Cauchy ($\alpha_t = \frac{1}{2}$) & Sym&0.54 &0.0051 & 1.33 & 0.04& - & - \\ 
        Cauchy ($\alpha_{\vecuse{s}} = \frac{1}{2}$) & Cauchy ($\alpha_t = \frac{1}{2}$) & Asym&0.54 &0.0051 & 1.31& 0.04 & 0.46 & ~~3.0  \\ 
        Cauchy ($\alpha_{\vecuse{s}}$) & Cauchy ($\alpha_t$) & Sym&0.61 &0.0182 & 2.56 & 0.01& - & - \\ 
        Cauchy ($\alpha_{\vecuse{s}}$) & Cauchy ($\alpha_t$) & Asym&0.60 &0.0170 & 2.38& 0.02 & 0.42 & ~~8.1 \\ 
        \multicolumn{2}{|c}{GpGp exponential\_spacetime} &Sym& 0.43 &0.0025 & 0.93 & 0.00 & - & - \\ 
        \multicolumn{2}{|c}{GpGp matern\_spacetime} &Sym& 0.60 &0.0055 & 2.00 & 0.07 & -&-\\ 
        \multicolumn{2}{|c}{Lagrangian Sq.~Exp.} & Asym & 0.65 & 0.0022 & - & 0.07 & -& - \\ \hline
    \end{tabular}
    \caption{Estimated covariance parameters. The column $\textrm{atan2}(\tilde{\vecuse{x}})$ represents angle from the vector $(1,0)^\top$ (directly East) converted to degrees. The symmetric and asymmetric Cauchy with general exponent had estimated $\alpha_{\vecuse{s}} = 0.11$ and $\alpha_{\vecuse{s}} =0.11$, respectively, and $\alpha_t = 0.36$ and $\alpha_t = 0.37$, respectively. The additional parameters for the Lagrangian model were $\vecuse{\mu}_{\vecuse{V}} = ( 114.0, -34.5)^\top$ in kilometers (corresponding to $\textrm{atan2}(\vecuse{\mu}_{\vecuse{V}}) = -16.9$ degrees), and $(\matuse{\Sigma}_{\vecuse{V}})_{11}= 111{,}921$, $(\matuse{\Sigma}_{\vecuse{V}})_{22} = 12{,}701$, and $(\matuse{\Sigma}_{\vecuse{V}})_{12} = (
\matuse{\Sigma}_{\vecuse{V}})_{21} = -37{,}666$. The GpGp matern\_spacetime model had estimated parameter $\nu = 2.35$. }
    \label{tab:irish_parameters}
\end{table}

We present the estimated parameters in Table \ref{tab:irish_parameters}. 
The symmetric and asymmetric versions of the models have comparable variance, inverse range, and nugget variance parameters. 
Furthermore, when the spatial covariance is of the same form, the inverse range parameters are similarly estimated regardless of the temporal covariance used. 
The parameter $\xi$ is estimated to be similarly strong for each of the new asymmetric models. 
The estimated parameter $\tilde{\vecuse{x}}$ points East in general as expected, matching the plot of covariance functions. 
The Cauchy models with $\alpha_{\vecuse{s}}$ and $\alpha_t$ estimated have these parameters less than $1/2$, suggesting heavier tail decay than this special case.

In Appendix \ref{app:prediction}, we present a prediction comparison, which shows improvements for asymmetric models.

\section{Discussion}\label{sec:discussion}

In this paper, we introduce a new approach that vastly increases the possible choices for asymmetric multivariate covariances and asymmetric space-time covariances. 
The key features of the new spatio-temporal models are flexibility over marginal covariance functions, model parsimony where asymmetric separable-type models represent a minimal increase of complexity over a separable model, and improvements in application and computation. 
Among other features, the model makes a likelihood ratio test for symmetry straightforward only involving the estimation of a symmetric and asymmetric model. 
Throughout, we make contrasts with the popular Lagrangian covariances that indicate a substantially different framework from them. 
We provide implementations both within and outside of the Vecchia approximation framework to make application to various space-time data feasible. 
In simulation studies and data analysis, we demonstrate that the models provide improved performance at times with faster computation. 
As in \cite{yarger2023multivariate}, a limitation is the lack of closed form Mat\'ern asymmetric cross-covariances when $d > 1$. 
Addressing this, as well as establishing Gneiting-type asymmetric models for $d > 1$, would additionally expand applicability of these models.

A number of additional directions for methodological advancement are opened by this work. 
For example, the extension to where the ``temporal'' dimension is Euclidean with dimension $\vecuse{u} \in \mathbb{R}^{d^*}$ and $d^* > 1$ is straightforward in this framework by using an appropriate choice of $\left\{C^{\Re}_t(\vecuse{u}), C^{\Im}_t(\vecuse{u})\right\}$. 
In addition, since the new models do not intersect with Lagrangian models, one could potentially combine these approaches for a very flexible asymmetric model.
The extension to multivariate space-time models includes the possibility for developing additional types of asymmetries in space-time cross-covariance functions \citep{bourotte_flexible_2016}.

Future research can evaluate these models in a large-scale data applications where handling nonstationarity and anisotropy is necessary and scientifically relevant. 
As the Irish wind data only included 11 locations, including nonstationary flexibility in space may not be sensible. 
However, an extensive data analysis of high-resolution climate model output \citep{donahue2024exascale} or oceanographic data \citep{kuusela_locally_2018} could make full use of the nonstationary models. 


\bibliographystyle{abbrvnat}
\bibliography{mGP}

@article{donahue2024exascale,
  title={To exascale and beyond—{The} {Simple} {Cloud}-{Resolving} {E3SM} {Atmosphere} {Model} {(SCREAM)}, a performance portable global atmosphere model for cloud-resolving scales},
  author={Donahue, Aaron Sheffield and Caldwell, Peter Martin and Bertagna, Luca and Beydoun, Hassan and Bogenschutz, Peter A and Bradley, AM and Clevenger, Thomas C and Foucar, J and Golaz, C and Guba, Oksana and others},
  journal={Journal of Advances in Modeling Earth Systems},
  volume={16},
  number={7},
  pages={e2024MS004314},
  year={2024},
  publisher={Wiley Online Library}
}

@book{garcia1985weighted,
  title={Weighted Norm Inequalities and Related Topics},
  author={Garc{\'\i}a-Cuerva, Jos{\'e}},
  volume={116},
  year={1985},
  publisher={Elsevier}
}

@article{felsberg2002monogenic,
  title={The monogenic signal},
  author={Felsberg, Michael and Sommer, Gerald},
  journal={IEEE Transactions on Signal Processing},
  volume={49},
  number={12},
  pages={3136--3144},
  year={2002},
  publisher={IEEE}
}

@book{granlund2013signal,
  title={Signal Processing for Computer Vision},
  author={Granlund, G{\"o}sta H and Knutsson, Hans},
  year={2013},
  publisher={Springer Science \& Business Media}
}

@book{abramowitz1948handbook,
  title={Handbook of Mathematical Functions with Formulas, Graphs, and Mathematical Tables},
  author={Abramowitz, Milton and Stegun, Irene A},
  volume={55},
  year={1948},
  publisher={US Government printing office}
}

@book{gradshteyn2014table,
  title={Table of Integrals, Series, and Products},
  author={Gradshteyn, Izrail Solomonovich and Ryzhik, Iosif Moiseevich},
  year={2014},
  publisher={Academic Press}
}

@article{allard2025modeling,
  title={Modeling and simulating spatio-temporal, multivariate and nonstationary {Gaussian} {Random} {Fields}: {A} {Gaussian} mixtures perspective},
  journal={hal-05034982},
  author={Allard, Denis and Benoit, Lionel and Obakrim, Said},
  year={2025}
}

@article{ezzat2018spatio,
  title={Spatio-temporal asymmetry of local wind fields and its impact on short-term wind forecasting},
  author={Ezzat, Ahmed Aziz and Jun, Mikyoung and Ding, Yu},
  journal={IEEE Transactions on Sustainable Energy},
  volume={9},
  number={3},
  pages={1437--1447},
  year={2018},
  publisher={IEEE}
}

@article{park2008testing,
  title={Testing lack of symmetry in spatial-temporal processes},
  author={Park, Man Sik and Fuentes, Montserrat},
  journal={Journal of Statistical Planning and Inference},
  volume={138},
  number={10},
  pages={2847--2866},
  year={2008},
  publisher={Elsevier}
}

@article{salvana2023spatio,
  title={Spatio-temporal cross-covariance functions under the {Lagrangian} framework with multiple advections},
  author={Salva{\~n}a, Mary Lai O and Lenzi, Amanda and Genton, Marc G},
  journal={Journal of the American Statistical Association},
  volume={118},
  number={544},
  pages={2746--2761},
  year={2023},
  publisher={Taylor \& Francis}
}

@article{stein2005space,
  title={Space-time covariance functions},
  author={Stein, Michael L},
  journal={Journal of the American Statistical Association},
  volume={100},
  number={469},
  pages={310--321},
  year={2005},
  publisher={Taylor \& Francis}
}

@Manual{expint,
    title = {expint: Exponential Integral and Incomplete Gamma Function},
    author = {Vincent Goulet},
    year = {2016},
    note = {R package version 0.1-8},
    url = {https://cran.r-project.org/package=expint},
  }

@Manual{rcpp_faddeeva,
    title = {RcppFaddeeva: `Rcpp' Bindings for the `Faddeeva' Package},
    author = {Baptiste Auguié and Dirk Eddelbuettel and Steven G. Johnson},
    year = {2022},
    note = {R package version 0.2.3},
    url = {https://github.com/nano-optics/RcppFaddeeva},
  }

@article{gsl_R,
    title = {Special functions in {R}: {Introducing} the gsl package},
    author = {Robin K. S. Hankin},
    journal = {R News},
    year = {2006},
    month = {October},
    volume = {6},
    issue = {4},
pages = {24-26}
  }

@book{galassi2002gnu,
  title={{GNU} Scientific Library Reference Manual (3rd Ed.)},
  author={Galassi, Mark and Davies, Jim and Theiler, James and Gough, Brian and Jungman, Gerard and Alken, Patrick and Booth, Michael and Rossi, Fabrice and Ulerich, Rhys},
  year={2002}
}

@article{gneiting_strictly_2007,
	title = {Strictly Proper Scoring Rules, Prediction, and Estimation},
	volume = {102},
	doi = {10.1198/016214506000001437},
	language = {en},
	number = {477},
	journal = {Journal of the American Statistical Association},
	author = {Gneiting, Tilmann and Raftery, Adrian E},
	month = mar,
	year = {2007},
	pages = {359--378},
	file = {PDF:/Users/dyarger/Zotero/storage/FYZK3WCW/Gneiting and Raftery - 2007 - Strictly Proper Scoring Rules, Prediction, and Estimation.pdf:application/pdf},
}

@article{zhang_non-fully_nodate,
	title = {Non-Fully Symmetric Space-Time {Mat{\'e}rn} Covariance Functions},
	abstract = {The problem of nonseparable space-time covariance functions is essential and diﬃcult in spatiotemporal data analysis. Considering that a fully symmetric space-time covariance function is inappropriate for many spatiotemporal processes, this article provides a way to construct a non-fully symmetric nonseparable space-time correlation function from any given marginal spatial Mat´ern and marginal temporal Mat´ern correlation functions. Using the relationship between a spatial Mat´ern correlation function and the characteristic function of a multivariate t-distribution, a modiﬁcation of Bochner’s representation is provided and a non-fully symmetric space-time Mate´rn model is obtained.},
	language = {en},
	author = {Zhang, Tonglin and Zhang, Hao},
year = {2017},
	file = {PDF:/Users/dyarger/Zotero/storage/DH37X86B/Zhang and Zhang - Non-Fully Symmetric Space-Time Mate´rn Covariance Functions.pdf:application/pdf},
}

@article{lim_gaussian_2009,
	title = {Gaussian fields and {Gaussian} sheets with generalized {Cauchy} covariance structure},
	volume = {119},
	doi = {10.1016/j.spa.2008.06.011},
	language = {en},
	number = {4},
	urldate = {2025-12-21},
	journal = {Stochastic Processes and their Applications},
	author = {Lim, S.C. and Teo, L.P.},
	month = apr,
	year = {2009},
	pages = {1325--1356},
	file = {PDF:/Users/dyarger/Zotero/storage/XV5JBUPI/Lim and Teo - 2009 - Gaussian fields and Gaussian sheets with generalized Cauchy covariance structure.pdf:application/pdf},
}

@article{byrd1995limited,
  title={A limited memory algorithm for bound constrained optimization},
  author={Byrd, Richard H and Lu, Peihuang and Nocedal, Jorge and Zhu, Ciyou},
  journal={SIAM Journal on Scientific Computing},
  volume={16},
  number={5},
  pages={1190--1208},
  year={1995},
  publisher={SIAM}
}

@book{faddeeva, 
author = {Johnson, S. G.},
title = {Faddeeva Package},
year = {2025},
note = {C++, \url{http://ab-initio.mit.edu/wiki/index.php/Faddeeva\_Package}}}

@book{GpGp, 
author = {Guinness, J and Katzfuss, M and Fahmy, Y},
year = {2021},
title = {{GpGp}: {Fast} {Gaussian} process computation using {Vecchia}’s approximation.},
note = {R package version 0.4.0}}

@article{idir_improving_2025,
	title = {Improving Spatio-temporal {Gaussian} {Process} Modeling with {Vecchia} Approximation: {A} Low-Cost Sensor-Driven Approach to Urban Environmental Monitoring},
	shorttitle = {Improving {Spatio}-temporal {Gaussian} {Process} {Modeling} with {Vecchia} {Approximation}},
	doi = {10.48550/arXiv.2511.22500},
	language = {en},
	urldate = {2025-12-01},
	publisher = {arXiv},
	author = {Idir, Yacine Mohamed and Orfila, Olivier and Chatellier, Patrice and Judalet, Vincent},
	month = nov,
	year = {2025},
	keywords = {Statistics - Methodology},
	file = {PDF:/Users/dyarger/Zotero/storage/SCH62Z7U/Idir et al. - 2025 - Improving Spatio-temporal Gaussian Process Modeling with Vecchia Approximation A Low-Cost Sensor-Dr.pdf:application/pdf},
}

@article{cressie1999classes,
  title={Classes of nonseparable, spatio-temporal stationary covariance functions},
  author={Cressie, Noel and Huang, Hsin-Cheng},
  journal={Journal of the American Statistical Association},
  volume={94},
  number={448},
  pages={1330--1339},
  year={1999},
  publisher={Taylor \& Francis}
}

@article{ma2025asymmetric,
  title={Asymmetric Space-Time Covariance Functions via Hierarchical Mixtures},
  author={Ma, Pulong},
  note={arXiv:2511.07959},
  year={2025}
}

@incollection{salvana_daouia_lagrangian_2021,
	title = {Lagrangian Spatio-Temporal Nonstationary Covariance Functions},
	language = {en},
	urldate = {2025-09-16},
	booktitle = {Advances in {Contemporary} {Statistics} and {Econometrics}},
	author = {Salva{\~n}a, Mary Lai O. and Genton, Marc G.},
	editor = {Daouia, Abdelaati and Ruiz-Gazen, Anne},
	year = {2021},
publisher = {Springer}
}

@article{noauthor_corrections_2024,
    author = {Zhang, Xiran and Salvaña,  Mary Lai O. and Lenzi, Amanda and Genton, Marc G.},
	title = {Corrections to “{Spatio}-Temporal Cross-Covariance Functions under the {Lagrangian} Framework with Multiple Advections”},
	volume = {119},
	doi = {10.1080/01621459.2024.2412190},
	language = {en},
	number = {548},
	urldate = {2025-11-04},
	journal = {Journal of the American Statistical Association},
	month = oct,
	year = {2024},
	pages = {3189--3189},
}

@article{vu_modeling_2023,
	title = {Modeling Nonstationary and Asymmetric Multivariate Spatial Covariances via Deformations},
	doi = {10.5705/ss.202020.0156},
	language = {en},
	urldate = {2024-09-16},
	journal = {Statistica Sinica},
	author = {Vu, Quan and Zammit-Mangion, Andrew and Cressie, Noel},
  volume={32},
  number={4},
  pages={2071--2093},
  year={2022},
  publisher={JSTOR}
}

@article{haslett1989space,
  title={Space-time modelling with long-memory dependence: Assessing {Ireland's} wind power resource},
  author={Haslett, John and Raftery, Adrian E},
  journal={Journal of the Royal Statistical Society: Series C (Applied Statistics)},
  volume={38},
  number={1},
  pages={1--21},
  year={1989},
  publisher={Wiley Online Library}
}

@article{fonseca2011general,
  title={A general class of nonseparable space-time covariance models},
  author={Fonseca, Tha{\'\i}s CO and Steel, Mark FJ},
  journal={Environmetrics},
  volume={22},
  number={2},
  pages={224--242},
  year={2011},
  publisher={Wiley Online Library}
}

@article{horrell2017half,
  title={Half-spectral space-time covariance models},
  author={Horrell, Michael T and Stein, Michael L},
  journal={Spatial Statistics},
  volume={19},
  pages={90--100},
  year={2017},
  publisher={Elsevier}
}

@book{korotkov2020integrals,
  title={Integrals Related to the Error Function},
  author={Korotkov, Nikolai E and Korotkov, Alexander N},
  year={2020},
  publisher={Chapman and Hall/CRC}
}

@article{katzfuss_vecchia_2020,
	title = {Vecchia Approximations of {Gaussian}-{Process} Predictions},
	volume = {25},
	language = {en},
	number = {3},
	urldate = {2021-09-16},
	journal = {Journal of Agricultural, Biological and Environmental Statistics},
	author = {Katzfuss, Matthias and Guinness, Joseph and Gong, Wenlong and Zilber, Daniel},
	month = sep,
	year = {2020},
	pages = {383--414},
}

@article{yarger2025multivariate,
      title={Multivariate {Confluent} {Hypergeometric} Covariance Functions with Simultaneous Flexibility over Smoothness and Tail Decay}, 
      author={Drew Yarger and Anindya Bhadra},
      year={2025},
volume = {57},
pages = {977–1001},
issue = {5},
      journal={Mathematical Geosciences},
}

@article{mu2024gaussian,
  title={Gaussian {Processes} for Time Series with Lead-Lag Effects with applications to biology data},
  author={Mu, Wancen and Chen, Jiawen and Davis, Eric S and Reed, Kathleen and Phanstiel, Douglas and Love, Michael I and Li, Didong},
  journal={Biometrics},
  volume={81},
  number={1},
  pages={ujae156},
  year={2025},
  publisher={Oxford University Press}
}

@article{genton15,
author = {Marc G. Genton and William Kleiber},
title = {Cross-Covariance Functions for Multivariate Geostatistics},
volume = {30},
journal = {Statistical Science},
number = {2},
publisher = {Institute of Mathematical Statistics},
pages = {147--163},
keywords = {asymmetry, Co-kriging, multivariate random fields, nonstationarity, separability, smoothness, spatial statistics, symmetry},
year = {2015},
doi = {10.1214/14-STS487},
}

@article{vecchia1988estimation,
  title={Estimation and model identification for continuous spatial processes},
  author={Vecchia, Aldo V},
  journal={Journal of the Royal Statistical Society Series B: Statistical Methodology},
  volume={50},
  number={2},
  pages={297--312},
  year={1988},
  publisher={Oxford University Press}
}

@article{emery2018continuous,
  title={On a continuous spectral algorithm for simulating non-stationary {Gaussian} random fields},
  author={Emery, Xavier and Arroyo, Daisy},
  journal={Stochastic Environmental Research and Risk Assessment},
  volume={32},
  pages={905--919},
  year={2018},
  publisher={Springer}
}

@article{arroyo_algorithm_2021,
	title = {Algorithm 1013: {An} {R} Implementation of a Continuous Spectral Algorithm for Simulating Vector {Gaussian} Random Fields in {Euclidean} Spaces},
	volume = {47},
	doi = {10.1145/3421316},
	language = {en},
	number = {1},
	urldate = {2024-02-18},
	journal = {ACM Transactions on Mathematical Software},
	author = {Arroyo, Daisy and Emery, Xavier},
	month = mar,
	year = {2021},
	pages = {1--25},
}

@Manual{rcpp_gsl,
    title = {RcppGSL: `Rcpp' integration for `GNU GSL' vectors and matrices},
    author = {Dirk Eddelbuettel and Romain Francois},
    year = {2023},
    note = {R package version 0.3.13},
    url = {https://CRAN.R-project.org/package=RcppGSL},
  }

@article{emery2022new,
  title={New validity conditions for the multivariate {M}at{\'e}rn coregionalization model, with an application to exploration geochemistry},
  author={Emery, Xavier and Porcu, Emilio and White, Philip},
  journal={Mathematical Geosciences},
  volume={54},
  number={6},
  pages={1043--1068},
  year={2022},
  publisher={Springer}
}

@article{ma2022beyond,
  title={Beyond {Mat{\'e}rn}: {On} a class of interpretable {Confluent} {Hypergeometric} covariance functions},
  author={Ma, Pulong and Bhadra, Anindya},
  journal={Journal of the American Statistical Association},
  pages={2045-2058},
  year={2023},
volume = {118}, 
number = {543},
  publisher={Taylor \& Francis}
}

@article{li_approach_2011,
	title = {An approach to modeling asymmetric multivariate spatial covariance structures},
	volume = {102},
	doi = {https://doi.org/10.1016/j.jmva.2011.05.010},
	abstract = {We propose a framework in light of the delay effect to model the asymmetry of multivariate covariance functions that is often exhibited in real data. This general approach can endow any valid symmetric multivariate covariance function with the ability of modeling asymmetry and is very easy to implement. Our simulations and real data examples show that asymmetric multivariate covariance functions based on our approach can achieve remarkable improvements in prediction over symmetric models.},
	number = {10},
	journal = {Journal of Multivariate Analysis},
	author = {Li, Bo and Zhang, Hao},
	year = {2011},
	pages = {1445--1453},
}

@article{emery2016improved,
  title={An improved spectral turning-bands algorithm for simulating stationary vector {Gaussian} random fields},
  author={Emery, Xavier and Arroyo, Daisy and Porcu, Emilio},
  journal={Stochastic Environmental Research and Risk Assessment},
  volume={30},
  pages={1863--1873},
  year={2016},
  publisher={Springer}
}

@misc{NIST:DLMF,
         key = "{\relax DLMF}",
       title = "{\it NIST Digital Library of Mathematical Functions}",
howpublished = "http://dlmf.nist.gov/, Release 1.1.3 of 2021-09-15",
year = 2021,
         url = "http://dlmf.nist.gov/",
        note = "F.~W.~J. Olver, A.~B. {Olde Daalhuis}, D.~W. Lozier, B.~I. Schneider,
                R.~F. Boisvert, C.~W. Clark, B.~R. Miller, B.~V. Saunders,
                H.~S. Cohl, and M.~A. McClain, eds."}

@article{schlather2010some,
  title={Some covariance models based on normal scale mixtures},
  author={Schlather, Martin},
  year={2010}
}

@incollection{gneiting2006geostatistical,
	title = {Geostatistical {space}-{time} {models}, {stationarity}, {separability}, and {full} {symmetry}},
	volume = {107},
	abstract = {Geostatistical approaches to modeling spatio-temporal data rely on parametric covariance models and rather stringent assumptions, such as stationarity, separability and full symmetry. This paper reviews recent advances in the literature on space-time covariance functions in light of the aforementioned notions, which are illustrated using wind data from Ireland. Experiments with time-forward kriging predictors suggest that the use of more complex and more realistic covariance models results in improved predictive performance.},
	language = {en},
	urldate = {2025-09-17},
	booktitle = {Statistical {Methods} for {Spatio}-{Temporal} {Systems}},
	publisher = {Chapman and Hall/CRC},
	author = {Gneiting, Tilmann and Genton, Marc and Guttorp, Peter},
	editor = {Finkenst{\"a}dt, B{\"a}rbel and Held, Leonhard and Isham, Valerie},
	month = oct,
	year = {2006},
	doi = {10.1201/9781420011050.ch4},
	pages = {151--175},
	file = {PDF:/Users/dyarger/Zotero/storage/RZTCY4RJ/Gneiting et al. - 2006 - Geostatistical Space-Time Models, Stationarity, Separability, and Full Symmetry.pdf:application/pdf},
}

@article{de_iaco_positive_2013,
	title = {Positive and negative non-separability for space-time covariance models},
	volume = {143},
	doi = {10.1016/j.jspi.2012.07.006},
	abstract = {Separable spatio-temporal covariance models, deﬁned as the product of purely spatial and purely temporal covariance functions, are often used in practice, but frequently they only represent a convenient assumption. On the other hand, non-separable models are receiving a lot of attention, since they are more ﬂexible to handle empirical covariances showed up in applications. Different forms of non-separability for space–time covariance functions have been recently deﬁned in the literature. In this paper, the notion of positive and negative non-separability is further formalized in order to distinguish between pointwise and uniform non-separability. Various well-known nonseparable space–time stationary covariance models are analyzed and classiﬁed by using the new deﬁnition of non-separability. In particular, wide classes of non-separable spatio-temporal covariance functions, able to capture positive and negative nonseparability, are proposed and some examples of these classes are given. General results concerning the non-separability of spatial–temporal covariance functions obtained by a linear combination of spatial–temporal covariance functions and some stability properties are also presented. These results can be helpful to generate as well as to select appropriate covariance models for describing space–time data.},
	language = {en},
	number = {2},
	urldate = {2025-08-12},
	journal = {Journal of Statistical Planning and Inference},
	author = {De Iaco, S. and Posa, D.},
	month = feb,
	year = {2013},
	pages = {378--391},
	file = {PDF:/Users/dyarger/Zotero/storage/GMZDDHLS/De Iaco and Posa - 2013 - Positive and negative non-separability for space–time covariance models.pdf:application/pdf},
}

@article{mateu_recent_2008,
	title = {Recent advances to model anisotropic space-time data},
	volume = {17},
	doi = {10.1007/s10260-007-0056-6},
	abstract = {Building new and ﬂexible classes of nonseparable spatio-temporal covariances and variograms has resulted a key point of research in the last years. The goal of this paper is to present an up-to-date overview of recent spatio-temporal covariance models taking into account the problem of spatial anisotropy. The resulting structures are proved to have certain interesting mathematical properties, together with a considerable applicability. In particular, we focus on the problem of modelling anisotropy through isotropy within components. We present the Bernstein class, and a generalisation of Gneiting’s approach (2002a) to obtain new classes of space–time covariance functions which are spatially anisotropic. We also discuss some methods for building covariance functions that attain negative values. We ﬁnally present several differentiation and integration operators acting on particular space–time covariance classes.},
	language = {en},
	number = {2},
	urldate = {2025-09-07},
	journal = {Statistical Methods and Applications},
	author = {Mateu, J. and Porcu, E. and Gregori, P.},
	month = may,
	year = {2008},
	pages = {209--223},
	file = {PDF:/Users/dyarger/Zotero/storage/XU99TYWW/Mateu et al. - 2008 - Recent advances to model anisotropic space–time data.pdf:application/pdf},
}

@book{chiles2012geostatistics,
  title={Geostatistics: Modeling Spatial Uncertainty},
  author={Chiles, Jean-Paul and Delfiner, Pierre},
  year={2012},
  publisher={John Wiley \& Sons}
}

@article{allard_simulating_2020,
	title = {Simulating space-time random fields with nonseparable {Gneiting}-type covariance functions},
	volume = {30},
	language = {en},
	number = {5},
	journal = {Statistics and Computing},
	author = {Allard, Denis and Emery, Xavier and Lacaux, Céline and Lantuéjoul, Christian},
	month = sep,
	year = {2020},
	pages = {1479--1495},
	file = {PDF:/Users/dyarger/Zotero/storage/5MV6JTV7/Allard et al. - 2020 - Simulating space-time random fields with nonseparable Gneiting-type covariance functions.pdf:application/pdf},
}

@article{faouzi_spacetime_2025,
	title = {Space-Time Smoothness and Parsimony in Covariance Functions},
	doi = {10.1002/mma.11100},
	abstract = {This paper challenges the trade off between computational efficiency and statistical accuracy within the framework of Gaussian space-time processes. Under such a framework, the space-time dependence is completely specified through the space-time covariance function. We compare different classes of space-time covariance functions depending on their supports: when the support is full, information is complete but estimation and prediction are costly. When the support is compact, we only have partial information, but computations are way more efficient. We take a special approach to preserving smoothness at the origin of the covariance when inducing parsimony. This is particularly important because, for Gaussian processes, smoothness is directly linked to the geometric properties of the Gaussian fields, in concert with optimal prediction under certain asymptotic regimes. The instrument used for such comparisons is compatibility, which is in turn a function of equivalence of Gaussian measures under fixed domain asymptotics. We find the parametric restrictions on some classes of covariance functions with compact support to be compatible with those having full support. Such a result has precise consequences in terms of estimation and prediction of Gaussian processes.},
	language = {en},
	urldate = {2025-06-19},
	journal = {Mathematical Methods in the Applied Sciences},
	author = {Faouzi, Tarik and Porcu, Emilio and Bevilacqua, Moreno},
	month = jun,
	year = {2025},
	file = {PDF:/Users/dyarger/Zotero/storage/4L9396P8/Faouzi et al. - 2025 - Space‐Time Smoothness and Parsimony in Covariance Functions.pdf:application/pdf},
}

@article{bourotte_flexible_2016,
	title = {A flexible class of non-separable cross-covariance functions for multivariate space-time data},
	volume = {18},
	doi = {10.1016/j.spasta.2016.02.004},
	abstract = {Multivariate space–time data are increasingly available in various scientific disciplines. When analyzing these data, one of the key issues is to describe the multivariate space–time dependences. Under the Gaussian framework, one needs to propose relevant models for multivariate space–time covariance functions, i.e. matrix-valued mappings with the additional requirement of nonnegative definiteness. We propose a flexible parametric class of cross-covariance functions for multivariate space–time Gaussian random fields. Space–time components belong to the (univariate) Gneiting class of space–time covariance functions, with Matérn or Cauchy covariance functions in the spatial margins. The smoothness and scale parameters can be different for each variable. We provide sufficient conditions for positive definiteness. A simulation study shows that the parameters of this model can be efficiently estimated using weighted pairwise likelihood, which belongs to the class of composite likelihood methods. We then illustrate the model on a French dataset of weather variables.},
	language = {en},
	urldate = {2025-09-07},
	journal = {Spatial Statistics},
	author = {Bourotte, Marc and Allard, Denis and Porcu, Emilio},
	month = nov,
	year = {2016},
	pages = {125--146},
	file = {PDF:/Users/dyarger/Zotero/storage/ZWA3DCQW/Bourotte et al. - 2016 - A flexible class of non-separable cross-covariance functions for multivariate space–time data.pdf:application/pdf},
}

@phdthesis{wu2021non,
  title={Non-Fully Symmetric Space-Time Matern-Cauchy Correlation Functions},
  author={Wu, Zizhuang},
  year={2021},
  school={Purdue University Graduate School}
}

@book{king2009hilbert,
  title={Hilbert {Transforms}},
  author={King, Frederick W},
  year={2009},
  publisher={Cambridge University Press}
}

@article{apanasovich2012valid,
  title={A valid {Mat}{\'e}rn class of cross-covariance functions for multivariate random fields with any number of components},
  author={Apanasovich, Tatiyana V and Genton, Marc G and Sun, Ying},
  journal={Journal of the American Statistical Association},
  volume={107},
  number={497},
  pages={180--193},
  year={2012},
  publisher={Taylor \& Francis}
}

@article{gneiting_nonseparable_2002,
	title = {Nonseparable, Stationary Covariance Functions for Space-Time Data},
	volume = {97},
	doi = {10.1198/016214502760047113},
	language = {en},
	number = {458},
	urldate = {2024-09-26},
	journal = {Journal of the American Statistical Association},
	author = {Gneiting, Tilmann},
	month = jun,
	year = {2002},
	pages = {590--600},
	file = {Gneiting - 2002 - Nonseparable, Stationary Covariance Functions for .pdf:/Users/dyarger/Zotero/storage/EKNXVFPD/Gneiting - 2002 - Nonseparable, Stationary Covariance Functions for .pdf:application/pdf},
}

@article{bevilacqua2025parsimonious,
  title={Parsimonious Compactly Supported Covariance Models in the {Gauss} {Hypergeometric} {Class}: {Identifiability}, Reparameterizations, and Asymptotic Properties},
  author={Bevilacqua, Moreno and Caama{\~n}o-Carrillo, Christian and Faouzi, Tarik and Emery, Xavier},
    note = {arXiv:2506.13646},

  year={2025}
}

@book{oberhettinger2012tables,
  title={Tables of Laplace Transforms},
  author={Oberhettinger, Fritz and Badii, Larry},
  year={2012},
  publisher={Springer Science \& Business Media}
}

@article{allard_fully_2022,
	title = {Fully nonseparable {Gneiting} covariance functions for multivariate space-time data},
	volume = {52},
	doi = {10.1016/j.spasta.2022.100706},
	abstract = {We broaden the well-known Gneiting class of spaceÐtime covariance functions by introducing a very general parametric class of fully nonseparable direct and cross-covariance functions for multivariate random fields, where each component has a spatial covariance function from the Matrn family with its own smoothness and scale parameters and, unlike most of currently available models, its own correlation function in time. We present sufficient conditions that result in valid models with varying degrees of complexity and we discuss the parameterization of those. Continuous-in-space and discrete-in-time simulation algorithms are also given, which are not limited by the number of target spatial coordinates and allow tens of thousands of time coordinates. The application of the proposed model is illustrated on a weather trivariate dataset over France. Our new model yields better fitting and better predictive scores in time compared to a more parsimonious model with a common temporal correlation function.},
	urldate = {2023-11-06},
	journal = {Spatial Statistics},
	author = {Allard, Denis and Clarotto, Lucia and Emery, Xavier},
	month = dec,
	year = {2022},
	pages = {100706},
}

@article{chen2021space,
  title={Space-time covariance structures and models},
  author={Chen, Wanfang and Genton, Marc G and Sun, Ying},
  journal={Annual Review of Statistics and Its Application},
  volume={8},
  pages={191--215},
  year={2021},
  publisher={Annual Reviews}
}

@article{porcu202130,
  title={30 Years of space-time covariance functions},
  author={Porcu, Emilio and Furrer, Reinhard and Nychka, Douglas},
  journal={Wiley Interdisciplinary Reviews: Computational Statistics},
  volume={13},
  number={2},
  pages={e1512},
  year={2021},
  publisher={Wiley Online Library}
}

@article{kuusela_locally_2018,
	title = {Locally stationary spatio-temporal interpolation of {Argo} profiling float data},
	volume = {474},
	language = {en},
	number = {2220},
	urldate = {2019-10-11},
	journal = {Proceedings of the Royal Society A: Mathematical, Physical and Engineering Sciences},
	author = {Kuusela, Mikael and Stein, Michael L.},
	month = {Dec.},
	year = {2018},
	file = {Kuusela and Stein - 2018 - Locally stationary spatio-temporal interpolation o.pdf:/Users/dyarger/Zotero/storage/Z2HPUQ4A/Kuusela and Stein - 2018 - Locally stationary spatio-temporal interpolation o.pdf:application/pdf},
}

@article{porcu2023mat,
  title={The {Mat}\'ern Model: A Journey through Statistics, Numerical Analysis and Machine Learning},
  author={Porcu, Emilio and Bevilacqua, Moreno and Schaback, Robert and Oates, Chris J},
  journal={Statistical Science},
  volume={39},
  number={3},
  pages={469--492},
  year={2024},
  publisher={Institute of Mathematical Statistics}
}

@book{stein1999interpolation,
  title={Interpolation of {Spatial} {Data}: {Some} {Theory} for {Kriging}},
  author={Stein, Michael L},
  year={1999},
  publisher={Springer Science \& Business Media}
}

@article{gneiting2010matern,
  title={Mat{\'e}rn cross-covariance functions for multivariate random fields},
  author={Gneiting, Tilmann and Kleiber, William and Schlather, Martin},
  journal={Journal of the American Statistical Association},
  volume={105},
  number={491},
  pages={1167--1177},
  year={2010},
  publisher={Taylor \& Francis}
}

@article{yarger2023multivariate,
      title={Multivariate {Mat}\'ern Models --- {A} Spectral Approach}, 
      author={Drew Yarger and Stilian Stoev and Tailen Hsing},
	year = {2026},
	pages = {69--98},
	volume = {41},
	number = {1},
	journal = {Statistical Science},
	doi = {10.1214/24-STS948}
}

@article{paciorek_spatial_2006,
    title = {Spatial modelling using a new class of nonstationary covariance functions},
    volume = {17},
    doi = {10.1002/env.785},
    abstract = {We introduce a new class of nonstationary covariance functions for spatial modelling. Nonstationary covariance functions allow the model to adapt to spatial surfaces whose variability changes with location. The class includes a nonstationary version of the Matérn stationary covariance, in which the differentiability of the spatial surface is controlled by a parameter, freeing one from fixing the differentiability in advance. The class allows one to knit together local covariance parameters into a valid global nonstationary covariance, regardless of how the local covariance structure is estimated. We employ this new nonstationary covariance in a fully Bayesian model in which the unknown spatial process has a Gaussian process (GP) prior distribution with a nonstationary covariance function from the class. We model the nonstationary structure in a computationally efficient way that creates nearly stationary local behavior and for which stationarity is a special case. We also suggest non-Bayesian approaches to nonstationary kriging.},
    language = {en},
    number = {5},
    urldate = {2025-04-15},
    journal = {Environmetrics},
    author = {Paciorek, Christopher J. and Schervish, Mark J.},
    month = aug,
    year = {2006},
    pages = {483--506},
}

\appendix

\renewcommand{\thefigure}{\thesection\arabic{figure}}
\renewcommand{\thetheorem}{\thesection\arabic{theorem}}
\renewcommand{\theproposition}{\thesection\arabic{proposition}}
\renewcommand{\theexample}{\thesection\arabic{example}}
\renewcommand{\thetable}{\thesection\arabic{table}}

\setcounter{theorem}{0}
\setcounter{proposition}{0}
\setcounter{example}{0}
\setcounter{table}{0}
\setcounter{figure}{0}
\section{Additional information on models}\label{app:models}

\subsection{Cross-covariances between covariances with different parameters}\label{app:vary_params}

We present cross-covariance forms for when the spectral densities of the original densities are not proportional with $f_j(\vecuse{x}) \neq f_k(\vecuse{x})$.
We begin with the squared-exponential. 
Suppose that $C_{jj}(\vecuse{h}) = \textrm{exp}\left(-a_j^2 \lVert \vecuse{h}\rVert^2\right)$ and $C_{kk}(h) = \textrm{exp}\left(-a_{k}^2 \lVert \vecuse{h}\rVert^2\right)$ for parameters $a_j$ and $a_k$. 
Let $a_{jk} = a_j a_k/\sqrt{(a_j^2 + a_k^2)/2}$. 
Then the corresponding components of the cross-covariance are:
\begin{align*}
    C_{jk}^{\Re}(\vecuse{h}) 
    &=  \left\{\frac{a_{jk}}{(a_ja_k)^\frac{1}{2}}\right\}^{d}\textrm{exp}\left(-a_{jk}^2 \lVert \vecuse{h}\rVert^2\right);  \\ 
    C_{jk}^{\Im}(\vecuse{h})     &=  \left(\frac{a_{jk}}{(a_ja_k)^\frac{1}{2}}\right)^{d}\textrm{exp}\left(-a_{jk}^2 \lVert \vecuse{h}\rVert^2\right)\textrm{erfi}\left(a_{jk}\langle \vecuse{h}, \tilde{\vecuse{x}}_{jk}\rangle\right),
\end{align*}where $\tilde{\vecuse{x}}_{jk} = \tilde{\vecuse{x}}_{kj}$.
In general, the expression $\Re(\sigma_{jk}) (a_{jk}/\sqrt{a_ja_k})^d$ now corresponds to the marginal covariance $\textrm{Cov}\left\{Y_j(\vecuse{s}), Y_k(\vecuse{s})\right\}$ between processes $j$ and $k$.

Now, consider the Cauchy covariance, where \begin{align*}
    C_{jj}(\vecuse{h}) &= \left(a_{j}^2 \lVert \vecuse{h}\rVert^2 + 1\right)^{-\alpha_j}; & C_{kk}(\vecuse{h}) &= \left(a_{k}^2 \lVert \vecuse{h}\rVert^2 + 1\right)^{-\alpha_k}.
\end{align*}
Suppose that $a_{jk}$ is defined as above and  $\alpha_{jk} = (\alpha_j + \alpha_k)/2$. Then valid cross-covariance components are defined by 
\begin{align*}
    C_{jk}^{\Re}(\vecuse{h}) 
    &= \frac{a_{jk}^{2\alpha_{jk}}}{a_j^{\alpha_j}a_k^{\alpha_k}}\frac{\Gamma(\alpha_{jk})}{\sqrt{\Gamma(\alpha_j)\Gamma(\alpha_k)}}\frac{1}{\left(a_{jk}^2\lVert \vecuse{h}\rVert^2 + 1\right)^{\alpha_{jk}}},
\end{align*}and \begin{align*}
        C_{jk}^{\Im}(\vecuse{h})     &= 
 C_{jk}^{\Re}(\vecuse{h})
    \frac{2}{\sqrt{\pi}}\frac{\Gamma(\alpha_{jk}+\frac{1}{2})}{\Gamma(\alpha_{jk})}\frac{a_{jk}\langle \vecuse{h}, \tilde{\vecuse{x}}_{jk}\rangle}{\left(a_{jk}^2 \lVert \vecuse{h} \rVert^2 + 1\right)^{\frac{1}{2}}} {}_2F_1\left(\frac{1}{2}, \frac{1}{2} + \alpha_{jk}; \frac{3}{2}; \frac{a_{jk}^2\langle \vecuse{h}, \tilde{\vecuse{x}}_{jk}\rangle^2}{a_{jk}^2\lVert \vecuse{h}\rVert^2 + 1}\right). 
\end{align*}
The form of $C_{jk}^\Re(\vecuse{h})$ matches the adjustment in \cite{allard2025modeling} up to reparameterization of the inverse range parameter. 

For the Mat\'ern covariance with $d=1$, we have, with $\nu_{jk} = (\nu_j + \nu_k)/2$ and $a_{jk} = \sqrt{(a_j^2 + a_k^2)/2}$, \begin{align*}
    C_{jk}^{\Re}(h) =  \frac{a_j^{\nu_j}a_k^{\nu_k}}{a_{jk}^{2\nu_{jk}}}\frac{\Gamma(\nu_{jk})}{\sqrt{\Gamma(\nu_j)\Gamma(\nu_k)}}\frac{2^{1-\nu_{jk}}}{\Gamma(\nu_{jk})} (a_{jk}|h|)^{\nu}K_\nu\left(a_{jk}|h|\right),
\end{align*}and, for $\nu_{jk}\notin \{1/2, 3/2, \dots\}$,
\begin{align*}
    C_{jk}^{\Im}(h) &=  \frac{a_j^{\nu_j}a_k^{\nu_k}}{a_{jk}^{2\nu_{jk}}}\frac{\Gamma(\nu_{jk})}{\sqrt{\Gamma(\nu_j)\Gamma(\nu_k)}} \frac{\pi \textrm{sign}(h)}{2^{\nu_{jk}}\Gamma(\nu_{jk})\cos(\pi\nu_{jk})}
    \\
    &~~~~~~~\times(a_{jk}|h|)^{\nu_{jk}}\left\{I_{\nu_{jk}}(a_{jk}|h|) - L_{-\nu_{jk}}(a_{jk}|h|)\right\}.
\end{align*}
For the exponential, $d=1$, where $\nu_j = \nu_k = 1/2$ and $a_{jk} = \sqrt{(a_j^2 + a_k^2)/2}$, we have \begin{align*}
    C_{jk}^{\Re}(h)&=\frac{\sqrt{a_ja_k}}{a_{jk}}\textrm{exp}\left( - a_{jk}|h|\right),
\end{align*} and\begin{align*}
    C_{jk}^{\Im}(h)&=\frac{\sqrt{a_ja_k}}{a_{jk}}
     \frac{\textrm{sign}(h)}{\pi}\{\textrm{exp}(a_{jk}|h|)E_1(a_{jk}|h|)+ \textrm{exp}(-a_{jk}|h|)\textrm{Ei}(a_{jk}|h|)\}.
\end{align*}

\subsection{Additional Gneiting-Cauchy models}\label{app:gneiting}

We first present the Gneiting-Cauchy model for general $\alpha$.

\begin{proposition}[$d=1$ Cauchy]\normalfont\label{prop:gneiting_cauchy}

In the setting with $d=1$, the corresponding Cauchy space-time Gneiting covariance with $\gamma(u) = a_t^2u^2$ can be written in terms of the hypergeometric function for general $\alpha $:
    \begin{align*}
  C(h,u) 
     &= \frac{\sigma}{\sqrt{a_t^2u^2 + 1}}\frac{1}{\left(\frac{a_s^2h^2}{a_t^2u^2+1 } + 1\right)^\alpha} \left\{1 + \xi\textrm{sign}(h)\textrm{sign}(u)\frac{2}{\sqrt{\pi}}{}_2F_1\left(\frac{1}{2}, \alpha; \frac{3}{2}; -\frac{a_t^2u^2 a_s^2h^2}{a_s^2h^2+  a_t^2u^2 + 1}\right)\right\}.
    \end{align*}
\end{proposition}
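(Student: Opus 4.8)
The plan is to obtain the Cauchy--Gneiting covariance as a Gamma scale-mixture of the squared-exponential Gneiting covariance of Proposition \ref{prop:sq_exp_asymmetric_gneiting}, exactly as the surrounding text suggests, and then to evaluate the single integral that remains. First I would write the Cauchy completely monotone function as $(1+x)^{-\alpha} = \Gamma(\alpha)^{-1}\int_0^\infty v^{\alpha-1}e^{-v}e^{-vx}\,dv$ and apply it with $x = a_s^2h^2/(a_t^2u^2+1)$. Since Proposition \ref{prop:sq_exp_asymmetric_gneiting} holds for every inverse range, replacing $a_s$ by $\sqrt{v}\,a_s$ yields, for each fixed $v$, a valid squared-exponential Gneiting covariance; the scaling affects only the spatial pieces, leaving the temporal prefactor $(a_t^2u^2+1)^{-1/2}$ and the $a_tu$ coupling in the $\textrm{erf}$ argument intact. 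Because the spectral construction \eqref{eq:space_time_constr_density} is linear in $f^\ast$ and the mixing weight $v^{\alpha-1}e^{-v}/\Gamma(\alpha)$ is nonnegative and integrable, Fubini's theorem lets me interchange the $v$-integral with the Bochner integral of Theorem \ref{thm:st_bochner}; validity and positivity then come for free, either from item (b) of the general construction with $|\xi|<1$ or simply because a nonnegative mixture of valid covariances is valid.

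Carrying the mixture through the two-term decomposition, the symmetric part is immediate: $\Gamma(\alpha)^{-1}\int_0^\infty v^{\alpha-1}e^{-vA}\,dv = A^{-\alpha}$ with $A = 1 + a_s^2h^2/(a_t^2u^2+1)$, reproducing the stated Cauchy--Gneiting symmetric factor. The asymmetric part reduces to the single integral $I = \int_0^\infty v^{\alpha-1}e^{-vA}\,\textrm{erf}(B\sqrt{v})\,dv$, where $B = a_sh\,a_tu/\sqrt{a_t^2u^2+1}$ carries the sign $\textrm{sign}(h)\textrm{sign}(u)$ inherited from the $\textrm{erf}$ argument of Proposition \ref{prop:sq_exp_asymmetric_gneiting}. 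I would evaluate $I$ by inserting the power series $\textrm{erf}(B\sqrt{v}) = (2/\sqrt{\pi})\sum_{n\ge 0}(-1)^nB^{2n+1}v^{n+1/2}/\{n!(2n+1)\}$ and integrating term by term; each term is a Gamma integral $\int_0^\infty v^{\alpha+n-1/2}e^{-vA}\,dv = \Gamma(\alpha+n+\tfrac12)A^{-(\alpha+n+1/2)}$, and after using $(2n+1)^{-1} = (1/2)_n/(3/2)_n$ and $\Gamma(\alpha+n+\tfrac12)/\Gamma(\alpha+\tfrac12) = (\alpha+\tfrac12)_n$ the resulting series is recognised as a Gauss hypergeometric function ${}_2F_1(\tfrac12,\alpha+\tfrac12;\tfrac32;-B^2/A)$ with prefactor proportional to $B\,\Gamma(\alpha+\tfrac12)/\{\Gamma(\alpha)A^{\alpha+1/2}\}$; this is structurally identical to the spatial Cauchy cross-covariance of Proposition \ref{prop:asymm_cauchy_spatial}, which is derived through Proposition \ref{prop:schoen_asymmetric} but uses $\textrm{erfi}$ in place of $\textrm{erf}$ and hence a positive argument.

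The last step is bookkeeping of constants and of the hypergeometric argument. Here the useful algebraic identity is $A + B^2 = a_s^2h^2+1$, which gives $B^2/A = a_s^2h^2\,a_t^2u^2/(a_s^2h^2+a_t^2u^2+1)$ --- precisely the argument appearing in the statement --- and which lets me simplify the prefactor; a standard Pfaff or Euler transformation of ${}_2F_1$ then recasts the series into the closed form displayed in the proposition. I expect this reconciliation of the prefactor and of the hypergeometric parameters to be the main technical obstacle, whereas the interchange of limits and the term-by-term integration are routine given integrability. As consistency checks I would verify the special cases $\alpha=\tfrac12$ and $\alpha=1$, where ${}_2F_1(\tfrac12,\cdot;\tfrac32;\cdot)$ collapses to an inverse-hyperbolic or inverse-trigonometric function (via $\int_0^\infty e^{-pw^2}\textrm{erf}(cw)\,dw = \arctan(c/\sqrt{p})/\sqrt{\pi p}$ after the substitution $v=w^2$), which pins down all constants and guards against slips in the factorials and Gamma ratios.
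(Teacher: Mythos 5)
Your mixture setup is exactly the paper's: the paper also realises the Cauchy--Gneiting covariance as a Gamma scale mixture of the squared-exponential Gneiting covariance of Proposition \ref{prop:sq_exp_asymmetric_gneiting}, and also reduces the asymmetric part to the integral $I=\int_0^\infty v^{\alpha-1}e^{-vA}\,\textrm{erf}\bigl(B\sqrt v\bigr)\,dv$ with $A=1+a_s^2h^2/(a_t^2u^2+1)$ and $B=a_sh\,a_tu/\sqrt{a_t^2u^2+1}$; the only difference is that the paper evaluates $I$ by writing $\textrm{erf}(z)=\frac{2z}{\sqrt\pi}\,{}_1F_1\bigl(\frac12;\frac32;-z^2\bigr)$ and invoking the Laplace-transform identity 7.621.4 of Gradshteyn and Ryzhik, rather than by your series resummation. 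Your evaluation,
\begin{align*}
I \;=\; \frac{2}{\sqrt\pi}\,\Gamma\!\left(\alpha+\tfrac12\right)\frac{B}{A^{\alpha+1/2}}\;{}_2F_1\!\left(\tfrac12,\,\alpha+\tfrac12;\,\tfrac32;\,-\tfrac{B^2}{A}\right),
\end{align*}
is correct (modulo the routine caveat that term-by-term integration is only directly justified for $B^2<A$ and needs analytic continuation otherwise, a step the Laplace-transform identity avoids). The genuine gap is your final step: no Pfaff or Euler transformation can recast this expression into the one displayed in the proposition, because they are different functions. Run your own proposed check at $\alpha=\frac12$: your form reduces, via $\int_0^\infty e^{-pw^2}\textrm{erf}(cw)\,dw=\arctan(c/\sqrt p)/\sqrt{\pi p}$, to $\frac{2}{\pi}\arctan\bigl(B/\sqrt A\bigr)/\sqrt A$, whereas the stated form reduces to $\frac{2}{\sqrt\pi}\,\textrm{sign}(h)\textrm{sign}(u)\,\sinh^{-1}(z)/(z\sqrt{A})$ with $z=|B|/\sqrt A$; as $B\to0^+$ the first tends to $0$ while the second tends to $2/\sqrt\pi$, so no identity relates them.

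The reason you cannot close this step is that the proposition as printed is itself in error, and your intermediate expression is the correct closed form. In the paper's proof the integrand carries $w^{\alpha-1/2}=w^{(\alpha+1/2)-1}$, so the exponent in 7.621.4 must be $\alpha^*=\alpha+\frac12$; the proof instead sets $\alpha^*=\alpha$, which wrongly turns $\Gamma(\alpha+\frac12)$ into $\Gamma(\alpha)$, drops the odd prefactor $B/\sqrt A$ (leaving only the sign functions), and replaces the second ${}_2F_1$ parameter $\alpha+\frac12$ by $\alpha$. Two further observations confirm your version: first, the stated form is discontinuous at $h=0$ for fixed $u\neq0$ (it jumps between $\pm\frac{2}{\sqrt\pi}A^{-\alpha}$ through $0$), which is impossible for a Gamma mixture of the continuous covariances of Proposition \ref{prop:sq_exp_asymmetric_gneiting} by dominated convergence, while your form vanishes continuously and linearly in $B$, exactly as the $\textrm{erf}$ factor does in the squared-exponential case; second, your form is structurally parallel to the spatial Cauchy cross-covariance of Proposition \ref{prop:asymm_cauchy_spatial}, with parameters $\bigl(\frac12,\alpha+\frac12;\frac32\bigr)$ and a linear odd prefactor, as one should expect. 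So keep your derivation, drop the final ``reconciliation'' step, and report your expression as the corrected statement of the proposition.
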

Once again, simplified expressions for the special cases $\alpha = 1/2$ in the main text and $\alpha = 1$ below are derived straightforwardly from special cases of the hypergeometric function ${}_2F_1$ \citep[Section 15.4,][]{NIST:DLMF}.

\begin{example}[$\alpha = 1$, $d=1$, Cauchy]\normalfont
The special case with $\alpha = 1$ is 
        \begin{align*}
  C(h,u) 
  &= \frac{\sigma}{\sqrt{a_t^2u^2 + 1}} \frac{1}{\left(\frac{a_s^2h^2}{a_t^2 u^2 +1} + 1\right)} \left\{1 + \xi \frac{2}{\sqrt{\pi}}
  \frac{\sqrt{a_s^2h^2 + a_t^2u^2 + 1}}{a_tu a_s h}
  \tan^{-1}\left(\frac{a_t^2u^2 a_s^2h^2}{a_s^2h^2+  a_t^2u^2 + 1}\right)\right\}. 
    \end{align*}
\end{example}

\begin{example}[$\alpha = 1/2$, $d=1$ Cauchy with separability parameter]\normalfont
Let $b\in[0,1]$, $\delta > 0$, and $\tau = b/2 + \delta > b/2$. 
Then we have the following valid space-time covariance, where now $h_* = h / (a_t^2u^2+1)^{b/2}$: 
        \begin{align*}
  C(h,u)
  &= \frac{\sigma}{\left(a_t^2u^2 + 1\right)^{\tau}} \frac{1}{\sqrt{a_s^2h_*^2 + 1}}\left\{1 + \xi \frac{2}{\sqrt{\pi}} \frac{\sqrt{a_s^2h_*^2 +  1}}{a_tu a_sh_*}\sinh^{-1}\left(\frac{a_t^2u^2 a_s^2h_*^2}{a_s^2h_*^2+ 1}\right) \right\}
    \end{align*}if $ h_*\neq 0$ and $u \neq 0$. Otherwise, $C(h,u) =\sigma/\left\{\left(a_t^2u^2 + 1\right)^{\tau}\sqrt{a_s^2h_*^2 + 1}\right\}$.
\end{example}
Similar forms may be constructed for other $\alpha$.

\subsection{Squared-exponential Gneiting half-spectral density}\label{app:half_spec}

We present half-spectral \citep{horrell2017half} densities, in which the time dimension of the spectral density is integrated out. 
This proposition is similar to Lemma 1 of \cite{allard_simulating_2020} for the Gneiting covariance and now includes an asymmetric part. 

\begin{proposition}[Half-spectral representation of Gneiting squared-exponential covariance]\normalfont\label{prop:half_spectral}

Consider the squared-exponential covariance for the spatial covariance $\varphi\left(a_{\vecuse{s}}^2\lVert \vecuse{h}\rVert^2\right) = \textrm{exp}\left(-a_{\vecuse{s}}^2\lVert \vecuse{h} \rVert^2\right)$.
For $\gamma(u) = a_t^2 u^2$, the Gneiting space-time covariance with additional asymmetric part for the squared-exponential has representation
\begin{align*}
C(\vecuse{h},u)&= \sigma\int_{\mathbb{R}^d} \frac{\textrm{exp}\left(\I \langle \vecuse{h}, \vecuse{x}\rangle\right)}{\left(2a_{\vecuse{s}}\sqrt{\pi}\right)^d} \textrm{exp}\left\{- \frac{\lVert \vecuse{x}\rVert^2}{4a_{\vecuse{s}}^2}\left(a_t^2 u^2 + 1\right)\right\}  \left\{1+ \xi \textrm{sign}(\langle \vecuse{x}, \tilde{\vecuse{x}}\rangle)\textrm{erfi}\left(\frac{\lVert \vecuse{x}\rVert}{2a_{\vecuse{s}}}a_t u\right)\right\}d\vecuse{x}.
\end{align*}

For $\gamma(u) = a_t|u|$, it instead has representation\begin{align*}
        C(\vecuse{h},u)&= \sigma\int_{\mathbb{R}^d}\frac{\textrm{exp}\left(\I \langle \vecuse{h}, \vecuse{x}\rangle\right)}{\left(2a_{\vecuse{s}}\sqrt{\pi}\right)^d} \textrm{exp}\left\{- \frac{\lVert \vecuse{x}\rVert^2}{4a_{\vecuse{s}}^2}\left(a_t |u| + 1\right)\right\}\left\{1+ \xi \textrm{sign}(\langle \vecuse{x}, \tilde{\vecuse{x}}\rangle)Z(\vecuse{x}, u)\right\}d\vecuse{x},\\
           Z(\vecuse{x}, u) &= \frac{\textrm{sign}(u)}{\pi }\Bigg\{\textrm{exp}\left(-\frac{\lVert \vecuse{x}\rVert^2 }{4a_{\vecuse{s}}^2}a_t|u|\right)\textrm{Ei}\left(\frac{\lVert \vecuse{x}\rVert^2}{4a_{\vecuse{s}}^2}a_t|u|\right)- \textrm{exp}\left(\frac{\lVert \vecuse{x}\rVert^2}{4a_{\vecuse{s}}^2}a_t|u|\right)E_1\left(\frac{\lVert \vecuse{x}\rVert^2}{4a_{\vecuse{s}}^2}a_t|u|\right)\Bigg\},
\end{align*}where parts of $Z(\vecuse{x}, u)$ correspond to the asymmetric exponential cross-covariance.

\end{proposition}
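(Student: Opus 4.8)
The plan is to exploit the defining construction \eqref{eq:space_time_constr_density} and to perform the temporal integral first, reducing the claim to two one-dimensional transforms in $\eta$. By Fubini's theorem (justified by the positivity and integrability established in part (b) of the validity discussion following \eqref{eq:space_time_constr_density}), I would write $C(\vecuse{h},u) = \int_{\mathbb{R}^d}\textrm{exp}(\I\langle\vecuse{h},\vecuse{x}\rangle)\,\tilde{f}(\vecuse{x},u)\,d\vecuse{x}$, where the half-spectral density is $\tilde{f}(\vecuse{x},u) = \sigma\int_{\mathbb{R}}\textrm{exp}(\I u\eta)\, f^*(\lVert\vecuse{x}\rVert,|\eta|)\{1 + \xi\,\textrm{sign}(\langle\vecuse{x},\tilde{\vecuse{x}}\rangle)\textrm{sign}(\eta)\}\,d\eta$. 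It then suffices to identify $f^*$ for the Gneiting squared-exponential model and to evaluate the symmetric (weight $1$) and asymmetric (weight $\textrm{sign}(\eta)$) contributions to this inner integral separately.

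First I would pin down the symmetric half-spectral density. For $\gamma(u)=a_t^2u^2$, the symmetric member of \eqref{eq:gne_general} with $\varphi(r)=\textrm{exp}(-r)$, $b=1$, and $\tau=d/2$ is $C^{\Re}(\vecuse{h},u)=\sigma(a_t^2u^2+1)^{-d/2}\textrm{exp}\{-a_{\vecuse{s}}^2\lVert\vecuse{h}\rVert^2/(a_t^2u^2+1)\}$. Inverting the spatial transform with the Gaussian identity $\int_{\mathbb{R}^d}\textrm{exp}(\I\langle\vecuse{h},\vecuse{x}\rangle)\textrm{exp}(-\alpha\lVert\vecuse{x}\rVert^2)\,d\vecuse{x}=(\pi/\alpha)^{d/2}\textrm{exp}(-\lVert\vecuse{h}\rVert^2/(4\alpha))$ with $\alpha=(a_t^2u^2+1)/(4a_{\vecuse{s}}^2)$ yields the symmetric part of the claimed representation immediately, with constant $(2a_{\vecuse{s}}\sqrt{\pi})^{-d}$ and temporal factor $\textrm{exp}\{-\lVert\vecuse{x}\rVert^2(a_t^2u^2+1)/(4a_{\vecuse{s}}^2)\}$. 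The temporal profile of $f^*$ is then recovered by observing that the $u$-dependence here is a Gaussian, so $f^*(\lVert\vecuse{x}\rVert,|\eta|)$ is, up to $\vecuse{x}$-only constants, a Gaussian in $\eta$ of the form $\textrm{exp}(-p\eta^2)$ with $p=a_{\vecuse{s}}^2/(a_t^2\lVert\vecuse{x}\rVert^2)$; I would also check that this $f^*$ is a legitimate symmetric, nonnegative, integrable density so that Theorem \ref{thm:st_bochner} applies.

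Next I would compute the asymmetric part. With $f^*$ Gaussian in $\eta$, the weighted transform reduces to $\int_{\mathbb{R}}\textrm{exp}(\I u\eta)\textrm{sign}(\eta)\textrm{exp}(-p\eta^2)\,d\eta = 2\I\int_0^\infty\sin(u\eta)\textrm{exp}(-p\eta^2)\,d\eta$, and the sine transform of a Gaussian is precisely Dawson's function $D_+$ (equivalently $\textrm{erfi}$), by the same identity underlying the squared-exponential cross-covariance of Proposition \ref{prop:sq_exp_spat}. Collecting constants, the argument simplifies to $\lVert\vecuse{x}\rVert a_t u/(2a_{\vecuse{s}})$, producing the factor $\textrm{erfi}(\lVert\vecuse{x}\rVert a_t u/(2a_{\vecuse{s}}))$ together with the residual Gaussian that completes $\textrm{exp}\{-\lVert\vecuse{x}\rVert^2(a_t^2u^2+1)/(4a_{\vecuse{s}}^2)\}$, matching the stated density; as in Section \ref{sec:separable_type}, the paired imaginary units arising from $\textrm{sign}(\langle\vecuse{x},\tilde{\vecuse{x}}\rangle)$ and $\textrm{sign}(\eta)$ combine so that the overall covariance is real.

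For the case $\gamma(u)=a_t|u|$ the argument is structurally identical but with a different temporal profile: the symmetric half-spectral density now carries $\textrm{exp}\{-a_t\lVert\vecuse{x}\rVert^2|u|/(4a_{\vecuse{s}}^2)\}$, a two-sided exponential in $u$, so the $\eta$-profile of $f^*$ is a Cauchy/Lorentzian rather than a Gaussian. The asymmetric transform $\int_{\mathbb{R}}\textrm{exp}(\I u\eta)\textrm{sign}(\eta)\,f^*\,d\eta$ is then exactly the integral that yields the asymmetric exponential cross-covariance of Example \ref{ex:exp_asymmetric}, with $a_t\lVert\vecuse{x}\rVert^2/(4a_{\vecuse{s}}^2)$ playing the role of the inverse range, giving the exponential-integral expression $Z(\vecuse{x},u)$ in terms of $E_1$ and $\textrm{Ei}$. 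The main obstacle is the asymmetric temporal integral in each case—the Dawson/\textrm{erfi} evaluation for the Gaussian profile and the $E_1$/$\textrm{Ei}$ evaluation for the Lorentzian profile—together with the bookkeeping of the imaginary units and the verification of the validity and integrability of $f^*$ needed to invoke Fubini and Theorem \ref{thm:st_bochner}.
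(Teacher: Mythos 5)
Your proposal is correct and takes essentially the same route as the paper: the paper's Lemma \ref{lemma:gneiting_sq_exp_spectral_density} computes the spectral density $f^*$ of the Gneiting squared-exponential model (Gaussian in $\eta$ for $\gamma(u)=a_t^2u^2$, Lorentzian for $\gamma(u)=a_t|u|$), and the proof of the proposition then, exactly as you do, evaluates the sign-weighted temporal integral as a sine transform—Dawson's function/$\textrm{erfi}$ in the Gaussian case, and the $E_1$/$\textrm{Ei}$ pair tied to the asymmetric exponential cross-covariance (Example \ref{ex:exp_asymmetric}) in the Lorentzian case. The only difference is organizational—you recover the temporal profile of $f^*$ by recognizing the Gaussian and two-sided-exponential Fourier pairs rather than by the paper's explicit inverse-transform computation—which is immaterial.
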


In general, for any $\gamma(u) = (a_t|u|)^{\beta}$ for $0 < \beta \leq 2$, the new asymmetric portion of the half-spectral representation will be related to the Hilbert transform of the powered-exponential class $\textrm{exp}\left\{-(a_t|u|)^\beta\right\}$, with the above corresponding to $\beta = 2$ and $\beta = 1$, respectively. 
The half-spectral representation makes computation of the covariance straightforward through fast Fourier transforms even if a closed-form representation is not available. 
Recently, constructions have often focused on this form of $\gamma(u) = (a_t|u|)^{\beta}$ \citep{allard2025modeling}. 
\subsection{Additional model plot}\label{app:plots}

In Figure~\ref{fig:vary_u_h}, we plot cross-sections of separable-type asymmetric covariance functions $C(h,u)$ in $d=1$.
\begin{figure}[th]
    \centering
    \includegraphics[width=0.98\linewidth]{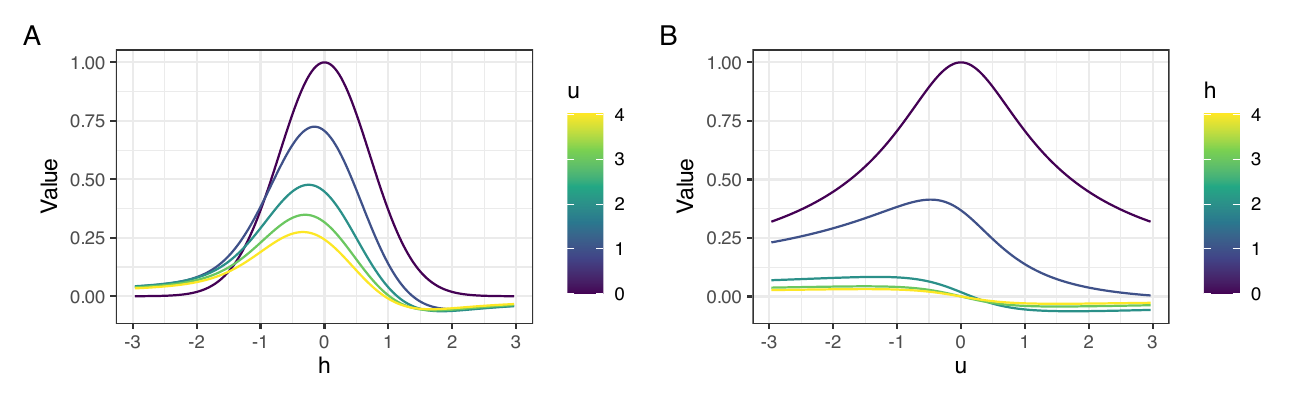}
    \caption{Cross-sections of the separable-type covariances for $d=1$ using the squared-exponential (spatial) and Cauchy (temporal) model with $a_s = a_t = 1$, $\alpha = 1$, $\sigma = 1$, and $\xi = -0.9$. (A) Covariance as a function of $h$ for varying $u$; (B) covariance as a function of $u$ for varying $h$.}
    \label{fig:vary_u_h}
\end{figure}

\section{Additional empirical results}

\subsection{Additional simulation results}
\label{app:sim_results}

We plot AIC information for the multivariate spatial simulation including when the covariance function is misspecified in Figure~\ref{fig:sim_misspecified}. 
In general, including asymmetry in the model can still improve performance when the true model is asymmetric with a different covariance class. 
For example, the asymmetric Mat\'ern provides performance near that of the squared-exponential when the true covariance is squared-exponential and $\Im(\sigma_{12}) = 0.4$. 
The Mat\'ern and Cauchy perform similarly across most settings, while the squared-exponential has considerably higher AIC when the true covariance is Mat\'ern. 

In Tables \ref{tab:mult_sim_hypothesis}, \ref{tab:st_sim_hypothesis}, and \ref{tab:st_sim_hypothesis2}, we provide empirical size and power analyses for a likelihood ratio test in the multivariate and space-time simulation studies. 
In particular, we consider a chi-squared likelihood ratio test of level $\alpha = 0.05$. 
In general, we find that the test controls Type I error rate generally well when the covariance class is correctly specified, while it may be somewhat inflated if the covariance class is misspecified. 
Under the simulation settings, the test has more power in the space-time simulation, perhaps due to the larger total sample size $n = n_{\vecuse{s}}\times n_t$.

\begin{figure}[th]
    \centering
    \includegraphics[width=0.96\linewidth]{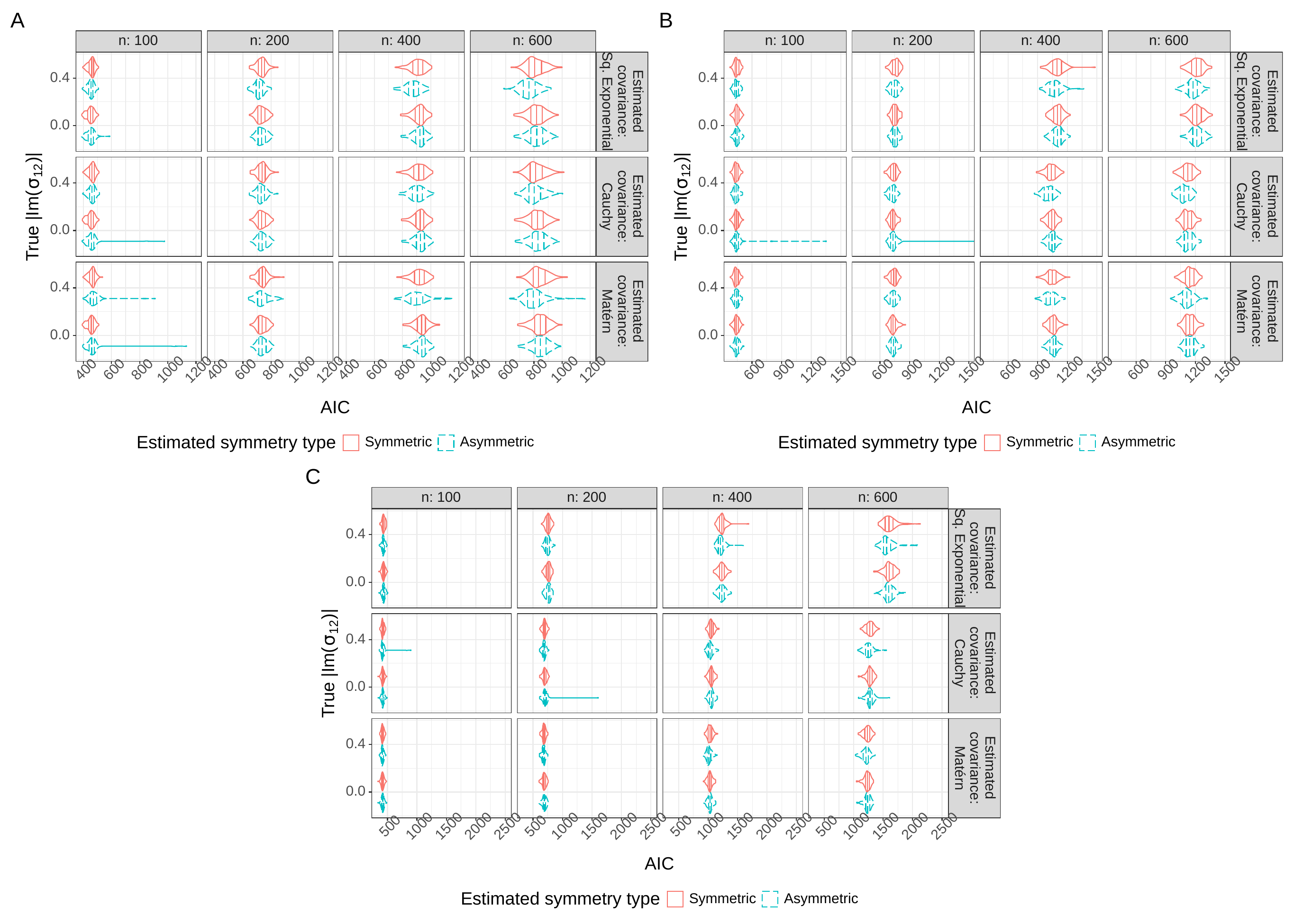}

    \caption{AIC for estimated multivariate spatial models when the true covariance is squared-exponential (A), Cauchy (B), and Mat\'ern (C). }
    \label{fig:sim_misspecified}
\end{figure}

\begin{table}[th] \centering 
\begin{tabular}{|crrrrrr|} \hline 
\multicolumn{7}{|c|}{True covariance}  \\ \hline 
Class & \multicolumn{2}{c}{Sq.~Exp.} & \multicolumn{2}{c}{Cauchy} & \multicolumn{2}{c|}{Mat\'ern} \\ \hline 
$\Im(\sigma_{12})$ & 0.00 & 0.40 & 0.00 & 0.40  & 0.00 & 0.40 \\ \hline \hline 
LRT Sq.~Exp. & 0.06 & 0.90 & 0.04 & 0.83 & 0.10 & 0.80 \\ 
LRT Cauchy & 0.03 & 0.72  & 0.03 & 0.80  & 0.07 & 0.82   \\ 
LRT Mat\'ern & 0.05 & 0.69  & 0.06 & 0.75 & 0.04 & 0.79   \\  \hline 
\end{tabular} 
  \caption{For the multivariate spatial simulation study, the proportion of rejections for the chi-squared likelihood ratio tests at the $0.05$ level when $n = 200$ and $\Re(\sigma_{jk}) = 0.4$. The likelihood ratio tests (LRTs) are based on the difference of log-likelihoods for the symmetric and reflective asymmetric multivariate covariances of the specified class.} 
  \label{tab:mult_sim_hypothesis} 
\end{table}

\begin{table}[th] \centering 
\begin{tabular}{|crrrrrrrrr|} \hline 
\multicolumn{10}{|c|}{True covariance}  \\ \hline 
Class & \multicolumn{3}{c}{Sq.~Exp.} & \multicolumn{3}{c}{Cauchy} & \multicolumn{3}{c|}{Exponential} \\ \hline 
$\xi$ & 0.00 & 0.40 & 0.80 & 0.00 & 0.40 & 0.80 & 0.00 & 0.40 & 0.80 \\ \hline \hline 
LRT Sq.~Exp. & 0.05 & 0.98 & 1.00 & 0.12 & 0.98 & 1.00 & 0.06 & 0.76 & 1.00 \\ 
LRT Cauchy & 0.04 & 0.77 & 1.00& 0.03 & 1.00 & 1.00 & 0.05 & 0.81 & 1.00  \\ 
LRT Exponential & 0.08 & 0.97 & 1.00 & 0.08 & 0.99 & 1.00 & 0.04 & 0.98 & 1.00  \\ 
LRT Gneiting $b$ & 0.07 & 0.76 & 1.00 & 0.09 & 0.51 & 1.00 & 0.07 & 0.38 & 0.87  \\ 
LRT Gneiting & 0.09 & 0.84 & 1.00 & 0.37 & 0.74 & 0.97 & 0.27 & 0.46 & 0.89 \\ \hline
\end{tabular} 
  \caption{For the space-time simulation study, the proportion of rejections for the chi-squared likelihood ratio tests at the $0.05$ level when $n_{\vecuse{s}} = 20$ and locations are colocated. The likelihood ratio tests (LRTs) are based on the difference of log-likelihoods for the symmetric and reflective asymmetric covariances of the specified class, where Gneiting $b$ refers to with the nonseparability parameter and Gneiting refers to without the nonseparability parameter.} 
  \label{tab:st_sim_hypothesis} 
\end{table}

\begin{table}[th] \centering 
\begin{tabular}{|crrrrrrr|} \hline 
\multicolumn{8}{|c|}{True covariance}  \\ \hline 
Class & Sq.~Exp.~Lagrangian & \multicolumn{3}{c}{Gneiting $b$} & \multicolumn{3}{c|}{Gneiting} \\ \hline 
$\xi$ &  -  & 0.00 & 0.40 & 0.80 & 0.00 & 0.40 & 0.80 \\ \hline \hline 
LRT Sq.~Exp. & 1.00  & 0.07 & 1.00 & 1.00 & 0.10 & 0.97 & 1.00 \\ 
LRT Cauchy & 1.00  & 0.10 & 1.00 & 1.00 & 0.11 & 0.97 & 1.00  \\ 
LRT Exponential & 0.99 & 0.15 & 0.98 & 1.00 & 0.12 & 0.96 & 1.00  \\ 
LRT Gneiting $b$ & 1.00 & 0.12 & 1.00 & 1.00 & 0.04 & 1.00 & 0.99  \\ 
LRT Gneiting & 1.00 & 0.09 & 1.00 & 1.00 & 0.04 & 1.00 & 1.00 \\ \hline
\end{tabular} 
  \caption{For the space-time simulation study, the proportion of rejections for the chi-squared likelihood ratio tests at the $0.05$ level when $n_{\vecuse{s}} = 20$ and locations are colocated. The likelihood ratio tests (LRTs) are based on the difference of log-likelihoods for the symmetric and reflective asymmetric covariances of the specified class, where Gneiting $b$ refers to with the nonseparability parameter and Gneiting refers to without the nonseparability parameter.} 
  \label{tab:st_sim_hypothesis2} 
\end{table}

\subsection{Irish wind parameters and prediction study} \label{app:prediction}

We next present the prediction study. 
For a particular time $t_{pred}$, we use data from all $t < t_{pred}$ to predict $Y(\vecuse{s},t)$ at times $Y(\vecuse{s},t_{pred})$, $Y(\vecuse{s}, t_{pred} + 1)$, $Y(\vecuse{s}, t_{pred} + 2)$, $\dots$. 
We consider two settings. 
The first uses all locations $\{Y(\vecuse{s},t)\}_{t < t_{pred},\vecuse{s}}$ as prediction data for $Y(\vecuse{s}_{pred},t_{pred})$.
The second uses only data from other locations, using $\{Y(\vecuse{s},t)\}_{t < t_{pred},\vecuse{s}\neq \vecuse{s}_{pred}}$ for $Y(\vecuse{s}_{pred},t_{pred})$. 
We refer to this as the leave-one-out setting, which is fundamentally more challenging than the first setting. 
We take $t_{pred}$ to be each day from January 1st, 1971 to December 30th, 1978, where the mean function and covariance parameters were estimated from data before January 1st, 1971. 
This gives $2{,}922$ total $t_{pred}$ for each a prediction is formed for the subsequent days. 
Predictions for different $t_{pred}$ may be computed in parallel. 

To form predictions, we once again use Vecchia's approximation through \cite{GpGp}. 
We use the same $m=60$ neighbors for each method to compute conditional expectations of $Y(\vecuse{s}_{pred}, t_{pred})$. 
In addition, we use \cite{GpGp}'s conditional simulation with $m=8$ to simulate $50$ realizations from the conditional distribution of $Y(\vecuse{s}_{pred}, t_{pred})$, whose variance is used as a conditional variance. 
All distributions are taken to be Gaussian.

To evaluate the models, we use the continuous ranked probability score (CRPS) that takes into account the entire distributional prediction \citep{gneiting_strictly_2007}, with smaller values representing better prediction. 
We provide summaries of this score for the two settings, as well as for a different number of days in advance of prediction, plotted in Figure~\ref{fig:pred}. 
In general, asymmetric versions of the model have improved prediction than their symmetric versions, especially when predicting just one day in advance. 
When all locations are used, the models with Cauchy spatial covariances perform the best.
There are fewer differences in the leave-one-out setting, but the models with both squared-exponential spatial and temporal covariances lag behind other models. 
The asymmetric Lagrangian model performs well in the leave-one-out setting, but is worse than some reflective asymmetric models when using all locations for prediction. 
In general, prediction performance drops off quickly from one day in advance to two or three days in advance. 
As expected, the leave-one-out-setting has worse predictions than the all locations setting.

\begin{figure}[!th]
    \centering
    \includegraphics[width=0.6\linewidth]{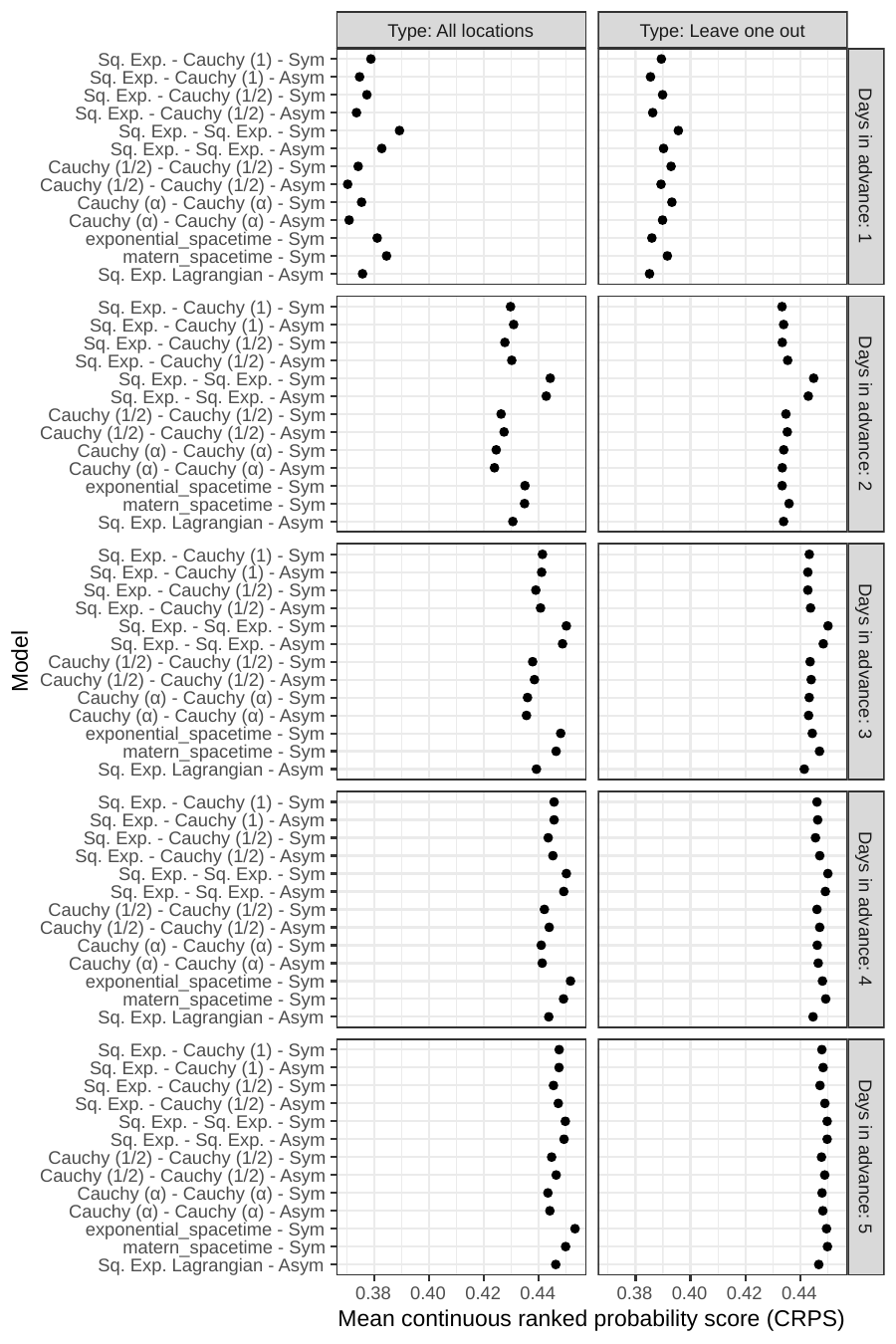}
    \caption{Prediction summaries for the Irish wind data across the study design and the number of days in advance the prediction was formed. The models are in the same order as Table \ref{tab:irish_wind_compare} }
    \label{fig:pred}
\end{figure}

\section{Proofs}\label{sec:proofs}

\begin{proof}[Proof of Proposition \ref{prop:sq_exp_spat}.]
    We aim to compute \begin{align*}
    C_{jk}^{\Im}(\vecuse{h}) &= \int_{\mathbb{R}^d} \textrm{exp}\left(\I\langle \vecuse{h}, \vecuse{x}\rangle\right) \left\{-\I\textrm{sign}(\langle \vecuse{x}, \tilde{\vecuse{x}}\rangle)\right\} \frac{1}{(2a\sqrt{\pi})^{d}}\textrm{exp}\left(-\frac{\lVert \vecuse{x}\rVert^2}{4a^2}\right) d\vecuse{x}.
\end{align*}
Consider the variable transformation $\vecuse{w} = (w_1, \dots, w_d)^\top = O\vecuse{x}$ where $O$ is an orthonormal rotation matrix such that  $\textrm{sign}(\langle \vecuse{x}, \tilde{\vecuse{x}}\rangle) = \textrm{sign}(\langle O^{\top}\vecuse{w}, \tilde{\vecuse{x}}\rangle) = \textrm{sign}(w_1)$. That is, we choose $O$ such that the direction of $w_1$ aligns with $\tilde{\vecuse{x}}$. 
We have $\lVert \vecuse{x}\rVert^2 = \vecuse{x}^\top \vecuse{x} = \vecuse{w}^\top O O^\top \vecuse{w} = \vecuse{w}^\top \vecuse{w}  = \lVert \vecuse{w}\rVert^2$, $\langle \vecuse{h}, O^\top \vecuse{w}\rangle$ = $\vecuse{h}^\top O^\top \vecuse{w} = (O\vecuse{h})^{\top}\vecuse{w}$, and $d\vecuse{w} = |O|d\vecuse{x} = d\vecuse{x}$, so the integral is written \begin{align*}
        C_{jk}^{\Im}(\vecuse{h}) &= \int_{\mathbb{R}^d} \textrm{exp}\left(\I\langle O\vecuse{h}, \vecuse{w}\rangle\right) \left\{-\I\textrm{sign}(w_1)\right\} \frac{1}{(2a\sqrt{\pi})^{d}} \textrm{exp}\left(-\frac{\lVert \vecuse{w}\rVert^2}{4a^2}\right) d\vecuse{w}.
\end{align*}

Now, break up the integral into $d-1$ and $1$ dimensions, letting the Hadamard product of $\vecuse{w}$ with the first standard basis vector be $\vecuse{w} \odot \vecuse{e}_1 = (w_1, 0, \dots, 0)^\top$: \begin{align*}
        &C_{jk}^{\Im}(\vecuse{h}) = \left[\int_{\mathbb{R}}\textrm{exp}\left(\I\langle O\vecuse{h}, \vecuse{w}\odot \vecuse{e}_1\rangle\right) \left\{-\I \textrm{sign}(w_1) \right\}\frac{1}{2a\sqrt{\pi}}\textrm{exp}\left(-\frac{w_1^2}{4a^2}\right) dw_1\right]\\
        &~~~~\times \left[\int_{\mathbb{R}^{d-1}} \textrm{exp}\left\{\I\langle O\vecuse{h}, \vecuse{w} - \vecuse{w}\odot \vecuse{e}_1\rangle\right\}  \frac{1}{(2a\sqrt{\pi})^{d-1}}\textrm{exp}\left(-\frac{\sum_{j=2}^d w_j^2}{4a^2}\right) dw_2\dots dw_{d}\right].
\end{align*}
The second integral (which is separated as it does not technically depend on $w_1$) just corresponds with the Gaussian spectral density in $d-1$ dimensions with the first entry of $O\vecuse{h}$ zeroed out, resulting in $
    \textrm{exp}\left(-a^2 \lVert O\vecuse{h}  - [O\vecuse{h} \odot \vecuse{e}_1]\rVert^2\right)$.
The first integral corresponds to the $d=1$ asymmetric case applied to the first entry $O\vecuse{h} \odot\vecuse{e}_1$: $   \textrm{exp}\left(- a^2\lVert O\vecuse{h} \odot \vecuse{e}_1\rVert^2\right)\textrm{erfi}\left(a \left[\vecuse{h}^\top O^\top \vecuse{e}_1\right]\right).$
As $O\vecuse{h} \odot \vecuse{e}_1$ and $O\vecuse{h} - O\vecuse{h} \odot \vecuse{e}_1$ are orthogonal, the product of the two integrals is $
    C_{jk}^{\Im}(\vecuse{h}) = \textrm{exp}\left(-a^2 \lVert O\vecuse{h}\rVert^2\right)\textrm{erfi}\left(a  \vecuse{h}^\top O^\top \vecuse{e}_1\right)$. 
Converting back into the original orientation we have $C_{jk}^{\Im}(\vecuse{h})=\textrm{exp}\left(-a^2 \lVert \vecuse{h}\rVert^2\right)\textrm{erfi}\left(a\langle \vecuse{h}, \tilde{\vecuse{x}}\rangle\right)$.\end{proof}

\begin{proof}[Proof of Proposition \ref{prop:schoen_asymmetric}.]
We are interested in \begin{align*}
C_{jk}^{\Im}(\vecuse{h}) &=\int_{\mathbb{R}^d} \textrm{exp}\left(\I\langle \vecuse{h}, \vecuse{x}\rangle\right) \{ -\I \textrm{sign}(\langle \vecuse{x}, \tilde{\vecuse{x}}\rangle)\}f(\vecuse{x}) d\vecuse{x}
\end{align*}where $f(\vecuse{x})$ is the covariance's spectral density. 
From Proposition 1 of \cite{allard_fully_2022}, the spectral density (when it exists) can be represented through the mixture density $g(v)$\begin{align*}
    f(\vecuse{x}) &= \int_0^\infty \frac{1}{(2\sqrt{\pi})^{d}}\textrm{exp}\left(-\frac{\lVert \vecuse{x}\rVert^2}{4v}\right)v^{-\frac{d}{2}} g(v) dv,
\end{align*}so that \begin{align*}
    C_{jk}^{\Im}(\vecuse{h}) &=  \int_{\mathbb{R}^d} \textrm{exp}\left(\I\langle \vecuse{h}, \vecuse{x}\rangle\right) \{-\I\textrm{sign}(\langle \vecuse{x}, \tilde{\vecuse{x}}\rangle)\}\frac{1}{(2\sqrt{\pi})^d}\int_0^\infty \textrm{exp}\left(-\frac{\lVert \vecuse{x}\rVert^2}{4v}\right)v^{-\frac{d}{2}} g(v) dv d\vecuse{x} \\ 
    &= \int_0^\infty \int_{\mathbb{R}^d} \textrm{exp}\left(\I\langle \vecuse{h}, \vecuse{x}\rangle\right)\left\{-\I\textrm{sign}(\langle \vecuse{x}, \tilde{\vecuse{x}}\rangle)\right\}\frac{1}{(2\sqrt{\pi})^d}\textrm{exp}\left(-\frac{\lVert \vecuse{x}\rVert^2}{4v}\right) d\vecuse{x}v^{-\frac{d}{2}} g(v) dv
\end{align*}by Fubini's Theorem. The result then follows directly from Proposition \ref{prop:sq_exp_spat}, as the inner integral satisfies  \begin{align*}
    &\int_{\mathbb{R}^d} \textrm{exp}\left(\I\langle \vecuse{h}, \vecuse{x}\rangle\right) \frac{\left\{-\I\textrm{sign}(\langle \vecuse{x}, \tilde{\vecuse{x}}\rangle)\right\}}{(2\sqrt{\pi})^d}\textrm{exp}\left(-\frac{\lVert \vecuse{x}\rVert^2}{4v}\right) d\vecuse{x} = v^{\frac{d}{2}} \textrm{exp}\left(-v\lVert \vecuse{h}\rVert^2 \right)\textrm{erfi}\left(\sqrt{v}\langle \vecuse{h}, \tilde{\vecuse{x}}\rangle\right).
\end{align*}\end{proof}

    \begin{proof}[Proof of Proposition \ref{prop:asymm_cauchy_spatial}.]
    From \cite{allard2025modeling}, the corresponding Cauchy mixing density is Gamma-distributed: $
        g(v) = a^{-2\alpha}/\{\Gamma(\alpha)\}v^{\alpha - 1} \textrm{exp}\left(-v/a^2\right)$, such that \begin{align*}
        C_{jk}^{\Re}(\vecuse{h})&= \frac{1}{\left(1 + a^2 \lVert \vecuse{h}\rVert^2\right)^\alpha} = \int_0^\infty e^{-v\lVert \vecuse{h}\rVert^2}\frac{a^{-2\alpha}}{\Gamma(\alpha)} v^{\alpha - 1} \textrm{exp}\left(-\frac{v}{a^2}\right) dv.
    \end{align*}
    Therefore, we are interested in  \begin{align*}
            C_{jk}^{\Im}(\vecuse{h})
            &=  \frac{a^{-2\alpha}}{\Gamma(\alpha)}\int_0^\infty \textrm{exp}\left\{-v\left(\lVert \vecuse{h}\rVert^2 + \frac{1}{a^2}\right)\right\}\textrm{erfi}\left(\sqrt{v} \langle \vecuse{h}, \tilde{\vecuse{x}}\rangle\right) v^{\alpha - 1} dv\\
            &=  \frac{a^{-2\alpha}}{\Gamma(\alpha)}\int_0^\infty \textrm{exp}\left\{-v\left(\lVert \vecuse{h}\rVert^2 + \frac{1}{a^2}\right)\right\} \frac{2\sqrt{v} \langle \vecuse{h}, \tilde{\vecuse{x}}\rangle}{\sqrt{\pi}}{}_1F_1\left(\frac{1}{2}; \frac{3}{2};  v\langle \vecuse{h}, \tilde{\vecuse{x}}\rangle^2\right)v^{\alpha - 1} dv,
        \end{align*}where the last line follows from 13.6.19 of \cite{abramowitz1948handbook}. 
        We consider the change of variables $w = v\langle \vecuse{h}, \tilde{\vecuse{x}}\rangle^2$, so that \begin{align*}
            C_{jk}^{\Im}(\vecuse{h})
            &=  \frac{a^{-2\alpha}}{\Gamma(\alpha)}\frac{2}{\sqrt{\pi}}\langle \vecuse{h}, \tilde{\vecuse{x}}\rangle^{-2\alpha } \int_0^\infty \textrm{exp}\left\{-w\langle \vecuse{h}, \tilde{\vecuse{x}}\rangle^{-2}\left(\lVert \vecuse{h}\rVert^2 + \frac{1}{a^2}\right)\right\}{}_1F_1\left(\frac{1}{2}; \frac{3}{2};  w\right)w^{\alpha - \frac{1}{2}} dw.
        \end{align*}
        
        The result follows from 7.621.4 of \cite{gradshteyn2014table}, which specifies \begin{align*}
            \int_0^\infty w^{\alpha^*-1} e^{-c^*t} {}_1F_1(a^*; b^*; w) dw &=(c^*)^{-\alpha^*}\Gamma(\alpha^*) {}_2F_1(a^*, \alpha^*; b^*; 1/c^*). 
        \end{align*}
        We set $\alpha^* = \alpha +1/2$, $c^* = \langle \vecuse{h}, \tilde{\vecuse{x}}\rangle^{-2}\left(\lVert \vecuse{h}\rVert^2 + a^{-2}\right)$, $a^* = 1/2$, and $b^* = 3/2$ to get
        \begin{align*}
            C_{jk}^{\Im}(\vecuse{h}) &= \frac{a^{-2\alpha}}{\Gamma(\alpha)}\frac{2}{\sqrt{\pi}}\langle \vecuse{h}, \tilde{\vecuse{x}}\rangle^{-2\alpha} 
           \left(\frac{\lVert \vecuse{h}\rVert^2 + a^{-2}}{\langle \vecuse{h}, \tilde{\vecuse{x}}\rangle^2}\right)^{-\alpha - \frac{1}{2}}\Gamma\left(\alpha + \frac{1}{2}\right) \\
           &~~~~~~~~~~~~~~~~\times {}_2F_1\left(\frac{1}{2}, \frac{1}{2} + \alpha; \frac{3}{2}; \frac{\langle \vecuse{h}, \tilde{\vecuse{x}}\rangle^2}{\lVert \vecuse{h}\rVert^2 + a^{-2}}\right) \\
&= \frac{1}{\left(a^2\lVert \vecuse{h}\rVert^2 + 1\right)^\alpha}\frac{2}{\sqrt{\pi}} \frac{\Gamma(\alpha + \frac{1}{2})}{\Gamma(\alpha)}\frac{a\langle \vecuse{h}, \tilde{\vecuse{x}}\rangle}{\left(a^2\lVert \vecuse{h}\rVert^2 + 1\right)^{ \frac{1}{2}}} {}_2F_1\left(\frac{1}{2}, \frac{1}{2} + \alpha; \frac{3}{2}; \frac{a^2\langle \vecuse{h}, \tilde{\vecuse{x}}\rangle^2}{a^2\lVert \vecuse{h}\rVert^2 + 1}\right).
        \end{align*}\end{proof}

For Propositions \ref{prop:half_spectral}, \ref{prop:sq_exp_asymmetric_gneiting}, and \ref{prop:gneiting_cauchy}, we first introduce and prove Lemma \ref{lemma:gneiting_sq_exp_spectral_density}. 
Note that Proposition \ref{prop:sq_exp_asymmetric_gneiting} requires Proposition \ref{prop:half_spectral}, and Proposition \ref{prop:gneiting_cauchy} requires Proposition \ref{prop:sq_exp_asymmetric_gneiting}, so we follow this order. 
\begin{lemma}[Spectral density of squared-exponential]\normalfont\label{lemma:gneiting_sq_exp_spectral_density}
Consider a version of the Gneiting class with a squared-exponential spatial covariance:
\begin{align}
    C(\vecuse{h},u) &= \frac{\sigma}{(\gamma(u) + 1)^{\frac{d}{2}}} \textrm{exp}\left(- \frac{a_{\vecuse{s}}^2 \lVert \vecuse{h}\rVert^2}{\gamma(u) +1}\right).\label{eq:sq_exp_gne}
\end{align}
The spectral density of this $C(\vecuse{h},u)$ in \eqref{eq:sq_exp_gne} when $\gamma(u) = a_t^2u^2$ is \begin{align*}
     f^*(\lVert\vecuse{x}\rVert,|\eta|) &=  \frac{a_{\vecuse{s}}}{(2\sqrt{\pi} a_{\vecuse{s}})^{d} \sqrt{\pi}a_t\lVert \vecuse{x}\rVert}\textrm{exp}\left(-\frac{\lVert \vecuse{x}\rVert^2}{4a_{\vecuse{s}}^2} - \frac{a_{\vecuse{s}}^2\eta^2}{a_t^2\lVert \vecuse{x}\rVert^2}\right).
\end{align*}
Furthermore, when $\gamma(u) = a_t |u|$, then the spectral density is \begin{align*}
     f^*(\lVert\vecuse{x}\rVert,|\eta|) &= \frac{1}{(2\sqrt{\pi}a_{\vecuse{s}})^d a_t \pi } \textrm{exp}\left(-\frac{\lVert \vecuse{x}\rVert^2}{4a_{\vecuse{s}}^2}\right)\left(\frac{\lVert \vecuse{x}\rVert^2}{4a_{\vecuse{s}}^2} + \frac{4a_{\vecuse{s}}^2\eta^2}{a_t^2\lVert \vecuse{x}\rVert^2}\right)^{-1}.
\end{align*}
\end{lemma}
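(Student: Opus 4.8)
The plan is to obtain $f^{*}$ by iterated Fourier inversion, transforming the spatial variable first (treating $u$ as a parameter) and the temporal variable second; the intermediate object is exactly the half-spectral density that reappears in Proposition~\ref{prop:half_spectral}. I work throughout with the convention of Theorem~\ref{thm:st_bochner}, under which $f^{*}$ carries no leading $(2\pi)^{-d}$. Since both $C(\vecuse{h},u)$ in \eqref{eq:sq_exp_gne} and $f^{*}$ scale linearly in $\sigma$, it suffices to take $\sigma=1$.

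First I would fix $u$ and view $C(\cdot,u)$ as a purely spatial Gaussian covariance with effective inverse range $a_{\vecuse{s}}^{2}/(\gamma(u)+1)$. Applying the squared-exponential spectral pair established in the proof of Proposition~\ref{prop:sq_exp_spat} — that $\textrm{exp}(-a^{2}\lVert\vecuse{h}\rVert^{2})$ has spatial density $(2a\sqrt{\pi})^{-d}\textrm{exp}(-\lVert\vecuse{x}\rVert^{2}/(4a^{2}))$ — with $a^{2}=a_{\vecuse{s}}^{2}/(\gamma(u)+1)$ produces a factor $(\gamma(u)+1)^{d/2}$ that cancels the prefactor $(\gamma(u)+1)^{-d/2}$ of $C$. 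This cancellation is the crucial simplification and gives the half-spectral density
$$\tilde f(\vecuse{x},u)=\frac{1}{(2a_{\vecuse{s}}\sqrt{\pi})^{d}}\,\textrm{exp}\left\{-\frac{(\gamma(u)+1)\lVert\vecuse{x}\rVert^{2}}{4a_{\vecuse{s}}^{2}}\right\},$$
a Gaussian in $\vecuse{x}$ modulated in $u$ only through $\gamma(u)$.

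Next I would recover $f^{*}$ by the one-dimensional inverse transform in time, $f^{*}(\vecuse{x},\eta)=(2\pi)^{-1}\int_{\mathbb{R}}\textrm{exp}(-\I u\eta)\tilde f(\vecuse{x},u)\,du$. Writing $\beta=\lVert\vecuse{x}\rVert^{2}/(4a_{\vecuse{s}}^{2})$, this reduces to evaluating $\int_{\mathbb{R}}\textrm{exp}(-\I u\eta)\textrm{exp}(-\beta\gamma(u))\,du$ for each choice of $\gamma$. For $\gamma(u)=a_t^{2}u^{2}$ this is a Gaussian integral yielding $\sqrt{\pi/(\beta a_t^{2})}\,\textrm{exp}\{-\eta^{2}/(4\beta a_t^{2})\}$, and for $\gamma(u)=a_t|u|$ it is the Fourier transform of a two-sided exponential, yielding $2\beta a_t/(\beta^{2}a_t^{2}+\eta^{2})$. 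Substituting $\beta=\lVert\vecuse{x}\rVert^{2}/(4a_{\vecuse{s}}^{2})$ and collecting constants reproduces the two stated forms; in the exponential case one rewrites $\beta a_t/(\beta^{2}a_t^{2}+\eta^{2})=a_t^{-1}\{\beta+\eta^{2}/(\beta a_t^{2})\}^{-1}$ so that the denominator becomes $\lVert\vecuse{x}\rVert^{2}/(4a_{\vecuse{s}}^{2})+4a_{\vecuse{s}}^{2}\eta^{2}/(a_t^{2}\lVert\vecuse{x}\rVert^{2})$, exactly as displayed.

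I expect the main obstacle to be bookkeeping rather than anything conceptual: tracking the $(2\pi)$ and $(2a_{\vecuse{s}}\sqrt{\pi})^{d}$ constants consistently with the chosen convention, and rearranging the Lorentzian into the requested reciprocal form. To justify the two-step transform I would note that $C(\vecuse{h},u)$ and each intermediate integrand are nonnegative and absolutely integrable — Gaussian decay in $\vecuse{x}$, and integrable tails of $\tilde f(\vecuse{x},\cdot)$ in $u$ — so Fourier inversion and any interchange of integration are valid by Fubini's theorem; positivity and integrability of the resulting $f^{*}$ can be read directly off the closed forms, confirming consistency with Theorem~\ref{thm:st_bochner}.
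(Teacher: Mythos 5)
Your proposal is correct and follows essentially the same route as the paper's proof: invert the spatial variable first so that the $(\gamma(u)+1)^{d/2}$ Jacobian of the squared-exponential spectral pair cancels the $(\gamma(u)+1)^{-d/2}$ prefactor, then invert in time via the Gaussian integral (for $\gamma(u)=a_t^2u^2$) and the two-sided exponential/Lorentzian pair (for $\gamma(u)=a_t|u|$). The constants and the rearrangement into the stated reciprocal form all check out, so this is a faithful reconstruction of the paper's argument.
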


        \begin{proof}[Proof of Lemma \ref{lemma:gneiting_sq_exp_spectral_density}.]
    First consider the case $\gamma(u) = a_t^2u^2$. We are interested in the inverse formula \begin{align*}
     f^*(\lVert\vecuse{x}\rVert,|\eta|)&=(2\pi)^{-d-1} \int_{\mathbb{R}}\int_{\mathbb{R}^d} \textrm{exp}\left(-\I\langle \vecuse{h}, \vecuse{x}\rangle - \I \eta u\right)C(\vecuse{h},u) d\vecuse{h}du  \\
    &= (2\pi)^{-d-1} \int_{\mathbb{R}}\int_{\mathbb{R}^d}  \frac{\textrm{exp}\left(-\I\langle \vecuse{h}, \vecuse{x}\rangle  - \I \eta u\right)}{\left(a_t^2 u^2 + 1\right)^{\frac{d}{2}}}\textrm{exp}\left(-\frac {a_{\vecuse{s}}^2\lVert \vecuse{h}\rVert^2}{a_t^2u^2 + 1}\right) d\vecuse{h}du \\
    &= (2\pi)^{-d-1} \int_{\mathbb{R}} \frac{\textrm{exp}\left(- \I \eta u\right)}{\left(a_t^2u^2 + 1\right)^{\frac{d}{2}}}\int_{\mathbb{R}^d} \textrm{exp}\left(-\I\langle \vecuse{h}, \vecuse{x}\rangle \right)\textrm{exp}\left(-\frac {a_{\vecuse{s}}^2\lVert \vecuse{h}\rVert^2}{a_t^2u^2 + 1}\right) d\vecuse{h}du.
    \end{align*}
    From the inverse Fourier transform of the squared-exponential function, we have
\begin{align*}
     f^*(\lVert\vecuse{x}\rVert,|\eta|)&= (2\pi)^{-d-1} \int_{\mathbb{R}} \frac{\textrm{exp}\left(- \I \eta u\right)}{(a_t^2u^2 + 1)^{\frac{d}{2}}}\textrm{exp}\left\{-\frac{\lVert \vecuse{x}\rVert^2}{4a_{\vecuse{s}}^2}\left(1+a_t^2u^2\right)\right\}\pi^{\frac{d}{2}}a_{\vecuse{s}}^{-d}(a_t^2u^2 + 1)^{\frac{d}{2}}du\\
    &= \frac{1}{2\pi (2a_{\vecuse{s}}\sqrt{\pi})^d}  \int_{\mathbb{R}} \textrm{exp}\left(- \I \eta u\right)\textrm{exp}\left\{-\frac{\lVert \vecuse{x}\rVert^2}{4a_{\vecuse{s}}^2}\left(1+a_t^2u^2\right)\right\}du\\
    &=  \frac{a_{\vecuse{s}}}{\sqrt{\pi} (2a_{\vecuse{s}}\sqrt{\pi})^d a_t\lVert \vecuse{x}\rVert} 
    \textrm{exp}\left(-\frac{\lVert \vecuse{x}\rVert^2}{4a_{\vecuse{s}}^2} - \frac{a_{\vecuse{s}}^2\eta^2}{a_t^2\lVert \vecuse{x}\rVert^2}\right).
    \end{align*}

    For the case $\gamma(u) = a_t |u|$, the integration with respect to $\vecuse{h}$ is similar, leaving \begin{align*}
     f^*(\lVert\vecuse{x}\rVert,|\eta|) &= \frac{1}{2\pi (2a_{\vecuse{s}}\sqrt{\pi})^d}  \int_{\mathbb{R}} \textrm{exp}\left(- \I \eta u\right)\textrm{exp}\left(-\frac{\lVert \vecuse{x}\rVert^2}{4a_{\vecuse{s}}^2}\left(1+a_t|u|\right)\right)du\\
        &= \frac{1}{2\pi (2a_{\vecuse{s}}\sqrt{\pi})^d} \textrm{exp}\left(-\frac{\lVert \vecuse{x}\rVert^2}{4a_{\vecuse{s}}^2}\right) \int_{\mathbb{R}} \textrm{exp}\left(- \I \eta u\right)\textrm{exp}\left(-\frac{\lVert \vecuse{x}\rVert^2}{4a_{\vecuse{s}}^2}a_t|u|\right)du.
\end{align*}
The integral corresponds to the spectral density of the exponential covariance function, resulting in: \begin{align*}
     f^*(\lVert\vecuse{x}\rVert,|\eta|) &=\frac{1}{2\pi (2a_{\vecuse{s}}\sqrt{\pi})^d} \textrm{exp}\left(-\frac{\lVert \vecuse{x}\rVert^2}{4a_{\vecuse{s}}^2}\right) \frac{2\frac{\lVert \vecuse{x}\rVert^2a_t}{4a_{\vecuse{s}}^2}}{\frac{\lVert \vecuse{x}\rVert^4a_t^2}{16a_{\vecuse{s}}^4} + \eta^2}.
\end{align*}
The result then follows from simplifying the fraction. \end{proof}

\begin{proof}[Proof of Proposition \ref{prop:half_spectral}.]
The symmetric part follows from Lemma \ref{lemma:gneiting_sq_exp_spectral_density}.
The asymmetric part for $\gamma(u) = a_t^2u^2$ is
\begin{align*}
    \sigma\xi     & \int_{\mathbb{R}^d} \int_{\mathbb{R}}\textrm{exp}\left(\I\langle \vecuse{h}, \vecuse{x}\rangle + \I u\eta \right)\textrm{sign}(\langle \vecuse{x}, \tilde{\vecuse{x}}\rangle)\textrm{sign}(\eta) f^*(\lVert \vecuse{x}\rVert, |\eta|) d\eta d\vecuse{x}, \end{align*}with $f^*(\lVert\vecuse{x}\rVert, |\eta|)$ as in Lemma \ref{lemma:gneiting_sq_exp_spectral_density}.
    This is 
    \begin{align*}
    &\sigma\xi \int_{\mathbb{R}^d} \int_{\mathbb{R}}e^{\I\langle \vecuse{h}, \vecuse{x}\rangle  + \I u\eta }\frac{\textrm{sign}(\langle \vecuse{x}, \tilde{\vecuse{x}}\rangle)\textrm{sign}(\eta)a_{\vecuse{s}}}{\left(2\sqrt{\pi}a_{\vecuse{s}}\right)^d\sqrt{\pi} a_t \lVert \vecuse{x}\rVert}\textrm{exp}\left(-\frac{ \lVert \vecuse{x}\rVert^2}{4a_{\vecuse{s}}^2} - \frac{a_{\vecuse{s}}^2\eta^2}{a_t^2 \lVert \vecuse{x}\rVert^2}\right)  d\eta d\vecuse{x}\\
        &~~= 
    \sigma\xi \int_{\mathbb{R}^d} \frac{e^{\I \langle \vecuse{h}, \vecuse{x}\rangle}\textrm{sign}(\langle \vecuse{x}, \tilde{\vecuse{x}}\rangle)a_{\vecuse{s}}}{\left(2\sqrt{\pi}a_{\vecuse{s}}\right)^d\sqrt{\pi} a_t \lVert \vecuse{x}\rVert}\textrm{exp}\left(-\frac{ \lVert \vecuse{x}\rVert^2}{4a_{\vecuse{s}}^2}\right) 2\int_0^\infty \sin(u\eta )\textrm{exp}\left(- \frac{a_{\vecuse{s}}^2\eta^2}{a_t^2 \lVert \vecuse{x}\rVert^2}\right)  d\eta d\vecuse{x}, 
\end{align*}as the inner integrand is an odd function of $\eta$. 
The inner integral is related to Dawson's function \citep[3.896.3,][]{gradshteyn2014table}: $
   (1/2)\int_0^\infty \textrm{exp}\left(-t^2/4\right) \sin(zt) dt =  D_+(z)$.
Applied here with the change of variables $w = 2a_{\vecuse{s}}\eta/(a_t\lVert \vecuse{x}\rVert)$, we have 
   \begin{align*}
   2\int_0^\infty \sin(\eta u)\textrm{exp}\left(-\frac{a_{\vecuse{s}}^2\eta^2}{a_t^2  \lVert \vecuse{x}\rVert^2}\right)d\eta&=2 \frac{a_t \lVert \vecuse{x}\rVert}{2a_{\vecuse{s}}}\int_0^\infty \sin\left(w\frac{ \lVert  \vecuse{x}\rVert}{2a_{\vecuse{s}}} a_tu\right)\textrm{exp}\left(-\frac{w^2}{4}\right)dw \\
   &=\frac{2a_t \lVert \vecuse{x}\rVert}{a_{\vecuse{s}}}D_+\left(\frac{\lVert \vecuse{x}\rVert}{2a_{\vecuse{s}}}a_t u\right),
\end{align*} so that we have
\begin{align*}
    \sigma \xi \int_{\mathbb{R}^d} \textrm{exp}\left(\I \langle \vecuse{h},\vecuse{x} \rangle\right)\frac{2\textrm{sign}(\langle \vecuse{x}, \tilde{\vecuse{x}}\rangle)}{\left(2\sqrt{\pi}a_{\vecuse{s}}\right)^d\sqrt{\pi} } \textrm{exp}\left(-\frac{ \lVert \vecuse{x}\rVert^2}{4a_{\vecuse{s}}^2}\right) D_+\left(\frac{\lVert \vecuse{x}\rVert}{2a_{\vecuse{s}}}a_tu\right) d\vecuse{x}.
\end{align*}
Now, Dawson's function is related to the imaginary error function \citep[7.5.1,][]{NIST:DLMF}: \begin{align*}
    D_+\left(\frac{\lVert \vecuse{x}\rVert}{2a_{\vecuse{s}}}a_tu\right)&=\frac{\sqrt{\pi}}{2} \textrm{exp}\left(- \frac{ \lVert \vecuse{x}\rVert^2}{4a_{\vecuse{s}}^2}a_t^2u^2\right)\textrm{erfi}\left(\frac{\lVert \vecuse{x}\rVert}{2a_{\vecuse{s}}} a_tu\right).
\end{align*}
Therefore, we can write the remaining integral as \begin{align*}
    & \sigma\xi \int_{\mathbb{R}^d}\textrm{exp}\left(\I \langle \vecuse{h},\vecuse{x} \rangle\right)\frac{\textrm{sign}(\langle \vecuse{x}, \tilde{\vecuse{x}}\rangle)}{(2\sqrt{\pi}a_{\vecuse{s}})^d}\textrm{exp}\left\{-\frac{ \lVert \vecuse{x}\rVert^2}{4a_{\vecuse{s}}^2}\left(a_t^2u^2 + 1\right)\right\}\textrm{erfi}\left(\frac{\lVert  \vecuse{x}\rVert }{2a_{\vecuse{s}}}a_tu\right) d\vecuse{x}
\end{align*} and \begin{align*}
        &\sigma \xi \int_{\mathbb{R}^d}\textrm{exp}\left(\I \langle \vecuse{h},\vecuse{x} \rangle\right)\textrm{sign}(\langle \vecuse{x}, \tilde{\vecuse{x}}\rangle) f\left(\vecuse{x}\sqrt{a_t^2u^2 + 1}\right)   \textrm{erfi}\left(\frac{\lVert  \vecuse{x}\rVert }{2a_{\vecuse{s}}}a_tu\right) d\vecuse{x},
\end{align*}where $f(\vecuse{x})$ is the spectral density of the squared-exponential covariance with parameter $a_{\vecuse{s}}$.

We now consider $\gamma(u) = a_t|u|$. 
The symmetric half-spectral representation is \begin{align*}
    \sigma \int_{\mathbb{R}^d} \textrm{exp}\left(\I \langle \vecuse{h}, \vecuse{x}\rangle \right)\frac{1}{(2\sqrt{\pi}a_{\vecuse{s}})^d}\textrm{exp}\left\{-\frac{\lVert \vecuse{x}\rVert^2}{4a_{\vecuse{s}}^2}(a_t|u| + 1)\right\} dx.
\end{align*}
The asymmetric representation can be found by applying the Hilbert transform of $u$, which, for $\textrm{exp}(-a|u|)$ is demonstrated in \cite{yarger2023multivariate}; as well as multiplying a $\textrm{sign}(\langle \vecuse{x}, \tilde{\vecuse{x}}\rangle)$ term..
This can also be found by relating the result in Lemma \ref{lemma:gneiting_sq_exp_spectral_density} to the spectral density of the exponential covariance function.
\end{proof}

\begin{proof}[Proof of Proposition \ref{prop:sq_exp_asymmetric_gneiting}.]

The symmetric part follows directly from Lemma \ref{lemma:gneiting_sq_exp_spectral_density}.

For the asymmetric part when $d=1$, we use Proposition \ref{prop:half_spectral}. The integrand is again an odd function, and thus reduces to:
\begin{align*}
    \sigma\xi \frac{1}{a_s\sqrt{\pi}}\int_0^\infty \textrm{sin}(hx)\textrm{exp}\left\{-\frac{ x^2}{4a_s^2} \left(a_t^2u^2 + 1\right)\right\}   \textrm{erfi}\left(\frac{ x }{2a_s}a_tu\right) dx.
\end{align*}The integral 2.13.1.5 of \cite{korotkov2020integrals} states that \begin{align*}
    \int_0^\infty \sin(bz)e^{-\gamma z^2}\textrm{erfi}(cz)  dz = \sqrt{\frac{\pi}{4\gamma}}e^{-\frac{b^2}{4\gamma}}\textrm{erf}\left(\frac{cb}{2\sqrt{\gamma^2 - c^2\gamma}}\right)
\end{align*}for $\gamma > c^2$. Setting $c = a_tu/(2a_s)$, $\gamma = \left(a_t^2u^2+1\right)/\left(4a_s^2\right)$ and $b=h$, the condition $\gamma > c^2$ is satisfied, and we have that the integral is \begin{align*}
    &\sigma\xi\frac{1}{a_s\sqrt{\pi}} \sqrt{\frac{\pi}{4} \frac{4a_s^2}{a_t^2u^2 + 1}} \textrm{exp}\left(- \frac{a_s^2h^2}{a_t^2u^2 + 1}\right) \textrm{erf}\left(\frac{a_tu h}{4a_s\sqrt{\frac{( a_t^2u^2 + 1)^2}{(4a_s^2)^2} - \frac{a_t u^2  (a_t^2u^2 + 1)}{ (4a_s^2)^2}}}\right)\\ 
    &~~~~~~~=  \frac{\sigma\xi}{\sqrt{a_t^2u^2 + 1}} \textrm{exp}\left(- \frac{a_s^2h^2}{a_t^2u^2 + 1}\right) \textrm{erf}\left(\frac{a_s ha_tu }{\sqrt{a_t^2u^2 + 1}}\right),
    \end{align*}
    which is the result. \end{proof}

        \begin{proof}[Proof of Proposition \ref{prop:gneiting_cauchy}.]
            We mix the asymmetric part of the squared-exponential covariance: 
\begin{align*}
        \int_0^\infty \frac{\sigma\xi}{\sqrt{a_t^2u^2 + 1}} \textrm{exp}\left(-\frac{v h^2}{a_t^2u^2+1}\right) \textrm{erf}\left(\frac{h\sqrt{v}a_tu}{\sqrt{a_t^2u^2+1}}\right) g(v) dv,
\end{align*}
where $g(v)$ is the probability density function of a Gamma random variable: \begin{align*}
    g(v) &= \frac{a_s^{-2\alpha}}{\Gamma(\alpha)} v^{\alpha - 1} \textrm{exp}\left(-\frac{v}{a_s^2}\right).
\end{align*}This results in the integral \begin{align*}
       \frac{\sigma\xi}{\sqrt{a_t^2u^2 + 1}} \frac{a_s^{-2\alpha}}{\Gamma(\alpha)}    \int_0^\infty  \textrm{exp}\left\{-v\left(\frac{ h^2}{a_t^2u^2+1} + \frac{1}{a_s^2}\right)\right\} \textrm{erf}\left(\sqrt{v}\frac{ha_tu}{\sqrt{a_t^2u^2+1}}\right) v^{\alpha - 1} dv.
\end{align*}
This becomes \begin{align*}
            C^*(h, u)
            &=  \frac{\sigma \xi a_s^{-2\alpha}}{\Gamma(\alpha)}\frac{1}{\sqrt{a_t^2u^2 + 1}}\frac{2}{\sqrt{\pi}}\frac{ha_tu}{\sqrt{a_t^2u^2+1}}\\
            &~~~
            \times\int_0^\infty \textrm{exp}\left\{-v\left(\frac{ h^2}{a_t^2u^2+1} + \frac{1}{a_s^2}\right)\right\}{}_1F_1\left(\frac{1}{2}; \frac{3}{2}; - v\frac{h^2a_t^2u^2}{a_t^2u^2+1}\right)v^{\alpha - \frac{1}{2}} dv,
        \end{align*}which follows from 13.6.19 of \cite{abramowitz1948handbook}. 
        We consider the change of variables $w = v(h^2a_t^2u^2)/(a_t^2u^2+1)$, so that \begin{align*}
            C^*(h,u)
            &=                \frac{\sigma\xi a_s^{-2\alpha}}{\Gamma(\alpha)}\frac{\textrm{sign}(h)\textrm{sign}(u)}{\sqrt{a_t^2u^2 + 1}}\frac{2}{\sqrt{\pi}}\left(\frac{a_t^2u^2 + 1}{h^2a_t^2u^2}\right)^{\alpha }  \\
            &~~~\times \int_0^\infty \textrm{exp}\left[-w\left\{\frac{1}{a_t^2u^2}  + \frac{a_t^2u^2+1}{(a_sha_tu)^2}\right\}\right]{}_1F_1\left(\frac{1}{2}; \frac{3}{2}; - w\right)w^{\alpha - \frac{1}{2}} dw.
        \end{align*}
        
        The result follows from 7.621.4 of \cite{gradshteyn2014table}, which specifies \begin{align*}
            \int_0^\infty w^{\alpha^*-1} e^{-c^*w} {}_1F_1(a^*; b^*; w) dw &=(c^*)^{-\alpha^*}\Gamma(\alpha^*) {}_2F_1(a^*, \alpha^*; b^*; 1/c^*). 
        \end{align*}
        We set $\alpha^* = \alpha$, $c^* = \left(a_t^2u^2\right)^{-1} + \left(a_sha_tu\right)^{-2} \left(a_t^2u^2 + 1\right)$, $a^* = 1/2$, and $b^* = 3/2$: \begin{align*}
            C^*(h,u)&=                \frac{\sigma\xi }{\Gamma(\alpha)}\frac{\textrm{sign}(h)\textrm{sign}(u)}{\sqrt{a_t^2u^2 + 1}}\frac{2}{\sqrt{\pi}}\left(\frac{a_t^2u^2 + 1}{a_s^2h^2a_t^2u^2}\right)^{\alpha} \left(\frac{1}{a_t^2u^2}  + \frac{a_t^2u^2+1}{(a_sha_tu)^2}\right)^{-\alpha}\\
&~~~~~~~\times\Gamma(\alpha){}_2F_1\left\{\frac{1}{2}, \alpha; \frac{3}{2}; -\left(\frac{1}{a_t^2u^2}  + \frac{a_t^2u^2+1}{(a_sha_tu)^2}\right)^{-1}\right\}.
        \end{align*}
        We have \begin{align*}
            \left(\frac{1}{a_t^2u^2}  + \frac{a_t^2u^2+1}{(a_sha_tu)^2}\right)^{-\alpha} 
            = \left(\frac{a_s^2h^2a_t^2u^2 }{a_s^2h^2+  a_t^2u^2 + 1}\right)^{\alpha},
        \end{align*}so \begin{align*}
            C^*(h,u) 
            &= \sigma \xi \frac{\textrm{sign}(h)\textrm{sign}(u)}{\sqrt{a_t^2u^2 + 1}}\frac{2}{\sqrt{\pi}}\left(\frac{a_t^2u^2+1}{a_s^2h^2+  a_t^2u^2 + 1}\right)^{\alpha} {}_2F_1\left(\frac{1}{2}, \alpha; \frac{3}{2}; -\frac{a_t^2u^2 a_s^2h^2}{a_s^2h^2+  a_t^2u^2 + 1}\right).
        \end{align*}

        Note that as\begin{align*}
        \frac{a_tu^2 + 1}{a_s^2h^2 + a_t^2u^2 + 1} = \left(\frac{a_s^2h^2}{a_t^2u^2+1 } + 1\right)^{-1},
    \end{align*} we can write this \begin{align*}
        C^*(h,u) &=     \sigma \xi\frac{\textrm{sign}(h)\textrm{sign}(u)}{\sqrt{a_t^2u^2 + 1}}\frac{2}{\sqrt{\pi}}\frac{1}{\left(\frac{a_s^2h^2}{a_t^2u^2+1 } + 1\right)^\alpha}{}_2F_1\left(\frac{1}{2}, \alpha; \frac{3}{2}; -\frac{a_t^2u^2 a_s^2h^2}{a_s^2h^2+  a_t^2u^2 + 1}\right).
    \end{align*}\end{proof}

\end{document}